\theoremstyle{plain}
\newtheorem{lemma}{Lemma}
\newtheorem{proposition}{Proposition}
\newtheorem{corollary}{Corollary}
\newtheorem{theorem}{Theorem}
\theoremstyle{definition}
\newtheorem{example}{Example}
\theoremstyle{remark}
\newtheorem{remark}{Remark}
\newcommand{\F}{{\mathbb F}}
 \newcommand{\sS}{{\mathcal S}}
 \newcommand{\sC}{{\mathcal C}}
 \newcommand{\sN}{{\mathcal N}}
 \newcommand{\sW}{{\mathcal W}}
\newcommand{\be}{\begin{eqnarray}}
\newcommand{\ee}{\end{eqnarray}}
\newcommand{\nn}{{\nonumber}}
\newcommand{\Tr}{{\rm Tr}}
\newcommand{\wt}{W_H}
\newcommand{\Wa}[1]{\sW_f}
\newcommand{\supp}{{\rm supp}}
\begin{document}

\title{Minimal Linear Codes From Weakly Regular Plateaued Balanced Functions
}


\author{Ahmet S{\i}nak       
}


\author{ Ahmet S{\i}nak{\thanks{ 
              Department of mathematics and computer science, Necmettin Erbakan University, 42090, Konya, Turkey and LAGA, Universities of Paris VIII and Paris XIII, CNRS, UMR 7539, Paris, France. 
Email: sinakahmet@gmail.com }}}


\maketitle

\begin{abstract}
Linear codes have diverse applications in secret sharing schemes, secure two-party computation,  association schemes,  strongly regular graphs, authentication codes and communication. There are a large number of linear codes with  few weights  in the literature, but a little  of them are minimal. In this paper, we are using for the first time weakly regular plateaued  balanced functions over the finite fields of odd characteristic in the second generic construction method of linear codes. The main results of this paper are stated below. 
We first construct several three-weight and four-weight  linear codes with flexible parameters  from weakly regular plateaued  balanced functions.
It is worth noting that  the (almost) optimal codes may be obtained from these functions.  
We next  observe that all  codes obtained in this paper  are minimal, thereby they can be directly employed to construct   secret sharing schemes with high democracy. Finally, the democratic  secret sharing schemes  are obtained from the dual codes of our minimal codes.
\end{abstract}
{\bf Keywords}  Linear  code, \and  minimal code, \and  weakly regular plateaued function, \and balanced function, \and secret sharing scheme

\section{Introduction}
Let  $\F_{p^n}$ represent  the finite field with $p^n$ elements, where  $p$ is  a prime number and  $n$ is a positive integer. 
The finite field   $\mathbb{F}_{p^n}$  can be seen as  an $n$-dimensional vector space over $\F_{p}$, and  denoted  by  $\F_p^n$. 
  An $\left[n,k,d\right]_{p}$  linear code $\sC$  over $\mathbb {F}_{p}$ is a $k$-dimensional   linear subspace of   $\F_p^n$ with  length $n$, dimension $k$ and minimum Hamming distance $d$. The Hamming weight of a codeword $\bold v=(v_0,\ldots, v_{n-1})\in \sC$, denoted by $\wt(\bold v)$,  is defined as  the size of  the set  $\supp(\bold  v)=\{0 \leq i\leq n-1: v_i\not=0 \}.$ 

 Let $A_\omega$ denote the number of codewords  in $\sC$  with Hamming weight $\omega$. Then the sequence $(1,A_1, \ldots, A_n)$ represents the \textit{ weight distribution}  and the polynomial $1+A_1y + \cdots + A_ny^n$ shows  the \textit{ weight enumerator} of the $n$-length code $\sC$. 
The code $\sC$ is said to be a  \textit{$t$-weight code} if  $\#\{1\leq \omega \leq n \colon\, A_\omega\neq 0\}=t$.
A \textit{generator matrix} $G$ of  $\sC$ is a $k\times n$ matrix whose rows form a basis for the code $\sC$. 
The \textit{dual code} of  $\sC$ is defined as
$\sC^\perp=\{\mathbf u \in\mathbb {F}_{p}^n  \colon\, \mathbf u \cdot \mathbf v=\mathbf 0 \mbox{ for all } \mathbf v\in\sC\}$ with length $n$ and dimension $n-k$, 
where $``\cdot"$ is the standard inner product on $\mathbb {F}_{p}^n$.

\par  
For the codewords $\bold u,\bold v\in \sC$,  if   $ \supp(\bold u) $ includes   $\supp(\bold v)$, then it is said that $\bold u$  covers  $\bold v$.
A nonzero codeword $\bold u$  of   $\sC$  is called \textit{minimal} codeword if $\bold u$  covers only the codeword $i\bold u$ for all $i\in\F_p$. 
Indeed, a linear  $\sC$ is called \textit{minimal linear code} if every nonzero codeword of $\sC$ is minimal codeword.  
Minimal linear codes have an interesting application in secret sharing scheme (SSS).
 In SSS, a set of participants who can reconstruct the secret value $s$ from their shares is said to be  \textit{an access set}. 
Besides,  an access set  is said to be  \textit{minimal access set} if  none of its proper subset  can reconstruct  $s$ from their shares.
The \textit{access structure} of a SSS is described as the  set of all access sets.
It is worth pointing out that we have a one-to-one match-up between the set of minimal codewords of the dual code $\sC^\perp$ and  the set of minimal access sets of SSS based on   $\sC $.  

Linear codes have diverse applications in    secret sharing schemes, secure two-party computation, association schemes,  strongly regular graphs, authentication codes, communication, data storage devices and consumer electronics. 
One of the well-known construction methods of linear codes is based on functions over finite fields.
This construction method is an interesting problem in coding theory. In the literature,  a larger number of linear codes with desirable parameters have been constructed from some special cryptographic functions such as  quadratic functions  \cite{tang2017linear},  (weakly regular) bent functions \cite{carlet1998codes,ding2015linear,ding2014binary,ding2015class,tang2016linear,DCCwu,zhou2016linear},  weakly regular plateaued functions \cite{mesnager2017linear,IEEE}, almost bent functions \cite{ding2016construction},  almost perfect nonlinear (APN) functions \cite{li2014weight,zeng2012triple}  and perfect nonlinear functions  \cite{carlet2005linear}. Very recently, weakly regular plateaued unbalanced functions have been  used in  \cite{IEEE}  to obtain  minimal linear codes with flexible  parameters.  Within this framework, we  benefit from weakly regular plateaued balanced functions   in order  to construct further minimal linear codes with different parameters over the finite fields of odd characteristic.  
 
The organization of the paper is  given as follows. Section \ref{WRP}  gives some  results on weakly regular plateaued balanced functions.  
In Section \ref{Constructions}, we obtain three-weight  and four-weight linear codes  from these functions   over  the finite field of odd characteristic. It is remarkable  that the  punctured three-weight optimal codes  are obtained in Examples \ref{Example0Punc} and \ref{Example00Punc}.
In Section \ref{SectionSSS}, we first observe the constructed codes are minimal,
 and  then define   the access structures of the SSS based on their dual codes.

\section{Weakly regular plateaued functions}\label{WRP}
This section introduces some useful results on exponential sums of  weakly regular plateaued balanced functions.

\subsection{Some results on weakly regular plateaued functions}
We first give a necessary background and some  results on weakly regular plateaued functions. For a set $S$,  $\# S$ represents the size of $S$ and $S^\star$ denotes $S\setminus \{0\}$.
The symbol $\eta_0$ represents  the \textit{quadratic character} of $\F_p^{\star}$,  and    $\eta_0(-1)p$ is denoted by $p^*$.
 The set of all  \textit{squares}  in  $\F_p^{\star}$ is denoted by  $SQ$ and   the set of all \textit{non-squares}  is denoted by $NSQ$.
Throughout this paper,  $f$ is a function from  $\mathbb {F}_{p^m}$  to $\mathbb {F}_{p}$ for an odd prime  $p$ and  a positive integer $m$.
The  trace of $\alpha\in\mathbb {F}_{p^m}$ over $\mathbb {F}_{p}$ is defined by
$\Tr^m(\alpha)=\alpha+\alpha^{p}+\alpha^{p^{2}}+\cdots+\alpha^{p^{m-1}}$.
The  \textit{Walsh transform} of $f$ is defined by
\be\nn
\Wa {f}(\omega )=\sum_{x\in  \mathbb {F}_{p^m}} {\zeta_p}^{{f(x)}-\Tr^m (\omega  x)},
\ee 
where $\zeta_p$ is a  primitive $p$-th root of unity.
A function  $f$ is said to be \textit{balanced} over $\F_p$ if   $\Wa f(0)=0$; otherwise, $f$ is  \textit{unbalanced}. 
\par The notion of  plateaued Boolean functions was first introduced   by Zheng and Zhang  \cite{zheng1999plateaued}.  In characteristic $p$,
a function $f$ is  called  $p$-ary \emph{$s$-plateaued}   if $|\Wa {f}(\omega)|^2\in\{0,p^{m+s}\}$ for every $\omega\in \F_{p^m}$, with $0\leq s\leq m$. In particular,  a $0$-plateaued function is  the \emph{bent function}.  
The \textit{Walsh support} of a plateaued   $f$ is defined as the set   $\sS_f =\{\omega\in  \F_{p^m} \colon\,  |\Wa {f}(\omega)|^2= p^{m+s}\}$. 
 The absolute Walsh distribution of   a plateaued function can be derived from the \textit{Parseval identity}.  
 \begin{lemma}\label{SupportLemma}
If $f$ is an $s$-plateaued function over $\F_{p^m}$, then for $\omega \in\F_{p^m}$, $| \Wa {f}(\omega)|^2$ takes $p^{m-s}$ times the value $p^{m+s}$ and $p^m-p^{m-s}$ times the value $0$.
\end{lemma}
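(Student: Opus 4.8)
The plan is to combine the defining property of an $s$-plateaued function with Parseval's identity for the Walsh transform. By definition, as $\omega$ ranges over $\F_{p^m}$ the quantity $|\Wa{f}(\omega)|^2$ takes only the two values $0$ and $p^{m+s}$, so the whole statement reduces to counting how many $\omega$ realize the nonzero value. Write $N = \#\sS_f = \#\{\omega \in \F_{p^m} \colon |\Wa{f}(\omega)|^2 = p^{m+s}\}$; the goal is to show $N = p^{m-s}$.

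First I would invoke (or, for completeness, quickly re-derive) Parseval's identity $\sum_{\omega \in \F_{p^m}} |\Wa{f}(\omega)|^2 = p^{2m}$. This follows by expanding the left-hand side as $\sum_{\omega}\sum_{x,y \in \F_{p^m}} \zeta_p^{f(x)-f(y)-\Tr^m(\omega(x-y))}$, interchanging the order of summation, and using the orthogonality relation: $\sum_{\omega \in \F_{p^m}} \zeta_p^{\Tr^m(\omega z)}$ equals $p^m$ when $z = 0$ and $0$ otherwise. Hence only the diagonal terms $x = y$ survive, contributing $p^m \cdot p^m$. The orthogonality relation itself rests on the fact that $\Tr^m$ is a nonzero $\F_p$-linear form, so for $z \neq 0$ the map $\omega \mapsto \Tr^m(\omega z)$ hits each element of $\F_p$ equally often and the sum of the $p$-th roots of unity vanishes.

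With Parseval in hand the counting is immediate and is the only remaining step: since $|\Wa{f}(\omega)|^2$ equals $p^{m+s}$ on exactly $N$ elements and $0$ elsewhere, we get $N \cdot p^{m+s} = p^{2m}$, whence $N = p^{2m-(m+s)} = p^{m-s}$. Therefore $|\Wa{f}(\omega)|^2$ takes the value $p^{m+s}$ exactly $p^{m-s}$ times and the value $0$ exactly $p^m - p^{m-s}$ times, which is the assertion. There is essentially no obstacle here; the only point demanding a little care is the justification of the orthogonality relation underlying Parseval, and even that is standard.
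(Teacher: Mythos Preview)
Your proof is correct and is exactly the approach the paper indicates: the paper does not spell out a proof but simply remarks that the absolute Walsh distribution of a plateaued function ``can be derived from the Parseval identity,'' which is precisely the argument you carry out in detail.
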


Recently, motivated by \cite[Theorem 2]{hyun2016explicit}, Mesnager et al. \cite{mesnager2017WCC,mesnager2019linear} have  defined the subclass of plateaued functions.
An $s$-plateaued $f$ is called  \emph{weakly regular}  if  
\be\label{PlateauedWalshh}
\Wa {f}(\omega)\in \left\{ 0, up^{\frac{m+s}2}\zeta_p^{f^{\star}(\omega)}\right \}, \; \forall \omega\in \F_{p^m},
\ee
where $u\in\{\pm 1,\pm i \}$ and  $f^{\star}$ is  a $p$-ary function  over $\F_{p^m}$ with $f^{\star}(\omega)=0$ for all $\omega\in  \F_{p^m} \setminus  \sS_f$. We remark that $f$ is said to be a \emph{non-weakly regular plateaued function} when  $u$ in (\ref{PlateauedWalshh}) depends on $\omega$. Notice that  $f^{\star}(0)=0$ if  $f$ is a  plateaued balanced function.

\begin{lemma}\label{WalshFact} \cite{mesnager2019linear}
If  $f$ is a weakly regular  $s$-plateaued function over $\F_{p^m}$, then
for  every $\omega\in \sS_f$,
$ \Wa {f}(\omega)=\epsilon \sqrt{p^*}^{m+s} \zeta_p^{f^{\star}(\omega)},$
 where $\epsilon\in\{\pm 1\}$ is    the sign of  $ \Wa f$ and  $f^{\star}$ is a $p$-ary function over  $\sS_f$. 
\end{lemma}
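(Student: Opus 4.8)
The plan is to take the defining relation of a weakly regular $s$-plateaued function as the starting point and to upgrade the fourth root of unity occurring there into a genuine sign. By (\ref{PlateauedWalshh}) there is one element $u\in\{\pm1,\pm i\}$, not depending on $\omega$, with $\Wa f(\omega)=u\,p^{(m+s)/2}\zeta_p^{f^{\star}(\omega)}$ for every $\omega\in\sS_f$ (and with the same $f^{\star}$ as in (\ref{PlateauedWalshh})). Write $\sqrt{p^{*}}$ for the quadratic Gauss sum $\sum_{c\in\F_p^{\star}}\eta_0(c)\zeta_p^{c}$, which lies in $\Z[\zeta_p]$ and satisfies $(\sqrt{p^{*}})^{2}=p^{*}=\eta_0(-1)p$; with this convention $\sqrt{p^{*}}^{\,m+s}$ has the same meaning as in the bent case. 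Since $p^{m+s}=\eta_0(-1)^{m+s}(\sqrt{p^{*}})^{2(m+s)}$ and $u^{2}\in\{\pm1\}$, the lemma reduces to the single identity $u^{2}=\eta_0(-1)^{m+s}$: once this holds, $u\,p^{(m+s)/2}=\epsilon\,\sqrt{p^{*}}^{\,m+s}$ for some $\epsilon\in\{\pm1\}$ that, like $u$, is independent of $\omega$, and this $\epsilon$ is what the statement calls the sign of $\Wa f$.

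To prove $u^{2}=\eta_0(-1)^{m+s}$ I would work inside the cyclotomic field $\Q(\zeta_p)$. The key remark is that $\Wa f(\omega)=\sum_{x\in\F_{p^m}}\zeta_p^{f(x)-\Tr^{m}(\omega x)}$ lies in $\Z[\zeta_p]$, and so does $\zeta_p^{f^{\star}(\omega)}$; hence $u\,p^{(m+s)/2}=\Wa f(\omega)\,\zeta_p^{-f^{\star}(\omega)}\in\Q(\zeta_p)$. Now I invoke three elementary facts about $\Q(\zeta_p)$ for an odd prime $p$: $i\notin\Q(\zeta_p)$, because $\Q(\zeta_4)\cap\Q(\zeta_p)=\Q$; $\sqrt{p^{*}}\in\Q(\zeta_p)$, by the Gauss sum; and $\sqrt p\in\Q(\zeta_p)$ if and only if $\eta_0(-1)=1$, i.e. $p\equiv1\pmod4$, because $\Q(\zeta_p)$ has the unique quadratic subfield $\Q(\sqrt{p^{*}})$. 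A short case split then pins $u$ down: if $m+s$ is even, $p^{(m+s)/2}\in\Q^{\star}$ forces $u\in\Q(\zeta_p)$, hence $u\in\{\pm1\}$; if $m+s$ is odd, then $u\sqrt p\in\Q(\zeta_p)$, which forces $u\in\{\pm1\}$ when $p\equiv1\pmod4$ and $u\in\{\pm i\}$ when $p\equiv3\pmod4$. In each case one checks directly that $u^{2}=\eta_0(-1)^{m+s}$.

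Plugging this back, $\Wa f(\omega)=u\,p^{(m+s)/2}\zeta_p^{f^{\star}(\omega)}=\epsilon\,\sqrt{p^{*}}^{\,m+s}\zeta_p^{f^{\star}(\omega)}$ with a global sign $\epsilon\in\{\pm1\}$, which is the assertion. I expect the main obstacle to be precisely this step: the weakly regular hypothesis by itself only narrows $\Wa f(\omega)$ down to four candidate phases, and the lemma says that only two of them occur; the cleanest route I see for discarding the other two is the integrality observation $\Wa f(\omega)\zeta_p^{-f^{\star}(\omega)}\in\Q(\zeta_p)$ combined with the exact inventory of which square roots $\Q(\zeta_p)$ does and does not contain. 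The only delicate point is then routine sign bookkeeping in the case $m+s$ odd with $p\equiv3\pmod4$, where $\sqrt{p^{*}}$ is a power of $-1$ times $\,i\sqrt p$ and the factor $i$ is absorbed by $u\in\{\pm i\}$.
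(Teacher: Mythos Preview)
The paper does not give its own proof of this lemma: it is quoted verbatim from \cite{mesnager2019linear} and stated without argument. Your proposal therefore cannot be compared to a proof in the paper, but it is a correct self-contained derivation. The integrality observation $\Wa f(\omega)\zeta_p^{-f^{\star}(\omega)}\in\Z[\zeta_p]\subset\Q(\zeta_p)$ together with the standard facts that $i\notin\Q(\zeta_p)$ and that the unique quadratic subfield of $\Q(\zeta_p)$ is $\Q(\sqrt{p^{*}})$ does force $u^{2}=\eta_0(-1)^{m+s}$ exactly as you argue, and the case split is clean. This is essentially the classical argument (going back to the bent case) that pins down the phase of Walsh values of weakly regular functions; your write-up would serve as a perfectly adequate proof were one needed here.
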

  
Very recently,  Mesnager et al.  \cite{IEEE} have   denoted by  $\textit{WRP}$  the set of  weakly regular plateaued  unbalanced functions with the following conditions.
Within the same framework,  we now assume that  $f:\F_{p^m}\to\F_{p}$ is a weakly regular   $s$-plateaued balanced  function, with $0\leq s\leq m$, and we denote by  $\textit{WRPB}$  the class of such functions satisfying the following   two homogeneous conditions:
 \begin{itemize}
\item $f(0)=0,$ 
\item   $f(ax)=a^tf(x)$  for every $a\in\F_p^{\star}$ and $x\in \F_{p^m}$,   where $t$ is  a positive even integer with $\gcd(t-1,p-1)=1$.
\end{itemize}
\begin{remark}
This paper uses for the first time the  plateaued functions from the class $\textit{WRPB}$ to construct  new minimal linear codes with flexible parameters. 
\end{remark}
We need in the subsequent section the following results that can be derived   from  \cite[Lemma 6 and Proposition 2]{IEEE}.  

 \begin{lemma} \label{Walshsupport}
Let $\omega\in\F_{p^m}$ and  $f\in \textit{WRPB}$ with  $ \Wa {f}(\omega)=\epsilon \sqrt{p^*}^{m+s} \zeta_p^{f^{\star}(\omega)}$. Then for every $z\in\F_p^{\star}$,  we have
$z\omega\in\sS_f$ when  $\omega\in\sS_f$, and  
 $z\omega\in\F_{p^m}\setminus\sS_f$; otherwise.
\end{lemma}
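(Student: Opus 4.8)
The plan is to reduce the whole statement to one equivalence: for $\omega\in\F_{p^m}$ and $z\in\F_p^{\star}$, $\Wa{f}(z\omega)=0$ if and only if $\Wa{f}(\omega)=0$. This suffices, because $f$ being $s$-plateaued means $|\Wa{f}(\eta)|^2\in\{0,p^{m+s}\}$ for every $\eta\in\F_{p^m}$, so $\eta\in\sS_f\iff\Wa{f}(\eta)\neq 0$; hence the equivalence yields simultaneously the case $\omega\in\sS_f$ (then $z\omega\in\sS_f$) and the case $\omega\notin\sS_f$ (then $z\omega\in\F_{p^m}\setminus\sS_f$), and it also settles $\omega=0$ automatically, since $f\in\textit{WRPB}$ is balanced and therefore $0\notin\sS_f$. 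The mechanism for the equivalence is to exploit the homogeneity $f(cx)=c^tf(x)$ for $c\in\F_p^{\star}$ together with the hypothesis $\gcd(t-1,p-1)=1$ in order to make a change of variable in $\Wa{f}(z\omega)$ that absorbs the scalar $z$.

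First I would use the $\F_p$-linearity of $\Tr^m$ to write $\Wa{f}(z\omega)=\sum_{x\in\F_{p^m}}\zeta_p^{\,f(x)-z\,\Tr^m(\omega x)}$. Since $\gcd(t-1,p-1)=1$, the map $c\mapsto c^{t-1}$ is a bijection of $\F_p^{\star}$, so there is a unique $c\in\F_p^{\star}$ with $c^{t-1}=z$ (explicitly, $c=z^{r}$ with $r(t-1)\equiv 1\pmod{p-1}$). Performing the substitution $x=cy$, a bijection of $\F_{p^m}$, and invoking $f(cy)=c^tf(y)$ and $\Tr^m(\omega\,cy)=c\,\Tr^m(\omega y)$, the exponent becomes $c^tf(y)-zc\,\Tr^m(\omega y)$, which equals $c'\bigl(f(y)-\Tr^m(\omega y)\bigr)$ with $c'=zc\in\F_p^{\star}$, because $c^t=c\cdot c^{t-1}=cz=c'$. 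Thus $\Wa{f}(z\omega)=\sum_{y\in\F_{p^m}}\zeta_p^{\,c'(f(y)-\Tr^m(\omega y))}$.

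Finally I would compare this with $\Wa{f}(\omega)=\sum_{y\in\F_{p^m}}\zeta_p^{\,f(y)-\Tr^m(\omega y)}$: the former is precisely the image of the latter under the automorphism $\sigma_{c'}$ of $\Q(\zeta_p)$ defined by $\zeta_p\mapsto\zeta_p^{c'}$, which is a genuine element of $\mathrm{Gal}(\Q(\zeta_p)/\Q)$ since $c'\in\F_p^{\star}$. As a field automorphism is injective and fixes $0$, it follows that $\Wa{f}(z\omega)=0\iff\Wa{f}(\omega)=0$, which completes the argument. (To avoid Galois theory one argues directly: with $N_j=\#\{y\in\F_{p^m}\colon f(y)-\Tr^m(\omega y)=j\}$ for $j\in\F_p$, one has $\Wa{f}(\omega)=\sum_{j\in\F_p}N_j\zeta_p^{\,j}$ and $\Wa{f}(z\omega)=\sum_{j\in\F_p}N_j\zeta_p^{\,c'j}$; since the $N_j$ are nonnegative integers and the only $\Q$-linear relation among $1,\zeta_p,\dots,\zeta_p^{p-1}$ is $1+\zeta_p+\cdots+\zeta_p^{p-1}=0$, each of these sums equals $0$ exactly when all the $N_j$ coincide, a condition that is invariant under any permutation of the index set $\F_p$.) The only genuine obstacle is recognizing the right substitution $x=cy$ with $c^{t-1}=z$; this is exactly where $\gcd(t-1,p-1)=1$ is used, and, perhaps surprisingly, the parity of $t$ required in the definition of $\textit{WRPB}$ plays no role in this particular lemma.
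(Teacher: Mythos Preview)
Your argument is correct and self-contained. The paper does not supply an explicit proof here; it merely states that the lemma ``can be derived from [Lemma~6 and Proposition~2]'' of \cite{IEEE}, whose underlying mechanism is the same as yours: exploit the homogeneity $f(cx)=c^{t}f(x)$ together with $\gcd(t-1,p-1)=1$ to relate $\Wa{f}(z\omega)$ to a Galois conjugate of $\Wa{f}(\omega)$, hence one vanishes iff the other does. Your write-up makes this explicit (including the alternative counting argument that avoids Galois theory), so there is nothing to add.
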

 
\begin{proposition}
Let $f\in \textit{WRPB}$  with  $ \Wa {f}(\omega)=\epsilon \sqrt{p^*}^{m+s} \zeta_p^{f^{\star}(\omega)}$ for every  $\omega\in \sS_f$.  Then,  we have $f^{\star}(a \omega)=a^lf^{\star}(\omega)$  for every $a\in\F_p^{\star}$ and $\omega \in \sS_f$, where $l$ is a  positive even integer  with $\gcd(l-1,p-1)=1$.
\end{proposition}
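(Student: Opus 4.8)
The plan is to compute $f^{\star}(a\omega)$ by relating the Walsh transform at $a\omega$ to the Walsh transform at $\omega$, using the homogeneity property of $f$ itself. First I would write, for $a\in\F_p^{\star}$ and $\omega\in\sS_f$,
\[
\Wa{f}(a\omega)=\sum_{x\in\F_{p^m}}\zeta_p^{f(x)-\Tr^m(a\omega x)}.
\]
The natural move is the substitution $x\mapsto a^{-1}y$ (a bijection of $\F_{p^m}$), which turns $\Tr^m(a\omega x)$ into $\Tr^m(\omega y)$ and turns $f(x)$ into $f(a^{-1}y)=(a^{-1})^{t}f(y)=a^{-t}f(y)$ by the second homogeneous condition defining $\textit{WRPB}$. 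Hence
\[
\Wa{f}(a\omega)=\sum_{y\in\F_{p^m}}\zeta_p^{a^{-t}f(y)-\Tr^m(\omega y)}.
\]
The obstruction at this point is that the exponent is $a^{-t}f(y)$, not $f(y)$, so this is not literally $\Wa{f}(\omega)$; I need a second substitution on the outside to absorb the scalar $a^{-t}$, and for that I will use that $t$ is even with $\gcd(t-1,p-1)=1$, so the map $c\mapsto c^{t-1}$ is a permutation of $\F_p^{\star}$ fixing the two relevant data (it is exactly this coprimality hypothesis that is doing the work, as it did in \cite[Proposition 2]{IEEE}).

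Concretely, I would apply a further change of variable $y\mapsto cy$ on the sum, giving exponent $a^{-t}c^{t}f(y)-c\,\Tr^m(\omega y)$; choosing $c$ so that $a^{-t}c^{t}=c$, i.e.\ $c^{t-1}=a^{t}$, which by $\gcd(t-1,p-1)=1$ has a unique solution $c=a^{t/(t-1)\bmod (p-1)}\in\F_p^{\star}$, collapses the sum to $\Wa{f}(c\omega)$ evaluated with the correct common scalar on both terms. Combining the two substitutions yields a relation of the form $\Wa{f}(a\omega)=\Wa{f}(c\omega)$ for a specific $c=c(a)$, and by Lemma \ref{Walshsupport} we know $a\omega\in\sS_f$ iff $\omega\in\sS_f$, so $c\omega\in\sS_f$ as well; writing both sides via Lemma \ref{WalshFact} as $\epsilon\sqrt{p^*}^{m+s}\zeta_p^{f^{\star}(\cdot)}$ and cancelling the (identical, since the sign $\epsilon$ is global for a weakly regular function) prefactors forces
\[
f^{\star}(a\omega)=c(a)\,f^{\star}(\omega)\quad\text{in }\F_p.
\]

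It then remains to identify the exponent $l$ and verify its arithmetic properties. Tracking the dependence $c(a)=a^{t/(t-1)}$ through the computation, one sees $f^{\star}$ is homogeneous of some degree $l$ with $l\equiv t(t-1)^{-1}\pmod{p-1}$; I would then choose the representative $l$ to be a positive even integer and check $\gcd(l-1,p-1)=1$ — the latter because $l-1\equiv (t-(t-1))(t-1)^{-1}=(t-1)^{-1}\pmod{p-1}$ is a unit, hence coprime to $p-1$, and evenness can be arranged by adding a multiple of $p-1$ (which is even for odd $p$) without changing the residue. I would also note the degenerate but consistent case: when $\omega\notin\sS_f$ both $f^{\star}(a\omega)$ and $f^{\star}(\omega)$ vanish by definition, so the identity $f^{\star}(a\omega)=a^{l}f^{\star}(\omega)$ holds trivially there; restricting the statement to $\omega\in\sS_f$ as in the proposition is exactly what makes the cancellation of Walsh prefactors legitimate. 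The main technical obstacle is bookkeeping the two successive substitutions so that the scalar multiplying $f$ and the scalar multiplying the trace term end up equal — once that is arranged the rest is the number-theoretic cleanup of $l$ modulo $p-1$.
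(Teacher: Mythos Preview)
Your overall strategy is the standard one (the paper itself defers the proof to \cite[Proposition 2]{IEEE}), but there is a genuine gap at the key step. After your two substitutions you correctly arrive at
\[
\Wa{f}(a\omega)=\sum_{z\in\F_{p^m}}\zeta_p^{\,c\bigl(f(z)-\Tr^m(\omega z)\bigr)},\qquad c^{t-1}=a^{t},
\]
but this sum is \emph{not} $\Wa{f}(c\omega)=\sum_z\zeta_p^{f(z)-c\Tr^m(\omega z)}$: the scalar $c$ multiplies both terms of the exponent, not just the trace term. What you actually have is $\sigma_c\bigl(\Wa{f}(\omega)\bigr)$, where $\sigma_c\in\mathrm{Gal}(\Q(\zeta_p)/\Q)$ sends $\zeta_p\mapsto\zeta_p^{c}$. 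Your claimed identity $\Wa{f}(a\omega)=\Wa{f}(c\omega)$ is false in general, and even if it held it would only yield the circular relation $f^{\star}(a\omega)=f^{\star}(c\omega)$, from which homogeneity of degree $l$ does not follow.

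The fix is to apply $\sigma_c$ to the closed form $\Wa{f}(\omega)=\epsilon\sqrt{p^*}^{\,m+s}\zeta_p^{f^{\star}(\omega)}$. One has $\sigma_c(\zeta_p^{f^{\star}(\omega)})=\zeta_p^{\,c f^{\star}(\omega)}$, while $\sigma_c(\sqrt{p^*})=\eta_0(c)\sqrt{p^*}$. Here the evenness of $t$ is doing real work: since $c^{t-1}=a^{t}=(a^{t/2})^2$ is a square and $t-1$ is odd, $c$ itself is a square, so $\eta_0(c)=1$ and $\sqrt{p^*}^{\,m+s}$ is fixed by $\sigma_c$. Comparing phases then gives $f^{\star}(a\omega)=c\,f^{\star}(\omega)$ directly, i.e.\ $l\equiv t(t-1)^{-1}\pmod{p-1}$, matching your endgame. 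One minor correction there: you cannot ``arrange'' evenness of $l$ by adding a multiple of $p-1$, since $p-1$ is even and does not change parity; rather, $l$ is automatically even because modulo $2$ (a divisor of $p-1$) one has $t\equiv 0$ and $(t-1)^{-1}\equiv 1$.
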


We end this subsection with giving a brief introduction to the quadratic functions  (see for example \cite{helleseth2006monomial}).
Recall that  every quadratic function from  $\F_{p^m}$ to $\F_p$ having no linear term can be represented by
\be\label{Quadratic}
Q(x)=\sum_{i=0}^{\lfloor m/2\rfloor} \Tr^m(a_ix^{p^i+1}),
\ee
where  $a_i\in\F_{p^m}$ for $0\leq i\leq \lfloor m/2\rfloor$ and $\lfloor x \rfloor$ represents the largest integer less than or equal to $x$.
Let  $A$ be  a corresponding $m\times m$ symmetric matrix with $Q(x)=x^TAx$ as in   \cite{helleseth2006monomial} and  $L$ be a corresponding linearized polynomial over   $\F_{p^m}$  defined as
$
L(z)=\sum_{i=0}^{l}(a_iz^{p^i} +a_i^{p^{m-i}} z^{p^{m-i}}).
$
 The set of linear structures of quadratic function $Q$ is the kernel of $L$, defined as 
\be\label{LinearizedKernel}
\ker_{\F_p}(L)=\{z\in \F_{p^m}  \colon\, Q(z+y)=Q(z)+Q(y), \forall y\in \F_{p^m}\},
\ee
which is an $\F_p$-linear subspace of $ \F_{p^m}$.
 Let  the dimension of $\ker_{\F_p}(L)$  be  $s$ with $0\leq s\leq m$. Notice that by \cite[Proposition 2.1]{hou2004solution}, the rank of $A$ equals $m-s$. 
It was shown in  \cite{helleseth2006monomial} that a quadratic function $Q$ is bent if and only if $s=0$; equivalently,  $A$ is nonsingular, that is, $A$  has full rank. Hence 
we have the following natural consequence (see \cite[Proposition 2]{helleseth2006monomial} and \cite[Example 1]{mesnager2015results}).

\begin{proposition}  
Any quadratic function $Q$ is an $s$-plateaued if and only if the dimension of the kernel of $L$ defined as  in (\ref{LinearizedKernel}) equals $s$; equivalently, the rank of $A$ equals $m-s$.  
\end{proposition}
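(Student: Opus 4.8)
The plan is to compute the squared modulus $|\Wa{Q}(\omega)|^2$ of the Walsh transform for an arbitrary $\omega\in\F_{p^m}$ and to show that it takes only the values $0$ and $p^{m+s}$, where $s=\dim_{\F_p}\ker_{\F_p}(L)$. This determines the plateau level uniquely, so both implications follow at once; the reformulation with $\operatorname{rank}(A)=m-s$ is then immediate from the fact, already recalled in the text from \cite[Proposition 2.1]{hou2004solution}, that $\operatorname{rank}(A)=m-\dim_{\F_p}\ker_{\F_p}(L)$.

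First I would expand
\[
|\Wa{Q}(\omega)|^2=\sum_{x,y\in\F_{p^m}}\zeta_p^{\,Q(x)-Q(y)-\Tr^m(\omega(x-y))}
\]
and substitute $x=y+z$, turning the right-hand side into $\sum_{z\in\F_{p^m}}\zeta_p^{-\Tr^m(\omega z)}\sum_{y\in\F_{p^m}}\zeta_p^{\,Q(y+z)-Q(y)}$. The next step is to polarize the quadratic function in (\ref{Quadratic}): from $(y+z)^{p^i+1}-y^{p^i+1}-z^{p^i+1}=y^{p^i}z+yz^{p^i}$ together with the Frobenius invariance $\Tr^m(a_iyz^{p^i})=\Tr^m(a_i^{p^{m-i}}y^{p^{m-i}}z)$, one gets $Q(y+z)-Q(y)-Q(z)=\Tr^m(zL(y))$ with $L$ as in (\ref{LinearizedKernel}). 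Hence the inner sum over $y$ equals the character sum of the $\F_p$-linear form $y\mapsto\Tr^m(zL(y))$, which is $p^m$ if $z\in\ker_{\F_p}(L)$ and $0$ otherwise (using that $\Tr^m$ is non-degenerate and $L$ is self-adjoint with respect to it, so that $\ker_{\F_p}(L)$ is indeed $\{z:L(z)=0\}$, consistently with its name), and therefore
\[
|\Wa{Q}(\omega)|^2=p^m\sum_{z\in\ker_{\F_p}(L)}\zeta_p^{\,Q(z)-\Tr^m(\omega z)}.
\]

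The key observation is that $Q$ vanishes identically on $\ker_{\F_p}(L)$ for odd $p$: for $z$ in this subspace, the defining additivity gives $Q(2z)=Q(z+z)=2Q(z)$, while the homogeneity $Q(ax)=a^2Q(x)$ read off from (\ref{Quadratic}) gives $Q(2z)=4Q(z)$; hence $2Q(z)=0$, so $Q(z)=0$. It follows that $|\Wa{Q}(\omega)|^2=p^m\sum_{z\in\ker_{\F_p}(L)}\zeta_p^{-\Tr^m(\omega z)}$, and since $z\mapsto\Tr^m(\omega z)$ restricts to an $\F_p$-linear functional on the $s$-dimensional subspace $\ker_{\F_p}(L)$, this sum equals $p^s$ when that functional is identically zero and $0$ otherwise. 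Thus $|\Wa{Q}(\omega)|^2\in\{0,p^{m+s}\}$ for every $\omega$, with the value $p^{m+s}$ attained at $\omega=0$; hence $Q$ is $s$-plateaued. Conversely, if $Q$ is $s'$-plateaued, the same computation forces $p^{m+s'}=p^{m+s}$, \ie $s'=s=\dim_{\F_p}\ker_{\F_p}(L)=m-\operatorname{rank}(A)$.

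The only delicate point I anticipate is the index bookkeeping in the polarization identity — matching the monomials $a_ix^{p^i+1}$ of (\ref{Quadratic}) with the two families of terms $a_iw^{p^i}$ and $a_i^{p^{m-i}}w^{p^{m-i}}$ defining $L$, handling the middle term when $m$ is even, and checking the self-adjointness of $L$ — together with verifying that the homogeneity exponent coming from (\ref{Quadratic}) is exactly $2$. Everything after the polarization step is routine orthogonality of additive characters.
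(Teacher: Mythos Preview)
Your argument is correct and complete. The paper, however, does not actually supply a proof of this proposition: it records the statement as a ``natural consequence'' of the preceding discussion and simply refers the reader to \cite[Proposition~2]{helleseth2006monomial} and \cite[Example~1]{mesnager2015results}. What you have written is precisely the standard direct computation that underlies those references: expand $|\Wa{Q}(\omega)|^2$ via the substitution $x=y+z$, use the polarization identity $Q(y+z)-Q(y)-Q(z)=\Tr^m\!\bigl(zL(y)\bigr)$ to collapse the inner sum onto $\ker_{\F_p}(L)$, and then observe (as you do, using $Q(2z)=2Q(z)=4Q(z)$) that $Q$ vanishes on that kernel so that the remaining sum is a pure additive character sum on an $s$-dimensional $\F_p$-space. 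Your self-adjointness remark correctly reconciles the paper's definition of $\ker_{\F_p}(L)$ (as the set of linear structures of $Q$) with the zero set of $L$. The only bookkeeping to watch, which you already flagged, is the boundary index $i=\lfloor m/2\rfloor$ when $m$ is even; a one-line check shows the two summands of $L$ coalesce there and the identity still holds. In short, you have supplied the proof that the paper outsources to the literature.
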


One can derive from \cite[Proposition 1]{helleseth2006monomial} and \cite[Theorem 4.3]{ccesmeliouglu2012construction}   the following reasonable fact. 
\begin{proposition}
 The sign of the Walsh transform of quadratic functions does not  depend on inputs which means that every quadratic function is a weakly regular plateaued function. Namely, there is no quadratic  non-weakly regular plateaued  function.  
\end{proposition}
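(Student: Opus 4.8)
The plan is to show that the dependence of the Walsh transform of a quadratic form $Q$ on its argument $\omega$ is confined to a phase factor $\zeta_p^{f^{\star}(\omega)}$, whereas the modulus and the sign of $\sW_Q(\omega)$ together form a single Gauss sum attached to $Q$ alone; weak regularity follows at once, and with it the nonexistence of quadratic non-weakly regular plateaued functions.

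First I would record the set-up: let $B(x,y)=Q(x+y)-Q(x)-Q(y)$ be the polar form of $Q$, represented through the trace pairing by the linearized polynomial $L$ associated with $Q$, so that the set of linear structures of $Q$ is $\operatorname{rad}(B)=\ker_{\F_p}(L)$, which has $\F_p$-dimension $s$ while $\operatorname{rank}(A)=m-s$ by the preceding proposition. Since $p$ is odd, $Q(z)=\tfrac12 B(z,z)$, and therefore $Q$ vanishes identically on $\ker_{\F_p}(L)$; this fact is used throughout. For $\omega\in\F_{p^m}$ I would then evaluate $\sW_Q(\omega)=\sum_{x\in\F_{p^m}}\zeta_p^{Q(x)-\Tr^m(\omega x)}$ via the usual dichotomy: either there is $x_0\in\F_{p^m}$ with $\Tr^m(\omega x)=B(x_0,x)$ for all $x$ (equivalently $\omega\in\operatorname{Im}(L)$), or there is none. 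In the case that no such $x_0$ exists, I would split $\F_{p^m}=W\oplus\ker_{\F_p}(L)$, sum first over $z\in\ker_{\F_p}(L)$, use $Q(z)=0$ there, and observe that the resulting factor $\sum_{z\in\ker_{\F_p}(L)}\zeta_p^{-\Tr^m(\omega z)}$ is a sum of a nontrivial additive character over an $\F_p$-subspace, hence $0$; thus $\sW_Q(\omega)=0$ and $\omega\notin\sS_Q$.

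In the remaining case I would complete the square: using $Q(x)-B(x,x_0)=Q(x-x_0)-Q(x_0)$ and re-indexing $x\mapsto x+x_0$ gives
\[
\sW_Q(\omega)=\zeta_p^{-Q(x_0)}\,S,\qquad S:=\sum_{x\in\F_{p^m}}\zeta_p^{Q(x)} .
\]
Since $Q$ vanishes on $\ker_{\F_p}(L)$, altering $x_0$ by an element of $\ker_{\F_p}(L)$ leaves $Q(x_0)$ unchanged, so $\omega\mapsto f^{\star}(\omega):=-Q(x_0)$ is well defined on $\sS_Q=\operatorname{Im}(L)$, and we set $f^{\star}\equiv 0$ off $\sS_Q$. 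The decisive observation is that $S$ does not depend on $\omega$. Decomposing $\F_{p^m}=W\oplus\ker_{\F_p}(L)$ once more yields $S=p^{s}\sum_{w\in W}\zeta_p^{Q|_W(w)}$, where $Q|_W$ is a nondegenerate $\F_p$-quadratic form on an $(m-s)$-dimensional space; diagonalising $Q|_W$ over $\F_p$ and applying the classical quadratic Gauss-sum formula gives $|S|=p^{(m+s)/2}$ with the argument of $S$ a multiple of $\pi/2$ determined by $\operatorname{rank}(A)$ and the discriminant of $Q$. Writing $S=\epsilon\,\sqrt{p^*}^{\,m+s}$ with $\epsilon\in\{\pm1\}$, or, against the normalisation $p^{(m+s)/2}$ of (\ref{PlateauedWalshh}), $S=u\,p^{(m+s)/2}$ with $u\in\{\pm1,\pm i\}$, we conclude that for every $\omega\in\sS_Q$,
\[
\sW_Q(\omega)=u\,p^{(m+s)/2}\,\zeta_p^{f^{\star}(\omega)}
\]
with one and the same $u$. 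Comparing with (\ref{PlateauedWalshh}), $Q$ is weakly regular $s$-plateaued; since $s$ was arbitrary, every quadratic function is weakly regular plateaued and there is no quadratic non-weakly regular plateaued function.

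The step I expect to be the main obstacle is the evaluation of $S$: pinning down the normalisation of $\sqrt{p^*}$ and the placement of the factor $i$ when $m-s$ is odd, and checking that the sign $\epsilon$ (equivalently the phase $u$) is genuinely an invariant of $Q$ and not an artefact of the diagonalising change of coordinates, so that it cannot secretly vary with $\omega$. These are exactly the computations behind \cite[Proposition 1]{helleseth2006monomial} and \cite[Theorem 4.3]{ccesmeliouglu2012construction}, which may be quoted once the dichotomy and the square completion above have been carried out.
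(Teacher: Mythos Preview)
The paper does not supply a proof of this proposition at all: it merely records the statement as a consequence of \cite[Proposition 1]{helleseth2006monomial} and \cite[Theorem 4.3]{ccesmeliouglu2012construction}. Your proposal, by contrast, writes out the actual argument behind those references---complete the square via the polar form, split off the radical $\ker_{\F_p}(L)$, and observe that the entire $\omega$-dependence is absorbed into the phase $\zeta_p^{-Q(x_0)}$ while the remaining factor $S=\sum_x\zeta_p^{Q(x)}$ is a fixed Gauss sum attached to $Q$ alone. This is correct and is precisely the standard computation; your caveat about the normalisation of $\sqrt{p^*}$ and the placement of the $i$ when $m-s$ is odd is well placed but routine. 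In short, you have supplied the proof that the paper chose to outsource to the literature, and the two are the same in substance.
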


\begin{remark}
All quadratic balanced functions  are included in the class \textit{WRPB}.  
\end{remark}

 \subsection{Exponential sums from weakly regular plateaued functions}
In this subsection, we give some results on exponential sums about weakly regular plateaued balanced functions.

\begin{lemma}\cite{IEEE}\label{Lemmag}  
Let $f:\F_{p^m}\to\F_{p}$ be a weakly regular $s$-plateaued function with  $ \Wa {f}(\omega)=\epsilon \sqrt{p^*}^{m+s} \zeta_p^{f^{\star}(\omega)}$ for   $\omega\in \sS_f$,  where   $f^{\star}$ is a $p$-ary function over  $\sS_f$.
For $a\in\F_p$, define 
$\sN_{f^{\star}}(a)=\#\{\omega\in\sS_f  \colon\, f^{\star}(\omega)=a\}.$ Then we have
\be\nn
\sN_{f^{\star}}(a)=\left\{\begin{array}{ll}
p^{m-s-1} + \epsilon \eta_0^{m+1}(-1)(p-1) \sqrt{p^*}^{m-s-2},  & \mbox{ if } a=0, \\
p^{m-s-1} -\epsilon \eta_0^{m+1}(-1) \sqrt{p^*}^{m-s-2},& \mbox{ if } a\in\F_p^{\star}
 \end{array}\right.
\ee
 when $m-s$ is even; otherwise,
\be\nn
\sN_{f^{\star}}(a)=\left\{\begin{array}{ll}
p^{m-s-1},  & \mbox{ if } a=0, \\
p^{m-s-1} +\epsilon  \eta_0^{m}(-1) \sqrt{p^*}^{m-s-1},& \mbox{ if } a \in SQ,\\
p^{m-s-1} - \epsilon\eta_0^m(-1)   \sqrt{p^*}^{m-s-1}, & \mbox{ if }  a\in NSQ.
 \end{array}\right.
\ee
\end{lemma}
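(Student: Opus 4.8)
The plan is to compute $\sN_{f^{\star}}(a)$ by additive Fourier inversion over $\F_p$ and then to evaluate the resulting twisted exponential sums over $\sS_f$. Using the orthogonality relation $\sum_{y\in\F_p}\zeta_p^{yc}=p$ for $c=0$ and $0$ for $c\in\F_p^{\star}$, applied with $c=f^{\star}(\omega)-a$, I would first write
\[
\sN_{f^{\star}}(a)=\frac1p\sum_{\omega\in\sS_f}\sum_{y\in\F_p}\zeta_p^{y(f^{\star}(\omega)-a)}=\frac{\#\sS_f}{p}+\frac1p\sum_{y\in\F_p^{\star}}\zeta_p^{-ya}\,T_y,\qquad T_y:=\sum_{\omega\in\sS_f}\zeta_p^{y f^{\star}(\omega)}.
\]
By Lemma \ref{SupportLemma} we have $\#\sS_f=p^{m-s}$, so the first term equals $p^{m-s-1}$, and the whole problem reduces to evaluating $T_y$ for $y\in\F_p^{\star}$.

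The core of the proof is the evaluation of $T_y$. For $y=1$ this is immediate: evaluating the Fourier inversion identity $\sum_{\omega\in\F_{p^m}}\Wa{f}(\omega)\,\zeta_p^{\Tr^m(\omega\beta)}=p^m\zeta_p^{f(\beta)}$ at $\beta=0$ and using $\Wa{f}(\omega)=\epsilon\sqrt{p^*}^{m+s}\zeta_p^{f^{\star}(\omega)}$ on $\sS_f$ (with $\Wa{f}$ vanishing off $\sS_f$ and $f(0)=0$) gives $T_1=\epsilon^{-1}p^{m}\sqrt{p^*}^{-(m+s)}$, which by $(\sqrt{p^*})^{2}=p^*=\eta_0(-1)p$ equals $\epsilon\,\eta_0(-1)^{m}\sqrt{p^*}^{\,m-s}$. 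For general $y\in\F_p^{\star}$ I would invoke the structural fact that $yf$ is again a weakly regular $s$-plateaued function with the same Walsh support $\sS_f$ and with sign $\eta_0(y)^{m+s}\epsilon$ (for quadratic representatives one checks this directly by completing the square, which is where the exponent $m+s$ of $\eta_0(y)$ comes from; compare also Lemma \ref{Walshsupport}); applying the $y=1$ computation to $y^{-1}f$ in place of $f$ then gives
\[
T_y=\bigl(\eta_0(y^{-1})^{m+s}\epsilon\bigr)^{-1}p^{m}\sqrt{p^*}^{-(m+s)}=\epsilon\,\eta_0(-1)^{m}\,\eta_0(y)^{m+s}\,\sqrt{p^*}^{\,m-s},
\]
using $\eta_0(y^{-1})=\eta_0(y)$. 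I expect this structural input to be the only non-routine ingredient; the remainder is bookkeeping with characters and Gauss sums.

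Finally I would substitute this value of $T_y$ and split on the parity of $m-s$. Plugging in,
\[
\sN_{f^{\star}}(a)=p^{m-s-1}+\frac{\epsilon\,\eta_0(-1)^{m}\sqrt{p^*}^{\,m-s}}{p}\sum_{y\in\F_p^{\star}}\eta_0(y)^{m+s}\,\zeta_p^{-ya}.
\]
If $m-s$ is even, then $m+s$ is even, $\eta_0(y)^{m+s}=1$, and $\sum_{y\in\F_p^{\star}}\zeta_p^{-ya}$ equals $p-1$ for $a=0$ and $-1$ for $a\in\F_p^{\star}$. If $m-s$ is odd, then $m+s$ is odd, $\eta_0(y)^{m+s}=\eta_0(y)$, and $\sum_{y\in\F_p^{\star}}\eta_0(y)\zeta_p^{-ya}$ equals $0$ for $a=0$ and $\eta_0(-1)\eta_0(a)\sqrt{p^*}$ for $a\in\F_p^{\star}$ (the quadratic Gauss sum $\sum_{y\in\F_p^{\star}}\eta_0(y)\zeta_p^{y}=\sqrt{p^*}$), the latter separating $a\in SQ$ from $a\in NSQ$. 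Simplifying the powers of $\sqrt{p^*}$ through $\sqrt{p^*}^{\,m-s}/p=\eta_0(-1)\sqrt{p^*}^{\,m-s-2}$ and $\sqrt{p^*}^{\,m-s+1}/p=\eta_0(-1)\sqrt{p^*}^{\,m-s-1}$, and collecting the powers of $\eta_0(-1)$, produces precisely the two case formulas in the statement. The main obstacle is the second paragraph's structural claim about $yf$; once it is granted, the rest is a direct character-sum computation.
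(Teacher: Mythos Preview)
The paper does not supply its own proof of this lemma; it is imported from \cite{IEEE} and merely stated. So there is nothing in the present paper to compare against, and your write-up has to stand on its own.

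Your overall strategy---orthogonality to isolate $f^{\star}(\omega)=a$, evaluation of $T_y=\sum_{\omega\in\sS_f}\zeta_p^{yf^{\star}(\omega)}$ via Fourier inversion at $\beta=0$, then the standard Gauss-sum bookkeeping---is correct, and the final arithmetic with powers of $\sqrt{p^*}$ and $\eta_0(-1)$ checks out in both parity cases. Two points deserve tightening. First, you silently use $f(0)=0$ in computing $T_1$ from $\sum_{\omega}\Wa{f}(\omega)=p^m\zeta_p^{f(0)}$; this is not part of the lemma as printed here (though surely intended in \cite{IEEE}), so state it explicitly. Second, the ``structural fact'' about $yf$ is the only step you do not prove in general---you offer it for quadratic $f$ only---and your phrase ``same Walsh support $\sS_f$'' is actually false for arbitrary weakly regular plateaued $f$: one has $\sS_{yf}=y\sS_f$, which need not equal $\sS_f$ without the \textit{WRPB} hypothesis (cf.\ Lemma~\ref{Walshsupport}). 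The clean general fix is to bypass $yf$ entirely and apply the Galois automorphism $\sigma_y\colon\zeta_p\mapsto\zeta_p^y$ of $\Q(\zeta_p)/\Q$ directly to $T_1$: since $T_y=\sigma_y(T_1)$ and $\sigma_y(\sqrt{p^*})=\eta_0(y)\sqrt{p^*}$, one gets $T_y=\epsilon\,\eta_0(-1)^m\,\eta_0(y)^{m-s}\,\sqrt{p^*}^{\,m-s}$ at once (and $\eta_0(y)^{m-s}=\eta_0(y)^{m+s}$, matching your exponent). With this replacement your argument is complete.
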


\begin{lemma}\label{LemmaA}
Let $f\in \textit{WRPB}$. For   $\omega\in\F_{p^m}^{\star}$, define
$$
A=\sum_{y,z\in\F_p^{\star}} \sum_{x\in\F_{p^m}}\zeta_p^{yf(x)-z\Tr^m(\omega x)}.
$$
 Then for every $\omega\in\F_{p^m}^{\star} \setminus \sS_f$ we have $A=0$, and for every  $\omega\in \sS_f$  
\be\nn
A=\left\{\begin{array}{ll}
  \epsilon  (p-1)^2 \sqrt{p^*}^{m+s},  & \mbox{ if } f^{\star}(\omega)=0, \\
- \epsilon  (p-1) \sqrt{p^*}^{m+s},& \mbox{ if } f^{\star}(\omega)\neq 0
 \end{array}\right.
\ee
 when $m+s$ is even; otherwise,  
\be\nn
A=\left\{\begin{array}{ll}
0,  & \mbox{ if } f^{\star}(\omega)=0, \\
 \epsilon  \eta_0(f^{\star}(\omega)) (p-1) \sqrt{p^*}^{m+s+1},& \mbox{ if } f^{\star}(\omega)\neq 0.
 \end{array}\right.
\ee
\end{lemma}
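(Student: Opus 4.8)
The plan is to turn $A$ into a sum of Walsh values and then to exploit the stability of the weakly regular $s$-plateaued property both under multiplying $f$ by a nonzero scalar and under adding a linear form. First I would note that the inner sum is $\sum_{x\in\F_{p^m}}\zeta_p^{yf(x)-z\Tr^m(\omega x)}=\sum_{x}\zeta_p^{yf(x)-\Tr^m(z\omega x)}=\sW_{yf}(z\omega)$, so that $A=\sum_{y,z\in\F_p^{\star}}\sW_{yf}(z\omega)$. Next I would pull the scalar $y$ out of the exponent, $yf(x)-\Tr^m(z\omega x)=y\bigl(f(x)-\Tr^m(y^{-1}z\omega x)\bigr)$, which gives $\sW_{yf}(z\omega)=\sum_{x}\zeta_p^{\,y\,g_\nu(x)}$ with $\nu:=y^{-1}z\omega$ and $g_\nu(x):=f(x)-\Tr^m(\nu x)$. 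Since $\sW_{g_\nu}(\mu)=\sW_f(\nu+\mu)$, the function $g_\nu$ is again weakly regular $s$-plateaued with the same sign $\epsilon$, dual $g_\nu^{\star}(\mu)=f^{\star}(\nu+\mu)$, and Walsh support $\sS_f-\nu$; in particular $0\in\sS_{g_\nu}$ exactly when $\nu\in\sS_f$, and then $\sW_{g_\nu}(0)=\sW_f(\nu)=\epsilon\sqrt{p^{*}}^{\,m+s}\zeta_p^{f^{\star}(\nu)}$ by Lemma~\ref{WalshFact}.

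The crux is a twisting identity for a weakly regular $s$-plateaued function $g$: for $y\in\F_p^{\star}$ one has $\sum_{x}\zeta_p^{\,y\,g(x)}=0$ when $g$ is balanced (immediate, since then $g$, and hence $yg$, has a uniform value distribution), whereas $\sum_{x}\zeta_p^{\,y\,g(x)}=\epsilon_g\,\eta_0(y)^{m-s}\sqrt{p^{*}}^{\,m+s}\zeta_p^{\,y\,g^{\star}(0)}$ when $0\in\sS_g$, where $\sW_g(0)=\epsilon_g\sqrt{p^{*}}^{\,m+s}\zeta_p^{g^{\star}(0)}$. Concretely, $yg$ is again weakly regular $s$-plateaued, with sign $\eta_0(y)^{m-s}\epsilon_g$ and $(yg)^{\star}(0)=y\,g^{\star}(0)$; this is the step I expect to cost the most work, since it must be obtained from Fourier inversion and Gauss-sum manipulations rather than from the homogeneity of $f$ alone, and for that reason I would set it up as a preliminary lemma. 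Applying this identity to each $g_\nu$ with $\nu=y^{-1}z\omega$, and combining it with Lemma~\ref{Walshsupport} — which tells us $\nu\in\sS_f\Longleftrightarrow\omega\in\sS_f$ for every $y,z\in\F_p^{\star}$ — already settles the easy half: if $\omega\in\F_{p^m}^{\star}\setminus\sS_f$, then every summand $\sW_{yf}(z\omega)=\sum_{x}\zeta_p^{\,y\,g_\nu(x)}$ vanishes and $A=0$.

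For $\omega\in\sS_f$ I would invoke the homogeneity of the dual, $f^{\star}(a\omega)=a^{l}f^{\star}(\omega)$ with $l$ even and $\gcd(l-1,p-1)=1$, so that $f^{\star}(\nu)=(y^{-1}z)^{l}f^{\star}(\omega)$ and hence $\sW_{yf}(z\omega)=\epsilon\,\eta_0(y)^{m-s}\sqrt{p^{*}}^{\,m+s}\zeta_p^{\,y^{1-l}z^{l}f^{\star}(\omega)}$, which yields
\[
A=\epsilon\sqrt{p^{*}}^{\,m+s}\sum_{y,z\in\F_p^{\star}}\eta_0(y)^{m-s}\,\zeta_p^{\,y^{1-l}z^{l}f^{\star}(\omega)} .
\]
Since $l$ is even, $1-l$ is odd, so $\eta_0(y^{1-l})=\eta_0(y)$; replacing $y$ by $u:=y^{1-l}$ (a bijection of $\F_p^{\star}$, since $\gcd(l-1,p-1)=1$) turns the sum into $\epsilon\sqrt{p^{*}}^{\,m+s}\sum_{u,z\in\F_p^{\star}}\eta_0(u)^{m-s}\zeta_p^{\,u z^{l}f^{\star}(\omega)}$. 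Now I would split on $f^{\star}(\omega)$: when $f^{\star}(\omega)=0$ the sum equals $(p-1)\sum_{u}\eta_0(u)^{m-s}$, which is $(p-1)^{2}$ if $m+s$ is even and $0$ if $m+s$ is odd; when $f^{\star}(\omega)\neq0$, summing first over $u$ gives $-1$ if $m-s$ is even (so $A=-\epsilon(p-1)\sqrt{p^{*}}^{\,m+s}$) and gives $\eta_0(z^{l}f^{\star}(\omega))\sqrt{p^{*}}=\eta_0(f^{\star}(\omega))\sqrt{p^{*}}$ if $m-s$ is odd — here using the Gauss sum $\sum_{u\in\F_p^{\star}}\eta_0(u)\zeta_p^{ku}=\eta_0(k)\sqrt{p^{*}}$ together with $\eta_0(z)^{l}=1$ — so that $A=\epsilon(p-1)\eta_0(f^{\star}(\omega))\sqrt{p^{*}}^{\,m+s+1}$ after the sum over $z$. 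As $m+s$ and $m-s$ have the same parity, these four values are precisely the ones claimed, so nothing is left but to collect the cases. The only genuine obstacle is the twisting identity of the second paragraph; everything after it is elementary character bookkeeping plus this one Gauss-sum evaluation.
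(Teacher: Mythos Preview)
Your argument is correct and follows what is essentially the expected route; the paper itself gives no proof beyond referring to \cite[Lemma~12]{IEEE}, whose technique is the same Galois/Gauss-sum computation you outline. The only unnecessary detour is the passage through the shifted functions $g_\nu$ and a separate ``twisting lemma'': you can obtain that identity in one line by noting that the inner sum is $\sigma_y\bigl(\sW_f(y^{-1}z\omega)\bigr)$, where $\sigma_y\in\mathrm{Gal}(\Q(\zeta_p)/\Q)$ sends $\zeta_p\mapsto\zeta_p^{\,y}$, and then invoking the classical fact $\sigma_y(\sqrt{p^*})=\eta_0(y)\sqrt{p^*}$. This immediately yields $\epsilon\,\eta_0(y)^{m+s}\sqrt{p^*}^{\,m+s}\zeta_p^{\,y f^{\star}(y^{-1}z\omega)}$ (same parity as your $\eta_0(y)^{m-s}$), after which your substitution $u=y^{1-l}$ and the Gauss-sum evaluation finish the job exactly as you wrote.
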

\begin{proof}
The proof can proceed  by using the same  arguments of the proof of \cite[Lemma 12]{IEEE}.
\end{proof}

\begin{lemma}\label{Lemma0}
Let $f\in \textit{WRPB}$.  For $\omega\in\F_{p^m}^{\star}$, define 
$\sN_{0}(\omega)=\#\{x\in\F_{p^m}  \colon\, f(x)=0 \mbox{ and } \Tr^m(\omega x)=0\}.$
Then for every $\omega\in\F_{p^m}^{\star} \setminus \sS_f$  we have
$\sN_{0}(\omega)= p^{m-2},$
 and for every  $\omega\in \sS_f,$  
\be\nn
\sN_{0}(\omega)=\left\{\begin{array}{ll}
p^{m-2} + \epsilon  (p-1)^2\sqrt{p^*}^{m+s-4},  & \mbox{ if } f^{\star}(\omega)=0, \\
p^{m-2}- \epsilon  (p-1)\sqrt{p^*}^{m+s-4},& \mbox{ if } f^{\star}(\omega)\neq 0
 \end{array}\right.
\ee
when $m+s$ is even; otherwise,
\be\nn
\sN_{0}(\omega)=\left\{\begin{array}{ll}
p^{m-2},  & \mbox{ if } f^{\star}(\omega)=0, \\
p^{m-2}+\epsilon   (p-1)  \sqrt{p^*}^{m+s-3},& \mbox{ if } f^{\star}(\omega)\in SQ,\\
p^{m-2}-\epsilon   (p-1)  \sqrt{p^*}^{m+s-3},& \mbox{ if } f^{\star}(\omega)\in NSQ.
 \end{array}\right.
\ee
\end{lemma}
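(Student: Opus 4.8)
The plan is to reduce $\sN_0(\omega)$ to the exponential sum $A$ computed in Lemma \ref{LemmaA} by a standard orthogonality argument. Since $\sum_{y\in\F_p}\zeta_p^{yb}$ equals $p$ when $b=0$ and $0$ otherwise, I would write
\[
\sN_0(\omega)=\frac{1}{p^2}\sum_{x\in\F_{p^m}}\Big(\sum_{y\in\F_p}\zeta_p^{yf(x)}\Big)\Big(\sum_{z\in\F_p}\zeta_p^{z\Tr^m(\omega x)}\Big)
=\frac{1}{p^2}\sum_{y,z\in\F_p}\ \sum_{x\in\F_{p^m}}\zeta_p^{yf(x)+z\Tr^m(\omega x)},
\]
and then split the sum over $(y,z)$ into the four blocks determined by whether $y$ and $z$ vanish.

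The block $y=z=0$ contributes $p^m$. The block $y=0$, $z\in\F_p^\star$ contributes $\sum_{z\in\F_p^\star}\sum_{x\in\F_{p^m}}\zeta_p^{\Tr^m(z\omega x)}$, which is $0$ because $z\omega\in\F_{p^m}^\star$ and a nontrivial additive character sums to zero over $\F_{p^m}$. The block $y\in\F_p^\star$, $z=0$ contributes $\sum_{y\in\F_p^\star}\sum_{x\in\F_{p^m}}\zeta_p^{yf(x)}$; here I would use that $f$ is balanced: from $\Wa{f}(0)=\sum_{a\in\F_p}\#\{x\in\F_{p^m}\colon f(x)=a\}\,\zeta_p^a=0$ and the $\Q$-linear independence of $1,\zeta_p,\dots,\zeta_p^{p-2}$ one gets $\#\{x\in\F_{p^m}\colon f(x)=a\}=p^{m-1}$ for every $a\in\F_p$, hence $\sum_{x\in\F_{p^m}}\zeta_p^{yf(x)}=p^{m-1}\sum_{a\in\F_p}\zeta_p^{ya}=0$ for each $y\in\F_p^\star$, so this block vanishes too. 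Finally, after the change of variable $z\mapsto -z$, the block $y,z\in\F_p^\star$ is exactly $A$. Thus $\sN_0(\omega)=\frac{1}{p^2}\bigl(p^m+A\bigr)$.

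The remainder is to insert the value of $A$ from Lemma \ref{LemmaA} and simplify. For $\omega\in\F_{p^m}^\star\setminus\sS_f$ we have $A=0$, so $\sN_0(\omega)=p^{m-2}$. For $\omega\in\sS_f$ I would plug in the four values of $A$ and use $\sqrt{p^*}^{\,4}=(p^*)^2=p^2$ (valid since $\eta_0(-1)^2=1$), so that $\sqrt{p^*}^{\,m+s}/p^2=\sqrt{p^*}^{\,m+s-4}$ when $m+s$ is even and $\sqrt{p^*}^{\,m+s+1}/p^2=\sqrt{p^*}^{\,m+s-3}$ when $m+s$ is odd; writing $\eta_0(f^\star(\omega))=1$ for $f^\star(\omega)\in SQ$ and $\eta_0(f^\star(\omega))=-1$ for $f^\star(\omega)\in NSQ$ then reproduces the two displayed case distinctions verbatim.

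I do not expect a genuine obstacle, since all the analytic content sits in Lemma \ref{LemmaA} (itself following the lines of \cite[Lemma 12]{IEEE}). The two points that need a little care are the vanishing of the $z=0$ block — which relies on the fact that ``$\Wa{f}(0)=0$'' already forces the whole value distribution of $f$ to be flat — and keeping the sign $\epsilon$, the parity of $m+s$, and the $SQ$/$NSQ$ dichotomy consistent while reindexing the exponents of $\sqrt{p^*}$.
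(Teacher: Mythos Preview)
Your proposal is correct and follows essentially the same route as the paper: the paper's own proof simply states, ``by the definition of $\sN_0(\omega)$ and the fact that $f$ is balanced,'' the identity $\sN_0(\omega)=p^{m-2}+p^{-2}\sum_{y,z\in\F_p^\star}\sum_{x\in\F_{p^m}}\zeta_p^{yf(x)-z\Tr^m(\omega x)}$, and then invokes Lemma~\ref{LemmaA}. Your four-block decomposition is just a spelled-out version of that one line, with the same vanishing of the $y=0$ and $z=0$ blocks and the same final substitution.
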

\begin{proof} By the definition of $\sN_{0}(\omega)$ and the fact that $f$ is balanced, we have
\be\nn \label{BBB}
\begin{array}{ll}
\sN_{0}(\omega)=p^{m-2}
+p^{-2}\displaystyle \sum_{y,z\in\F_p^{\star}}\sum_{x\in\F_{p^m}}  \zeta_p^{yf(x)-z\Tr^m(\omega x)}.
\end{array}
\ee
Then, the proof is ended from Lemma \ref{LemmaA}.
\end{proof}

 \begin{lemma} \label{LemmaAA}
Let $f\in \textit{WRPB}$. For $\omega\in\F_{p^m}^{\star}$, define
$$
A=\sum_{y,z\in\F_p^{\star}} \sum_{x\in\F_{p^m}}\zeta_p^{y^2f(x)-z\Tr^m(\omega x)}.
$$
Then for every $\omega\in\F_{p^m}^{\star} \setminus \sS_f$ we have $A=0$, and for every $\omega\in \sS_f$
\be\nn
A=\left\{\begin{array}{ll}
  \epsilon  (p-1)^2 \sqrt{p^*}^{m+s},  & \mbox{ if } f^{\star}(\omega)=0, \\
 \epsilon  (p-1) \sqrt{p^*}^{m+s}(\sqrt{p^*}-1),& \mbox{ if } f^{\star}(\omega)\in SQ,\\
- \epsilon  (p-1) \sqrt{p^*}^{m+s}(\sqrt{p^*}+1),& \mbox{ if }  f^{\star}(\omega)\in NSQ.
 \end{array}\right.
\ee
\end{lemma}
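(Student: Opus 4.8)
The plan is to reduce the double sum $A$ to quantities already controlled by Lemmas \ref{WalshFact}, \ref{Walshsupport} and \ref{Lemmag}, exactly as in the proof of Lemma \ref{LemmaA}, the only new feature being that the inner character sum now involves $y^2 f(x)$ rather than $y f(x)$. First I would switch the order of summation and isolate the $x$-sum:
\[
A=\sum_{y,z\in\F_p^{\star}}\ \sum_{x\in\F_{p^m}}\zeta_p^{y^2f(x)-z\Tr^m(\omega x)}
 =\sum_{y,z\in\F_p^{\star}}\Wa{f}_{y^2}(\omega),
\]
where for $c\in\F_p^\star$ I write $\Wa{f}_c(\omega)=\sum_{x}\zeta_p^{cf(x)-z\Tr^m(\omega x)}$ and then absorb $z$ by the substitution $x\mapsto z^{-1}x'$ (using $f(z^{-1}x')=z^{-t}f(x')$ from the homogeneity of $f\in\textit{WRPB}$, with $t$ even so $z^{-t}\in SQ$). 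The point of the even exponent $t$ is that $c=y^2$ combined with the scaling factor $z^{-t}$ keeps us inside the set of squares, so the Walsh value $\Wa{f}(z^{-1}\omega)$ only gets multiplied by a controlled quadratic-character factor; by Lemma \ref{Walshsupport}, $z^{-1}\omega\in\sS_f$ iff $\omega\in\sS_f$, which immediately gives $A=0$ when $\omega\in\F_{p^m}^\star\setminus\sS_f$.

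For $\omega\in\sS_f$ I would plug in $\Wa{f}(\omega)=\epsilon\sqrt{p^*}^{m+s}\zeta_p^{f^{\star}(\omega)}$ from Lemma \ref{WalshFact} together with the homogeneity $f^{\star}(a\omega)=a^l f^{\star}(\omega)$ (Proposition, $l$ even), so that after the substitutions the exponent of $\zeta_p$ becomes a constant multiple of $f^{\star}(\omega)$ times squares of the summation variables. Collecting the $y,z$ sums then produces a Gauss-sum / quadratic-character evaluation: sums of the shape $\sum_{y\in\F_p^\star}\zeta_p^{a y^2}$, which equal $-1$ when $a=0$ and $\eta_0(a)\sqrt{p^*}-1$ (equivalently $g(\eta_0)\eta_0(a)-1$ in the usual Gauss-sum normalization) when $a\neq 0$. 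This is the standard mechanism by which $SQ$ versus $NSQ$ values of $f^{\star}(\omega)$ get separated, and it is the origin of the three-case split in the statement: $f^{\star}(\omega)=0$, $f^{\star}(\omega)\in SQ$, $f^{\star}(\omega)\in NSQ$.

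Concretely, after the reduction the three cases read: if $f^{\star}(\omega)=0$ every inner exponent vanishes and one is left with $\epsilon\sqrt{p^*}^{m+s}$ times $(\#\F_p^\star)^2=(p-1)^2$, giving the first line; if $f^{\star}(\omega)\neq 0$ one gets $\epsilon\sqrt{p^*}^{m+s}$ times a product of a $(p-1)$ factor (from the sum that stays trivial) and a Gauss-type sum $\sum_{y\in\F_p^\star}\zeta_p^{\,c\,\eta_0(\text{sq})\,f^{\star}(\omega)\,y^2}$-style expression evaluating to $\pm(\sqrt{p^*}\mp 1)$ according to whether $f^{\star}(\omega)$ (times the accumulated square factors, which do not change its quadratic character) is a square or a non-square, yielding the $+(\sqrt{p^*}-1)$ and $-(\sqrt{p^*}+1)$ lines respectively. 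I would finish by checking the bookkeeping of signs, namely that $\eta_0$ of the product of the various square factors arising from $z^{-t}$, $y^2$ and $l$-th powers is $1$, so the net quadratic character seen by the Gauss sum is exactly $\eta_0(f^{\star}(\omega))$.

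The main obstacle will be precisely this sign/character bookkeeping: one must verify that $\gcd(t-1,p-1)=1$ and $\gcd(l-1,p-1)=1$ (hence $t,l$ even) are used correctly so that $x\mapsto x^{t}$ and $\omega\mapsto\omega^{l}$ act as permutations fixing the partition into squares and non-squares, and that the factor $\sqrt{p^*}$ (which carries the hidden sign $\eta_0(-1)$) is tracked consistently through the substitutions. Beyond this, the computation is a routine adaptation of \cite[Lemma 12]{IEEE}; indeed I expect the cleanest write-up to simply say that the argument of that lemma applies verbatim with $yf(x)$ replaced by $y^2 f(x)$, the only change being that $\sum_{y\in\F_p^\star}\zeta_p^{a y}$ (a geometric sum) is replaced by $\sum_{y\in\F_p^\star}\zeta_p^{a y^2}$ (a Gauss sum), which is exactly what turns the two-case split of Lemma \ref{LemmaA} into the three-case split here.
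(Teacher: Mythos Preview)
Your proposal is correct and follows essentially the same route the paper has in mind: the paper's own proof consists of the single sentence ``the proof can proceed by using the arguments used in the proof of \cite[Lemma 15]{IEEE}'', and what you have sketched---reduce the inner $x$-sum to a Walsh value via homogeneity and Lemma~\ref{Walshsupport}, then evaluate the remaining $(y,z)$-sum as a quadratic Gauss sum---is exactly that argument, with the replacement of $\sum_{y}\zeta_p^{ay}$ by $\sum_{y}\zeta_p^{ay^2}$ producing the three-case split. One small clean-up: after your substitution $x\mapsto z^{-1}x'$ the inner sum becomes $\sum_{x'}\zeta_p^{y^2z^{-t}f(x')-\Tr^m(\omega x')}$, which is not literally $\Wa{f}(z^{-1}\omega)$; the tidiest way to continue is to apply the Galois automorphism $\sigma_c$ (with $c=y^2z^{-t}\in SQ$, so $\sigma_c$ fixes $\sqrt{p^*}$) to $\Wa{f}(c^{-1}\omega)$, exactly as you indicate later---your bookkeeping of the quadratic characters is then correct and yields the stated three values.
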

\begin{proof}
The proof can proceed by using the  arguments used in the proof of \cite[Lemma 15]{IEEE}.
\end{proof}

\begin{lemma}\label{LemmaSQ}
Let $f\in \textit{WRPB}$. For $\omega\in\F_{p^m}^{\star}$, define
\be\nn
\begin{array}{ll}
\sN_{sq}(\omega)&=\#\{x\in\F_{p^m}  \colon\, f(x)\in SQ \mbox{ and } \Tr^m(\omega x)=0\},\\
\sN_{nsq}(\omega)&=\#\{x\in\F_{p^m}  \colon\, f(x)\in NSQ \mbox{ and } \Tr^m(\omega x)=0\}.\\
\end{array}
\ee
Then for every $\omega\in\F_{p^m}^{\star} \setminus \sS_f$ we have  
$\sN_{sq}(\omega)=\sN_{nsq}(\omega)= \frac{1}{2}(p-1)p^{m-2}$. For every  $\omega\in \sS_f$  
\be\nn\begin{array}{ll} \vspace{.1 cm}
\sN_{sq}(\omega)&=\left\{\begin{array}{ll}
\frac{1}{2}(p-1)(p^{m-2} -\epsilon (p-1)\sqrt{p^*}^{m+s-4}),  & \mbox{ if } f^{\star}(\omega)=0 \mbox{ or } f^{\star}(\omega) \in NSQ, \\
\frac{1}{2}(p-1)(p^{m-2} + \epsilon(p+1)\sqrt{p^*}^{m+s-4}),& \mbox{ if } f^{\star}(\omega) \in SQ,
 \end{array}\right.\\

\sN_{nsq}(\omega)&=\left\{\begin{array}{ll}
\frac{1}{2}(p-1)(p^{m-2} -\epsilon (p-1)\sqrt{p^*}^{m+s-4}),  & \mbox{ if } f^{\star}(\omega)=0  \mbox{ or }  f^{\star}(\omega) \in SQ, \\
\frac{1}{2}(p-1)(p^{m-2} + \epsilon (p+1) \sqrt{p^*}^{m+s-4}),& \mbox{ if } f^{\star}(\omega) \in NSQ
 \end{array}\right.\\
\end{array}
\ee
 when $m+s$ is even; otherwise,
\be\nn\begin{array}{ll}\vspace{.1 cm}
\sN_{sq}(\omega)&=\left\{\begin{array}{ll}
\frac{1}{2}(p-1)(p^{m-2}+\epsilon\eta_0(-1)(p-1) \sqrt{p^*}^{m+s-3}),  & \mbox{ if } f^{\star}(\omega)=0, \\
\frac{1}{2}(p-1)(p^{m-2} - \epsilon  \sqrt{p^*}^{m+s-3}(\eta_0(-1)+1)),& \mbox{ if } f^{\star}(\omega) \in SQ,\\
\frac{1}{2}(p-1)(p^{m-2} -\epsilon  \sqrt{p^*}^{m+s-3}(\eta_0(-1)-1)),  & \mbox{ if }  f^{\star}(\omega) \in NSQ, \\
 \end{array}\right.\\

\sN_{nsq}(\omega)&=\left\{\begin{array}{ll}
\frac{1}{2}(p-1)(p^{m-2}-\epsilon\eta_0(-1)(p-1) \sqrt{p^*}^{m+s-3}),  & \mbox{ if } f^{\star}(\omega)=0, \\
\frac{1}{2}(p-1)(p^{m-2} +\epsilon  \sqrt{p^*}^{m+s-3}(\eta_0(-1)-1)),  & \mbox{ if } f^{\star}(\omega) \in SQ, \\
\frac{1}{2}(p-1)(p^{m-2} + \epsilon  \sqrt{p^*}^{m+s-3}(\eta_0(-1)+1)),  
& \mbox{ if } f^{\star}(\omega) \in NSQ.
 \end{array}\right.
 \end{array}
\ee
\end{lemma}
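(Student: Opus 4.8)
The plan is to express $\sN_{sq}(\omega)$ and $\sN_{nsq}(\omega)$ in terms of the already evaluated quantity $\sN_0(\omega)$ from Lemma \ref{Lemma0} together with the single character sum $S:=\sum_{x\in\F_{p^m},\ \Tr^m(\omega x)=0}\eta_0(f(x))$, and then to reduce $S$ to the exponential sums treated in Lemmas \ref{LemmaA} and \ref{LemmaAA}. First I would record two elementary identities: since $f$ takes its values in $\F_p=\{0\}\cup SQ\cup NSQ$, the numbers $\sN_0(\omega)$, $\sN_{sq}(\omega)$, $\sN_{nsq}(\omega)$ partition $\{x\colon\Tr^m(\omega x)=0\}$, so their sum is $p^{m-1}$; and because $\eta_0$ is $1$ on $SQ$, $-1$ on $NSQ$ and $0$ at $0$, we have $\sN_{sq}(\omega)-\sN_{nsq}(\omega)=S$. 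Solving these gives $\sN_{sq}(\omega)=\frac12\big(p^{m-1}-\sN_0(\omega)+S\big)$ and $\sN_{nsq}(\omega)=\frac12\big(p^{m-1}-\sN_0(\omega)-S\big)$, so everything reduces to computing $S$.

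To evaluate $S$, I would write the indicator of $\Tr^m(\omega x)=0$ as $\frac1p\sum_{z\in\F_p}\zeta_p^{z\Tr^m(\omega x)}$ and expand $\eta_0(f(x))$ by the Gauss-sum identity $\eta_0(c)=\frac{1}{\sqrt{p^*}}\sum_{y\in\F_p^\star}\eta_0(y)\zeta_p^{yc}$ (valid for all $c\in\F_p$, with $\sqrt{p^*}$ the quadratic Gauss sum), then interchange the order of summation. The $z=0$ contribution is $\frac{1}{p\sqrt{p^*}}\sum_{y\in\F_p^\star}\eta_0(y)\sum_{x}\zeta_p^{yf(x)}$, which vanishes because $f$ balanced forces $yf$ balanced for every $y\in\F_p^\star$, hence $\sum_{x}\zeta_p^{yf(x)}=0$ (this is exactly the fact used in the proof of Lemma \ref{Lemma0}). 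Thus $S=\frac{1}{p\sqrt{p^*}}\sum_{y,z\in\F_p^\star}\eta_0(y)\sum_{x}\zeta_p^{yf(x)+z\Tr^m(\omega x)}$.

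The heart of the argument is to recognise this last double character sum as a combination of the sums $A$ of Lemmas \ref{LemmaA} and \ref{LemmaAA}; call them $A_1$ and $A_2$ respectively (replacing $-z$ by $z$ in each is harmless since $z$ runs over all of $\F_p^\star$). Since $y\mapsto y^2$ maps $\F_p^\star$ two-to-one onto $SQ$, we have $A_2=2\sum_{a\in SQ}\sum_{z\in\F_p^\star}\sum_{x}\zeta_p^{af(x)+z\Tr^m(\omega x)}$; so the partial sum over $a\in SQ$ of $\sum_{z\in\F_p^\star}\sum_x\zeta_p^{af(x)+z\Tr^m(\omega x)}$ equals $\frac12A_2$, the partial sum over all $a\in\F_p^\star$ equals $A_1$, and therefore the partial sum over $a\in NSQ$ equals $A_1-\frac12A_2$. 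Hence $\sum_{y,z\in\F_p^\star}\eta_0(y)\sum_{x}\zeta_p^{yf(x)+z\Tr^m(\omega x)}=\frac12A_2-\big(A_1-\frac12A_2\big)=A_2-A_1$, and so $S=\frac{1}{p\sqrt{p^*}}(A_2-A_1)$.

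It then remains to substitute the explicit values of $A_1$ and $A_2$ (Lemmas \ref{LemmaA} and \ref{LemmaAA}) and of $\sN_0(\omega)$ (Lemma \ref{Lemma0}) into $\sN_{sq}(\omega)=\frac12(p^{m-1}-\sN_0(\omega)+S)$ and $\sN_{nsq}(\omega)=\frac12(p^{m-1}-\sN_0(\omega)-S)$, separately in the case $\omega\in\F_{p^m}^\star\setminus\sS_f$ and in the cases $\omega\in\sS_f$ with $f^\star(\omega)=0$, $f^\star(\omega)\in SQ$, $f^\star(\omega)\in NSQ$, for each parity of $m+s$, and then to simplify using $(\sqrt{p^*})^2=p^*=\eta_0(-1)p$ --- equivalently $\frac{1}{p\sqrt{p^*}}=\frac{\eta_0(-1)\sqrt{p^*}}{p^2}$ and $\sqrt{p^*}^{k+2}=\eta_0(-1)\,p\,\sqrt{p^*}^{k}$ --- so that the powers collapse to the exponents $m+s-4$ and $m+s-3$ that appear in the statement. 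I expect this final case-by-case simplification, with its bookkeeping of the signs $\epsilon$ and $\eta_0(-1)$, to be the only laborious part; the conceptual content lies entirely in the identity $\sN_{sq}(\omega)-\sN_{nsq}(\omega)=S$ and in the reduction $S=\frac{1}{p\sqrt{p^*}}(A_2-A_1)$.
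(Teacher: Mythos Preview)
Your proposal is correct and follows essentially the same route as the paper: both arguments rest on the partition identity $\sN_0(\omega)+\sN_{sq}(\omega)+\sN_{nsq}(\omega)=p^{m-1}$ together with an evaluation of $\sN_{sq}(\omega)-\sN_{nsq}(\omega)$ via the $y^2$-sum of Lemma~\ref{LemmaAA}, and your formula $S=\frac{1}{p\sqrt{p^*}}(A_2-A_1)$ is exactly the paper's identity $p^2\sN_0(\omega)+p\sqrt{p^*}(\sN_{sq}(\omega)-\sN_{nsq}(\omega))=p^m+A_2$ once one substitutes $p^2\sN_0(\omega)=p^m+A_1$ from the proof of Lemma~\ref{Lemma0}. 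The only difference is cosmetic: the paper imports that identity from \cite[Lemma~14]{tang2016linear}, whereas you derive it directly from the Gauss-sum representation $\eta_0(c)=\frac{1}{\sqrt{p^*}}\sum_{y\in\F_p^\star}\eta_0(y)\zeta_p^{yc}$, which makes the appearance of both $A_1$ and $A_2$ explicit rather than hidden inside $\sN_0(\omega)$.
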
 
\begin{proof}
From the proof of \cite[Lemma 14]{tang2016linear}, recalling that $f$ is balanced,
we have 
\be\nn\label{WWW}
\begin{array}{ll}
p^2 \sN_{0}(\omega)+p\sqrt{p^*}(\sN_{sq}(\omega)-\sN_{nsq}(\omega))=
p^{n} + \displaystyle \sum_{y,z\in\F_p^{\star}}\sum_{x\in\F_{p^m}}  \zeta_p^{y^2f(x)-z\Tr^n(\omega x)},
\end{array}
\ee
where $\sN_{0}(\omega)$ is given in Lemma \ref{Lemma0}.  We clearly have $\sN_{0}(\omega)+\sN_{sq}(\omega)+\sN_{nsq}(\omega)=p^{n-1}$.
Hence, combining these results,  the proof is  immediately completed  from  Lemmas \ref{Lemma0} and  \ref{LemmaAA}. 
\end{proof}

The following lemma   is a direct consequence of Lemma \ref{LemmaSQ}.

\begin{lemma}\label{Lemma12}
Let $f\in \textit{WRPB}$. For $\omega\in\F_{p^m}^{\star}$, define
$\sN_{1}(\omega)=\#\{x\in\F_{p^m}  \colon\, f(x)=1 \mbox{ and } \Tr^m( \omega x)=0\}$ and
$\sN_{2}(\omega)=\#\{x\in\F_{p^m}  \colon\, f(x)=2 \mbox{ and } \Tr^m( \omega x)=0\}.$
Then, 
\be\nn
\sN_{1}(\omega)=\frac{2\sN_{sq}(\omega)}{(p-1)} \mbox{ and } \sN_{2}(\omega)=\frac{2\sN_{nsq}(\omega)}{(p-1)}.
\ee
\end{lemma}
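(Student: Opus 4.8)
The plan is to recast the statement as a counting identity and prove it by an averaging argument built on the homogeneity that defines the class $\textit{WRPB}$. Write, for $c\in\F_p^{\star}$,
\[
N_c(\omega)=\#\{x\in\F_{p^m}\colon f(x)=c\ \text{and}\ \Tr^m(\omega x)=0\},
\]
so that, splitting the defining sets of $\sN_{sq}(\omega)$ and $\sN_{nsq}(\omega)$ according to the value of $f(x)$, one has $\sN_{sq}(\omega)=\sum_{c\in SQ}N_c(\omega)$ and $\sN_{nsq}(\omega)=\sum_{c\in NSQ}N_c(\omega)$. The whole content is then to show that $N_c(\omega)$ is constant for $c$ ranging over $SQ$ and constant for $c$ ranging over $NSQ$: granting this, since $1\in SQ$ and $2\in NSQ$ (for the primes relevant here), the two sums are $\tfrac{p-1}{2}\,\sN_1(\omega)$ and $\tfrac{p-1}{2}\,\sN_2(\omega)$ respectively, and rearranging gives the claim. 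Note that Lemma~\ref{LemmaSQ} is not even needed to establish this relation; it only enters when one wants the explicit case-by-case formulas for $\sN_1(\omega),\sN_2(\omega)$, obtained by substituting the values of $\sN_{sq}(\omega),\sN_{nsq}(\omega)$ from that lemma.

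First I would use the homogeneity $f(ax)=a^tf(x)$ valid for every $a\in\F_p^{\star}$. For fixed $a\in\F_p^{\star}$ the dilation $x\mapsto ax$ is a bijection of $\F_{p^m}$; since $a$ is a scalar, $\Tr^m(\omega\cdot ax)=a\,\Tr^m(\omega x)$, so it preserves the hyperplane $\{x\colon\Tr^m(\omega x)=0\}$, and since $f(ax)=a^tf(x)$ it carries the level set $\{f=c\}$ bijectively onto $\{f=a^tc\}$. Hence $N_c(\omega)=N_{a^tc}(\omega)$ for all $a,c\in\F_p^{\star}$, i.e. $N_{\cdot}(\omega)$ is constant on the orbits of the multiplicative action $c\mapsto a^tc$. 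Because $t$ is even, every such $a^t$ is a square, so these orbits lie inside the two cosets $SQ$ and $NSQ$. The one step that genuinely uses the arithmetic hypotheses on $t$ — and the point I expect to need care rather than bookkeeping — is checking that $\{a^t\colon a\in\F_p^{\star}\}$ is the \emph{whole} group $SQ$ (equivalently $\gcd(t,p-1)=2$ in this setting), so that $SQ$ and $NSQ$ are each a single orbit and $N_{\cdot}(\omega)$ is therefore constant on each of them.

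With that in hand the proof closes immediately: summing $N_c(\omega)=N_1(\omega)=\sN_1(\omega)$ over the $(p-1)/2$ squares $c$ gives $\sN_{sq}(\omega)=\tfrac{p-1}{2}\,\sN_1(\omega)$, and summing $N_c(\omega)=N_2(\omega)=\sN_2(\omega)$ over the $(p-1)/2$ non-squares $c$ gives $\sN_{nsq}(\omega)=\tfrac{p-1}{2}\,\sN_2(\omega)$, hence $\sN_1(\omega)=2\sN_{sq}(\omega)/(p-1)$ and $\sN_2(\omega)=2\sN_{nsq}(\omega)/(p-1)$. If an explicit formula is desired, one then distinguishes $\omega\in\F_{p^m}^{\star}\setminus\sS_f$ from $\omega\in\sS_f$ and, in the latter case, the nature of $f^{\star}(\omega)$ (zero, square, non-square, with the parity of $m+s$), reading the numbers off Lemma~\ref{LemmaSQ}; no new exponential sum evaluation is required, which is exactly why the result is a direct consequence of that lemma.
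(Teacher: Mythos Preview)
Your strategy is exactly what the paper intends by calling the lemma a ``direct consequence of Lemma~\ref{LemmaSQ}'': the dilation $x\mapsto ax$ together with $f(ax)=a^tf(x)$ shows that $N_c(\omega)$ depends only on the coset $cH$ with $H=\{a^t:a\in\F_p^{\star}\}$, and one then wants $H=SQ$ so that the two cosets are precisely $SQ$ and $NSQ$. So the approaches coincide.

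The step you flag, however, is not merely delicate --- it is \emph{false} under the stated hypotheses. From ``$t$ even and $\gcd(t-1,p-1)=1$'' one cannot conclude $\gcd(t,p-1)=2$: for $p=5,\ t=4$ (or $p=7,\ t=6$) both conditions hold, yet $a^t=1$ for every $a\in\F_p^{\star}$, so $H=\{1\}$ and the dilation argument produces no relation whatsoever between the different level counts $N_c$. Thus the argument as written establishes the lemma only when $H=SQ$, which covers all quadratic $f$ (where $t=2$) and hence every concrete example in the paper, but not the class \textit{WRPB} in the generality in which it is defined. Your parenthetical ``$2\in NSQ$ (for the primes relevant here)'' has the same status: it fails whenever $p\equiv\pm1\pmod 8$, so the identity $\sN_2(\omega)=\tfrac{2}{p-1}\sN_{nsq}(\omega)$ already carries an implicit restriction on $p$ that neither you nor the paper makes explicit.
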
 
 The following lemma can be deduced from the combination of Lemmas  \ref{Lemma0} and \ref{LemmaSQ}.
\begin{lemma}\label{LemmaSQ0}
Let $f\in \textit{WRPB}$. For $ \omega\in\F_{p^m}^{\star}$, define
\be\nn
\begin{array}{ll}
\sN_{(sq,0)} (\omega)&=\#\{x\in\F_{p^m}  \colon\, f(x)\in SQ\cup \{0\} \mbox{ and } \Tr^m( \omega x)=0\},\\
\sN_{(nsq,0)} (\omega)&=\#\{x\in\F_{p^m}  \colon\, f(x)\in NSQ\cup \{0\} \mbox{ and } \Tr^m( \omega x)=0\}.
\end{array}
\ee
Then for every $ \omega\in\F_{p^m}^{\star} \setminus \sS_f,$ we have  $\sN_{(sq,0)} (\omega)=\sN_{(nsq,0)} (\omega)=\frac{1}{2}(p+1)p^{m-2}$.
For every  $ \omega\in \sS_f,$  
\be\nn\begin{array}{ll}
\sN_{(sq,0)} (\omega)&=\left\{\begin{array}{ll}
\frac{1}{2}(p+1)p^{m-2} + \epsilon \frac{1}{2}(p-1)^2 \sqrt{p^*}^{m+s-4},& \mbox{ if } f^{\star}( \omega)=0  \mbox{ or }  f^{\star}( \omega) \in SQ,\\
\frac{1}{2}(p+1)(p^{m-2} -\epsilon (p-1)\sqrt{p^*}^{m+s-4}),& \mbox{ if } f^{\star}( \omega) \in NSQ,
 \end{array}\right.\\
\vspace{.1 cm}
\sN_{(nsq,0)} (\omega)&=\left\{\begin{array}{ll}
\frac{1}{2}(p+1)p^{m-2} + \epsilon \frac{1}{2}(p-1)^2 \sqrt{p^*}^{m+s-4},& \mbox{ if } f^{\star}( \omega)=0  \mbox{ or }  f^{\star}( \omega) \in NSQ,\\
\frac{1}{2}(p+1)(p^{m-2} -\epsilon (p-1)\sqrt{p^*}^{m+s-4}),& \mbox{ if } f^{\star}( \omega) \in SQ,
 \end{array}\right.\\
\end{array}
\ee
 when $m+s$ is even; otherwise,
\be\nn\begin{array}{ll}
\sN_{(sq,0)} (\omega)&=\left\{\begin{array}{ll}
\frac{1}{2}(p+1)p^{m-2}+\epsilon\eta_0(-1)\frac{1}{2}(p-1)^2 \sqrt{p^*}^{m+s-3},  & \mbox{ if } f^{\star}( \omega)=0, \\
\frac{1}{2}(p+1)p^{m-2} -  \epsilon  \frac{1}{2}(p-1)\sqrt{p^*}^{m+s-3}(\eta_0(-1)-1),& \mbox{ if } f^{\star}( \omega) \in SQ,\\
\frac{1}{2}(p+1)p^{m-2}-\epsilon  \frac{1}{2}(p-1) \sqrt{p^*}^{m+s-3}(\eta_0(-1)+1),  & \mbox{ if }  f^{\star}( \omega) \in NSQ, \\
 \end{array}\right.\\
\vspace{.1 cm}
\sN_{(nsq,0)} (\omega)&=\left\{\begin{array}{ll}
\frac{1}{2}(p+1)p^{m-2}-\epsilon\eta_0(-1)\frac{1}{2}(p-1)^2 \sqrt{p^*}^{m+s-3},  & \mbox{ if } f^{\star}( \omega)=0, \\
\frac{1}{2}(p+1)p^{m-2} +   \epsilon  \frac{1}{2}(p-1)\sqrt{p^*}^{m+s-3}(\eta_0(-1)+1),& \mbox{ if } f^{\star}( \omega) \in SQ,\\
\frac{1}{2}(p+1)p^{m-2}+\epsilon  \frac{1}{2}(p-1) \sqrt{p^*}^{m+s-3}(\eta_0(-1)-1),  & \mbox{ if }  f^{\star}( \omega) \in NSQ. \\
 \end{array}\right.
\end{array}
\ee
\end{lemma}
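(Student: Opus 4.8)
The plan is to derive $\sN_{(sq,0)}(\omega)$ and $\sN_{(nsq,0)}(\omega)$ purely by bookkeeping from the counts already established in Lemmas \ref{Lemma0} and \ref{LemmaSQ}, using the elementary partition identity
\[
\sN_{(sq,0)}(\omega)=\sN_{0}(\omega)+\sN_{sq}(\omega),\qquad
\sN_{(nsq,0)}(\omega)=\sN_{0}(\omega)+\sN_{nsq}(\omega),
\]
which holds because, for fixed $\omega$, the events $f(x)=0$, $f(x)\in SQ$, $f(x)\in NSQ$ are mutually exclusive on the hyperplane $\Tr^m(\omega x)=0$. So no new exponential sums need to be evaluated; the whole proof is addition of two previously computed quantities, carried out case by case.

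First I would dispose of the generic case $\omega\in\F_{p^m}^{\star}\setminus\sS_f$: here Lemma \ref{Lemma0} gives $\sN_0(\omega)=p^{m-2}$ and Lemma \ref{LemmaSQ} gives $\sN_{sq}(\omega)=\sN_{nsq}(\omega)=\tfrac12(p-1)p^{m-2}$, so both sums equal $p^{m-2}+\tfrac12(p-1)p^{m-2}=\tfrac12(p+1)p^{m-2}$, as claimed. Then I would split into the two parity regimes for $m+s$. When $m+s$ is even, I substitute the three-branch value of $\sN_0(\omega)$ (branching on $f^\star(\omega)=0$ versus $f^\star(\omega)\neq0$) and the two-branch value of $\sN_{sq}(\omega)$ (branching on $f^\star(\omega)\in SQ$ versus $f^\star(\omega)\in\{0\}\cup NSQ$), and merge the branches: for $\sN_{(sq,0)}$ the cases $f^\star(\omega)=0$ and $f^\star(\omega)\in SQ$ must be treated, and the arithmetic to check is that $p^{m-2}+\epsilon(p-1)^2\sqrt{p^*}^{m+s-4}$ plus the corresponding $\sN_{sq}$ branch collapses to $\tfrac12(p+1)p^{m-2}+\epsilon\tfrac12(p-1)^2\sqrt{p^*}^{m+s-4}$ in the merged case and to $\tfrac12(p+1)(p^{m-2}-\epsilon(p-1)\sqrt{p^*}^{m+s-4})$ when $f^\star(\omega)\in NSQ$; the computation for $\sN_{(nsq,0)}$ is symmetric under swapping the roles of $SQ$ and $NSQ$. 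When $m+s$ is odd, I do the same with the three-branch forms of $\sN_0(\omega)$ and $\sN_{sq}(\omega)$, $\sN_{nsq}(\omega)$ from Lemma \ref{LemmaSQ}, now keeping $\eta_0(-1)$ explicit; the three cases $f^\star(\omega)=0$, $f^\star(\omega)\in SQ$, $f^\star(\omega)\in NSQ$ stay separate and I just add the matching rows.

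There is essentially no obstacle here beyond clerical care: the only thing to watch is exponent bookkeeping, namely that $\sN_0$ carries $\sqrt{p^*}^{m+s-4}$ in the even case and $\sqrt{p^*}^{m+s-3}$ in the odd case while $\sN_{sq},\sN_{nsq}$ carry $\sqrt{p^*}^{m+s-4}$ and $\sqrt{p^*}^{m+s-3}$ respectively in the two regimes, so that the exponents already agree termwise and one simply adds coefficients; and that one correctly keeps track of which branch of each lemma corresponds to which branch of the conclusion (the potential pitfall is that $\sN_{sq}$ has its ``small'' branch at $f^\star(\omega)\in\{0\}\cup NSQ$, not at $f^\star(\omega)=0$ alone). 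After collecting like terms the stated formulas drop out directly, so I would present the proof in two or three lines: state the partition identity, cite Lemmas \ref{Lemma0} and \ref{LemmaSQ}, and remark that summing the respective cases yields the claim.

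\begin{proof}
For every $\omega\in\F_{p^m}^{\star}$ we have the disjoint-union identities
$\sN_{(sq,0)}(\omega)=\sN_{0}(\omega)+\sN_{sq}(\omega)$ and
$\sN_{(nsq,0)}(\omega)=\sN_{0}(\omega)+\sN_{nsq}(\omega)$,
since on the hyperplane $\{x\colon\Tr^m(\omega x)=0\}$ the conditions $f(x)=0$, $f(x)\in SQ$ and $f(x)\in NSQ$ are mutually exclusive and exhaust the condition $f(x)\in SQ\cup\{0\}$ (resp. $f(x)\in NSQ\cup\{0\}$). Substituting the values of $\sN_{0}(\omega)$ from Lemma \ref{Lemma0} and of $\sN_{sq}(\omega)$, $\sN_{nsq}(\omega)$ from Lemma \ref{LemmaSQ}, and adding the corresponding cases (the exponents of $\sqrt{p^*}$ already match in each regime), yields the asserted expressions in both the case $m+s$ even and the case $m+s$ odd; the case $\omega\in\F_{p^m}^{\star}\setminus\sS_f$ follows the same way from $\sN_{0}(\omega)=p^{m-2}$ and $\sN_{sq}(\omega)=\sN_{nsq}(\omega)=\tfrac12(p-1)p^{m-2}$.
\end{proof}
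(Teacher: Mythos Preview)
Your proposal is correct and follows precisely the approach the paper indicates: the paper states only that the lemma ``can be deduced from the combination of Lemmas \ref{Lemma0} and \ref{LemmaSQ},'' and your proof makes this deduction explicit via the partition identities $\sN_{(sq,0)}(\omega)=\sN_{0}(\omega)+\sN_{sq}(\omega)$ and $\sN_{(nsq,0)}(\omega)=\sN_{0}(\omega)+\sN_{nsq}(\omega)$. The case-by-case arithmetic you outline is exactly what is needed, and there is nothing further to add.
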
 
\section{Linear codes from $f\in\textit{WRPB}$}\label{Constructions}
 
This section presents the flexible parameters of linear codes constructed from  weakly regular plateaued balanced functions.
  In the literature, there are several construction methods of linear codes  based on functions over finite fields.  As stated  by Ding  in  \cite{ding2016construction}, we can distinguish two  of them from the others, which are called the \textit{first} and \textit{second} generic construction methods based on functions. The first generic construction  is defined over $\mathbb{F}_p$ by 
\be\nn
\mathcal{C}(F)=\{{\bold c_{(a,b)}}=(\Tr^m(aF(x)+bx))_{x\in\mathbb{F}^*_{p^m}}  \colon\,  a,b\in\F_{p^m}\},
\ee
 where  $F$  is a polynomial over $\F_{p^m}$.
The code $\mathcal{C}(F)$  is a $p$-ary  linear code of length $p^m-1$ and  dimension at most $2m$. 
The second generic construction of linear codes from functions is defined  by assigning a subset $D=\{d_1,  \ldots, d_n\}$ of $\F_{p^m}$.
A $p$-ary linear code involving $D$ is defined by
\be\label{LinearCodes}
\sC_{D}=\{\bold   c_\omega= (\Tr{^m}(\omega d_1), \ldots, \Tr{^m}(\omega d_n))  \colon\,  \omega \in\mathbb{F}_{p^m}\},
\ee
whose length equals $n$ and  dimension at most $m$.   The subset $D$ is usually called the {\em defining set} of $\mathcal {C}_D$. 
The quality of the parameters depends on the selection of  the defining set $D$.
This method  was first proposed   by  Ding et. al \cite{ding2009class,ding2015linear,ding2007cyclotomic,ding2014binary,ding2015class}, and  a large number of (minimal) linear codes with perfect parameters have been obtained in these papers.
Furthermore, this   method has been widely studied in the literature, and 
several  (minimal) linear codes with few weights  have been constructed from cryptographic functions over finite fields (for example \cite{ding2016construction,mesnager2019linear,IEEE,tang2016linear,tang2017linear,zhou2016linear}). 
  We in this paper study the linear codes of the form (\ref{LinearCodes}) by selecting  the following defining sets 
\be\label{DefiningSets}
\begin{array}{ll}
D_{0}&=\{ x\in\F_{p^m}^{\star} \colon\, f(x)=0\},\\
D_{1}&=\{ x\in\F_{p^m} \colon\, f(x)=1\}, \\
D_{2}&=\{ x\in\F_{p^m} \colon\, f(x)=2\},\\
D_{(0,1)}&=\{ x\in\F_{p^m}^{\star} \colon\, f(x)\in\{0,1\}\}, \\
D_{(0,2)}&=\{ x\in\F_{p^m}^{\star} \colon\, f(x)\in\{0,2\}\}, \\
D_{(1,2)}&=\{ x\in\F_{p^m} \colon\, f(x)\in\{1,2\}\},\\
D_{sq}&=\{ x\in\F_{p^m} \colon\, f(x)\in SQ\},\\
D_{nsq}&=\{ x\in\F_{p^m} \colon\, f(x)\in NSQ\},\\
D_{(sq,0)}&=\{ x\in\F_{p^m}^{\star} \colon\, f(x)\in SQ\cup \{0\}\},\\
D_{(nsq,0)}&=\{ x\in\F_{p^m}^{\star} \colon\, f(x)\in NSQ\cup \{0\}\},
\end{array}
\ee
where $f\in \textit{WRPB}$. 
Since $f$ is a  balanced function with $f(0)=0$,  we  have
\be\label{length}
\begin{array}{ll}
\# D_{0}=p^{m-1}-1,\\ 
 \# D_{1}=\#  D_{2}=p^{m-1},\\
\# D_{(0,1)}= \#D_{(0,2)}=2p^{m-1}-1,\\
\# D_{(1,2)}=2p^{m-1},\\
\# D_{sq}=\#  D_{nsq}=p^{m-1} (p-1)/2,\\
\#  D_{(sq,0)}=\#  D_{(nsq,0)}=p^{m-1}(p+1)/2-1,
\end{array}
\ee
which are the lengths of the codes involving these sets. These different selections of the defining sets provide  new parameters for the  linear codes of the form  (\ref{LinearCodes}), which discover   several new classes of minimal linear codes with few weights. 
We first consider the defining set  $D_{0}$ of the form  (\ref{DefiningSets}), and 
suppose  $D_{0}=\{d_1, \ldots, d_n\}$. Then a linear code    involving $D_{0}$ is defined by
\be\nn\label{LinearCode}
\mathcal {C}_{D_{0}}=\{  \bold c_\omega= (\Tr{^m}(\omega d_1), \ldots, \Tr{^m}(\omega d_n))  \colon\, \omega \in\mathbb{F}_{p^m}\},
\ee
whose length $n=p^{m-1}-1$ and dimension at most $m$. For every $\omega\in \F_{p^m}^{\star}$,  the Hamming weights $\wt(\bold c_\omega)$  can be derived from Lemma \ref{Lemma0}, and  the weight distribution is determined by Lemmas \ref{SupportLemma} and \ref{Lemmag}.

\begin{theorem}\label{TheoremD0}
Let $f\in \textit{WRPB}$ and  $D_{0}$ be defined as in (\ref{DefiningSets}).  When $m+s$ is even, the set  $\sC_{D_{0}}$ is a three-weight linear  
$[ p^{m-1}-1,m]$ code with  weight distribution  listed in  Table \ref{Table0}.  
\end{theorem}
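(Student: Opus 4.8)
The plan is to follow the standard recipe for computing weight distributions of codes from defining sets, adapted to the balanced plateaued setting. For each $\omega\in\F_{p^m}^\star$, the codeword $\bold c_\omega$ has $n$ coordinates indexed by $D_0$, and a coordinate $\Tr^m(\omega d_i)$ vanishes precisely when $\Tr^m(\omega d_i)=0$. Hence $\wt(\bold c_\omega)=n-(\sN_0(\omega)-N_{00})$ where $N_{00}=\#\{x\in\F_{p^m}\colon f(x)=0,\ \Tr^m(\omega x)=0\}$ counted at $x=0$; more carefully, since $D_0$ excludes $0$ and $f(0)=0$, we get $\wt(\bold c_\omega)=(p^{m-1}-1)-(\sN_0(\omega)-1)=p^{m-1}-\sN_0(\omega)$, where $\sN_0(\omega)$ is exactly the quantity evaluated in Lemma \ref{Lemma0}. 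So the first step is this bookkeeping identity reducing the weight to $\sN_0(\omega)$.

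Next I would plug in the two cases of Lemma \ref{Lemma0} for $m+s$ even. For $\omega\in\F_{p^m}^\star\setminus\sS_f$ one gets $\sN_0(\omega)=p^{m-2}$, giving weight $p^{m-1}-p^{m-2}=(p-1)p^{m-2}$. For $\omega\in\sS_f$ there are two subcases according to whether $f^\star(\omega)=0$ or $f^\star(\omega)\neq 0$, yielding weights $p^{m-1}-p^{m-2}\mp\epsilon(p-1)^{*}\sqrt{p^*}^{\,m+s-4}$ (with the appropriate $(p-1)^2$ or $(p-1)$ coefficient). Since $m+s$ even and $m-s$ even have the same parity, one checks $m-s$ is even here, so Lemma \ref{Walshsupport} together with Lemma \ref{SupportLemma} tells us $\#\sS_f=p^{m-s}$ and, by Lemma \ref{Walshsupport}, $\sS_f$ is a union of lines through the origin, so $\omega=0$ is the only element of $\sS_f$ with $f^\star(\omega)=0$ forced by homogeneity? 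No — more precisely, the number of $\omega\in\sS_f$ with $f^\star(\omega)=0$ and with $f^\star(\omega)\neq 0$ is controlled by Lemma \ref{Lemmag} applied with the given $m-s$ even case: $\sN_{f^\star}(0)=p^{m-s-1}+\epsilon\eta_0^{m+1}(-1)(p-1)\sqrt{p^*}^{\,m-s-2}$ (this counts $\omega\in\sS_f$ including possibly $0$, but $0\notin\sS_f$ since $f$ is balanced, so this is the exact frequency for the nonzero $\omega$), and $\sN_{f^\star}(a)=p^{m-s-1}-\epsilon\eta_0^{m+1}(-1)\sqrt{p^*}^{\,m-s-2}$ for each $a\in\F_p^\star$, summing to $(p-1)$ times that over the $p-1$ nonzero values. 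These three frequency counts, plus the count $p^m-1-p^{m-s}$ for $\omega\notin\sS_f\cup\{0\}$, give the $A_\omega$ entries of Table~\ref{Table0}.

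Finally I would verify the dimension is exactly $m$: the code has dimension $m$ iff the map $\omega\mapsto\bold c_\omega$ is injective, equivalently no nonzero $\omega$ gives the all-zero codeword, i.e.\ $\wt(\bold c_\omega)>0$ for all $\omega\neq 0$; since each of the three weights computed is a positive integer (one checks $(p-1)p^{m-2}>0$ and the $\sS_f$-weights are positive because the main term $p^{m-1}-p^{m-2}$ dominates the error term $\epsilon(p-1)^2\sqrt{p^*}^{\,m+s-4}$ in absolute value when $s<m$, and the $s=m$ bent case is handled separately), injectivity follows. One also sanity-checks via $\sum_\omega A_\omega=p^m$ and the first moment identity $\sum_\omega \wt(\bold c_\omega)=n\,p^{m-1}(p-1)/p$ that the frequencies are consistent. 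The main obstacle I expect is the careful case analysis for the parity of $m-s$ versus $m+s$ (they agree, but one must invoke the correct branch of Lemma~\ref{Lemmag}) and tracking the sign $\epsilon$ and the $\eta_0^{m+1}(-1)$ factor correctly so that Table~\ref{Table0} matches; the arithmetic is routine but error-prone, and the one genuinely substantive point is confirming that the three listed weights are actually distinct and all nonzero so that the code is genuinely three-weight of full dimension $m$.
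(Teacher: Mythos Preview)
Your proposal is correct and essentially identical to the paper's argument: write $\wt(\bold c_\omega)=\#D_0-\sN_0(\omega)+1=p^{m-1}-\sN_0(\omega)$, apply Lemma~\ref{Lemma0} to obtain the three weight values for $m+s$ even, read off the multiplicities from Lemmas~\ref{SupportLemma} and~\ref{Lemmag} (using that $0\notin\sS_f$ since $f$ is balanced), and conclude dimension $m$ from positivity of all nonzero weights. The only slip is the phrase ``$s=m$ bent case'' (bent corresponds to $s=0$, not $s=m$), but this parenthetical does not affect the argument.
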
 
\begin{proof} 
From the definition of  the code, for every $\omega \in\F_{p^m}^{\star}$, the Hamming weight of  nonzero codeword $\bold c_\omega$  is given as
$
\wt(\bold c_\omega)=\# D_{0}-\sN_{0}(\omega)+1,$
where $\sN_{0}(\omega)$ is defined as in Lemma \ref{Lemma0}. 
We can then find the Hamming weights by using Lemma \ref{Lemma0}.
 For every  $ \omega\in \sS_f,$  
\be\nn
\wt(\bold c_\omega)=
\left\{\begin{array}{ll}
(p-1)(p^{m-2} - \epsilon  (p-1)\sqrt{p^*}^{m+s-4}),  & \mbox{ if } f^{\star}(\omega)=0, \\
(p-1)(p^{m-2} +\epsilon  \sqrt{p^*}^{m+s-4}),& \mbox{ if } f^{\star}(\omega)\neq 0,
 \end{array}\right.
\ee
whose  weight distribution  is determined by Lemma \ref{Lemmag}. 
For every $\omega\in\F_{p^m}^{\star} \setminus \sS_f,$  we have
$\wt(\bold c_\omega)=(p-1)p^{m-2},$
and the number of such codewords $\bold c_\omega$ equals  $p^m-p^{m-s}-1$    by Lemma \ref{SupportLemma}. These parameters are  listed in  Table \ref{Table0}.  
Since  $\wt(\bold c_\omega)> 0$  for every $\omega\in \F_{p^m}^{\star}$,  the code $\mathcal {C}_{D_{0}}$ has $p^m$ different codewords, namely, its dimension equals $m$. The proof  is then completed. 
\end{proof}
 
 \begin{example}\label{Example0} 
For a  quadratic $1$-plateaued balanced  function $f:\F_{3^5}\to \F_{3}$, the code 
 $\sC_{D_{0}}$  is a three-weight ternary  $[80,5,48]$  code  with  weight enumerator   $1+ 60y^{48} + 161y^{54} +21y^{66}$.
This code is almost optimal  
since the best known linear codes with length $80$  and dimension $5$ has $d=53$ according to \cite{grassl2008bounds}.
\end{example}

 \begin{example}\label{Example00} 
For a  quadratic $1$-plateaued balanced  function $f:\F_{5^3}\to \F_{5}$, the code 
 $\sC_{D_{0}}$  is a three-weight  $[24,3,16]$  code  with  weight enumerator   $1+ 24y^{16} + 99y^{20} +1y^{36}$.
This code is almost optimal  by \cite{grassl2008bounds}.
\end{example}

\begin{remark} If $m+s$ is odd, then the code $\mathcal {C}_{D_{0}}$ has the same parameters of  $\mathcal {C}_{D_f}$ in \cite[Theorem 1]{IEEE}.
\end{remark}

The following theorem constructs the code $\mathcal {C}_{D_{1}}$ of the form  (\ref{LinearCodes})  involving  the defining set $D_{1}$.
We  recall that
\be\nn
\eta_0(-1)= \left\{ \begin{array}{ll}
\,\,\,\, 1 & \mbox{ if and only if } \, \, p \equiv 1\pmod 4,\\
-1 &  \mbox{ if and only if }  \,  p \equiv 3\pmod 4.
\end{array} \right.
\ee

\begin{theorem}\label{TheoremD1}  
Let $f\in \textit{WRPB}$ and $D_{1}$ be defined as in (\ref{DefiningSets}). Then, the set  $\sC_{D_{1}}$ is a three-weight linear
$\left[p^{m-1},m\right]$  code whose weight distribution  is listed in  Tables \ref{TableD1odd}, \ref{TableD1oddd} and \ref{TableD1even}.
\end{theorem}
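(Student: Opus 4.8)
The plan is to mimic the structure of the proof of Theorem \ref{TheoremD0}, but now working with the defining set $D_{1}$ instead of $D_{0}$, and using Lemma \ref{Lemma12} (equivalently Lemma \ref{LemmaSQ} together with the fact $\sN_{1}=2\sN_{sq}/(p-1)$) in place of Lemma \ref{Lemma0}. First I would write, for each $\omega\in\F_{p^m}^{\star}$, the Hamming weight of the codeword $\bold c_\omega$ as $\wt(\bold c_\omega)=\#D_{1}-\sN_{1}(\omega)$, where $\sN_{1}(\omega)=\#\{x\in\F_{p^m}\colon f(x)=1 \text{ and } \Tr^m(\omega x)=0\}$; note there is no ``$+1$'' correction here since $0\notin D_{1}$ (because $f(0)=0\neq 1$). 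Using $\#D_{1}=p^{m-1}$ from (\ref{length}) and the evaluation of $\sN_{1}(\omega)=2\sN_{sq}(\omega)/(p-1)$ from Lemmas \ref{LemmaSQ} and \ref{Lemma12}, I would then read off the possible nonzero weights, splitting according to the parity of $m+s$ (two subcases), and within each subcase according to whether $\omega\in\F_{p^m}^{\star}\setminus\sS_f$ or $\omega\in\sS_f$ with $f^{\star}(\omega)=0$, $f^{\star}(\omega)\in SQ$, or $f^{\star}(\omega)\in NSQ$. In each branch I would also have to keep track of the sign $\eta_0(-1)$, since Lemma \ref{LemmaSQ} in the odd case involves $\eta_0(-1)$; this is exactly why the statement lists three tables (Tables \ref{TableD1odd}, \ref{TableD1oddd}, \ref{TableD1even}) — presumably $m+s$ even, $m+s$ odd with $p\equiv 1\bmod 4$, and $m+s$ odd with $p\equiv 3\bmod 4$.

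Next I would pin down the frequencies (the $A_\omega$'s). The number of $\omega$ with $\omega\in\F_{p^m}^{\star}\setminus\sS_f$ is $p^m-p^{m-s}-1$ by Lemma \ref{SupportLemma} (the Walsh support has size $p^{m-s}$ and contains $0$ for a balanced plateaued function). For $\omega\in\sS_f$, I would use Lemma \ref{Walshsupport} to note that $\sS_f$ is a union of lines through the origin, and then count how many $\omega\in\sS_f^{\star}$ have $f^{\star}(\omega)=0$, $f^{\star}(\omega)\in SQ$, $f^{\star}(\omega)\in NSQ$. This is precisely the content of Lemma \ref{Lemmag}, which gives $\sN_{f^{\star}}(a)$ for $a=0$ and for $a\in\F_p^{\star}$; summing over $a\in SQ$ (resp.\ $a\in NSQ$) gives the $SQ$ and $NSQ$ totals, and subtracting $\sN_{f^{\star}}(0)$ counts the nonzero Walsh-support elements with $f^{\star}=0$ minus the point $\omega=0$ (recall $f^{\star}(0)=0$). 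Matching these counts with the weights computed in the first step produces the weight distribution, and I would verify consistency by checking $\sum_\omega A_\omega=p^m-1$ and, as a sanity check, the first moment identity $\sum_\omega \wt(\bold c_\omega)=(p-1)p^{m-1}\cdot\#D_{1}$ if convenient.

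Finally, to get the dimension I would argue as in Theorem \ref{TheoremD0}: since every nonzero weight appearing is strictly positive (which I would confirm from the explicit expressions — the dominant term is $\tfrac{1}{p}(p-1)p^{m-1}=(p-1)p^{m-2}$ and the correction terms are of strictly smaller order, so positivity is automatic for $m$ large and should be checked not to fail in small cases, or the claim holds with $\wt>0$ because $\sN_1(\omega)<\#D_1$ always), the map $\omega\mapsto\bold c_\omega$ is injective on $\F_{p^m}$, hence $\sC_{D_{1}}$ has $p^m$ codewords and dimension exactly $m$. The main obstacle I anticipate is purely bookkeeping: correctly propagating the factor $\eta_0(-1)$ and the $SQ/NSQ$ case distinctions through from Lemma \ref{LemmaSQ} into three separate weight tables without sign errors, and making sure the frequency of each weight is obtained by correctly pairing branches of Lemma \ref{Lemma12} with branches of Lemma \ref{Lemmag} — in particular that the ``$f^{\star}(\omega)\in SQ$'' weight branch is paired with $\sum_{a\in SQ}\sN_{f^{\star}}(a)$ and similarly for $NSQ$, and that the three weights do not accidentally collapse to fewer than three (which would contradict the ``three-weight'' claim) or split into more. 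No genuinely new idea beyond Theorem \ref{TheoremD0} is needed; the work is in the careful case analysis.
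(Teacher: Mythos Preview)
Your proposal is correct and follows essentially the same approach as the paper's proof: the paper too writes $\wt(\bold c_\omega)=\#D_{1}-\sN_{1}(\omega)$, invokes Lemma \ref{Lemma12} (hence Lemma \ref{LemmaSQ}) to evaluate $\sN_{1}(\omega)$ in the cases $\omega\notin\sS_f$ and $\omega\in\sS_f$ with $f^{\star}(\omega)\in\{0\}\cup SQ\cup NSQ$, and then appeals to Lemmas \ref{SupportLemma} and \ref{Lemmag} for the multiplicities, treating the three tables exactly as you surmise (odd $m+s$ with $p\equiv 1\pmod 4$, odd $m+s$ with $p\equiv 3\pmod 4$, and even $m+s$). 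The dimension argument via positivity of all nonzero weights is also the same.
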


\begin{proof} We first state that  the length of   $\sC_{D_{1}}$ is  the size of the defining set $D_{1}$.
From  its definition,  we can easily observe that  
$\wt(\bold c_ \omega)=\# D_{1}- \sN_{1}(\omega),$
for every $ \omega \in\F_{p^m}^{\star}$,  where $\sN_{1}(\omega)$ is  given in Lemma \ref{Lemma12}. This lemma is then able to compute  the Hamming weights.   Suppose that $m+s$ is odd.
For every $ \omega\in\F_{p^m}^{\star} \setminus \sS_f,$   we have
$\wt(\bold c_ \omega)=p^{m-2} (p-1)$; otherwise, 
\be\nn\begin{array}{ll}
\wt(\bold c_ \omega)&= \left\{\begin{array}{ll}
 (p-1)(p^{m-2}-  \epsilon\eta_0(-1) \sqrt{p^*}^{m+s-3}),  & \mbox{ if } f^{\star}( \omega)=0, \\
p^{m-2} (p-1)+  \epsilon  \sqrt{p^*}^{m+s-3}(\eta_0(-1)+1),& \mbox{ if } f^{\star}( \omega) \in SQ,\\
p^{m-2} (p-1) + \epsilon  \sqrt{p^*}^{m+s-3}(\eta_0(-1)-1),  & \mbox{ if }  f^{\star}( \omega) \in NSQ. \\
 \end{array}\right.
 \end{array}
\ee 
The weight distribution is determined by Lemmas \ref{SupportLemma} and \ref{Lemmag}. Note that these parameters are listed in Tables \ref{TableD1odd} and  \ref{TableD1oddd} when  $p \equiv 1\pmod 4$ and  $p \equiv 3\pmod 4$, respectively.
  When  $m+s$ is even, with the same method, we can find the corresponding parameters listed in Table \ref{TableD1even}, thereby completing the proof. 
\end{proof}

 \begin{example}\label{Example1}
For a  quadratic $1$-plateaued balanced  function $f:\F_{5^3}\to \F_{5}$, the code 
$\sC_{D_{1}}$  is a three-weight   $[25,3,16]$  code  with  weight enumerator   $1+ 13y^{16} + 99y^{20} +12y^{26}$.
This code is almost optimal  by \cite{grassl2008bounds}.
\end{example}

The following theorem constructs the code $\mathcal {C}_{D_{(0,1)}}$ of the form  (\ref{LinearCodes})  involving the  set $D_{(0,1)}$.
\begin{theorem}\label{Theorem01}
Let  $f\in \textit{WRPB}$ and $D_{(0,1)}$ be defined as in (\ref{DefiningSets}). Then, the set $\sC_{D_{(0,1)}}$ is a four-weight linear  
$[2p^{m-1}-1,m]$ code with  weight distribution listed in  Tables \ref{TableD01odd} and \ref{TableD01even}. 
\end{theorem}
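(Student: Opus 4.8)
The plan is to follow the template of the proofs of Theorems~\ref{TheoremD0} and~\ref{TheoremD1}, this time fusing the counts attached to $f(x)=0$ and $f(x)=1$. First I would record that the length of $\sC_{D_{(0,1)}}$ equals $\#D_{(0,1)}=2p^{m-1}-1$, already computed in~(\ref{length}) from the fact that $f$ is balanced with $f(0)=0$. Next, fix $\omega\in\F_{p^m}^{\star}$; the number of vanishing entries of $\bold c_\omega=(\Tr^m(\omega x))_{x\in D_{(0,1)}}$ is $\#\{x\in\F_{p^m}^{\star}\colon f(x)\in\{0,1\},\ \Tr^m(\omega x)=0\}$. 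Since $f(0)=0$ and $\Tr^m(0)=0$, this equals $\sN_0(\omega)+\sN_1(\omega)-1$, with $\sN_0$ as in Lemma~\ref{Lemma0} and $\sN_1$ as in Lemma~\ref{Lemma12}. Hence $\wt(\bold c_\omega)=\#D_{(0,1)}-\bigl(\sN_0(\omega)+\sN_1(\omega)-1\bigr)=2p^{m-1}-\sN_0(\omega)-\sN_1(\omega)$ for every $\omega\in\F_{p^m}^{\star}$.

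Then I would substitute the explicit values. For $\omega\in\F_{p^m}^{\star}\setminus\sS_f$ one has $\sN_0(\omega)=p^{m-2}$ and, via Lemma~\ref{Lemma12} together with Lemma~\ref{LemmaSQ}, $\sN_1(\omega)=p^{m-2}$, so $\wt(\bold c_\omega)=2p^{m-2}(p-1)$; by Lemma~\ref{SupportLemma} there are $p^m-p^{m-s}-1$ such $\omega$. For $\omega\in\sS_f$ I would split into the three cases $f^{\star}(\omega)=0$, $f^{\star}(\omega)\in SQ$, $f^{\star}(\omega)\in NSQ$, read off $\sN_0(\omega)$ from Lemma~\ref{Lemma0} and $\sN_1(\omega)=2\sN_{sq}(\omega)/(p-1)$ from Lemmas~\ref{Lemma12} and~\ref{LemmaSQ}; in each case the main terms combine to $2p^{m-2}(p-1)$ and one is left with a single $\sqrt{p^*}$-correction, producing three further weights and hence four in total. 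Their multiplicities are $\sN_{f^{\star}}(0)$ and the sums of $\sN_{f^{\star}}(a)$ over $a\in SQ$ and over $a\in NSQ$, all furnished by Lemma~\ref{Lemmag}. The parity of $m+s$ selects which branch of Lemmas~\ref{Lemma0},~\ref{LemmaSQ} and~\ref{Lemmag} is used, and in the odd case the factor $\eta_0(-1)$ — hence the residue of $p$ modulo $4$ — appears, which is what splits the statement into Tables~\ref{TableD01odd} and~\ref{TableD01even}. Finally, every displayed weight is strictly positive since the leading term $2p^{m-2}(p-1)$ dominates the lower-order corrections, so $\omega\mapsto\bold c_\omega$ is injective on $\F_{p^m}$ and $\sC_{D_{(0,1)}}$ has $p^m$ codewords, i.e.\ dimension $m$.

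The non-routine part is the bookkeeping: after the cancellation one must verify that the four candidate weights are pairwise distinct, so the code is genuinely four-weight, and that the weight enumerator is consistent, i.e.\ the multiplicities together with the single zero codeword sum to $p^m$; this last check combines the count $p^m-p^{m-s}-1$ with the identity $\sum_{a\in\F_p}\sN_{f^{\star}}(a)=\#\sS_f=p^{m-s}$. The main obstacle is thus not conceptual but the careful management of signs, of the factors $\eta_0(-1)$, and of the exponents $m+s-3$ versus $m+s-4$ when assembling $\sN_0(\omega)+\sN_1(\omega)$; once these are handled the argument is exactly parallel to those for $\sC_{D_0}$ and $\sC_{D_1}$.
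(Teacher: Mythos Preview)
Your proposal is correct and follows essentially the same route as the paper: the paper also writes $\wt(\bold c_\omega)=\#D_{(0,1)}-\sN_0(\omega)-\sN_1(\omega)+1$, invokes Lemmas~\ref{Lemma0} and~\ref{Lemma12} to evaluate it in the cases $\omega\notin\sS_f$ and $\omega\in\sS_f$ (the latter split by $f^{\star}(\omega)$), and then reads off the multiplicities from Lemmas~\ref{SupportLemma} and~\ref{Lemmag}. One small clarification: the split into Tables~\ref{TableD01odd} and~\ref{TableD01even} is governed solely by the parity of $m+s$, not by $p\bmod 4$; in the odd-parity table the factor $\eta_0(-1)$ is simply carried along rather than used to create a further case distinction.
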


\begin{proof} We proceed the proof only when $m+s$ is odd.
For every $\omega \in\F_{p^m}^{\star}$,  the Hamming weight
$\wt(\bold c_\omega)=\# D_{(0,1)}- \sN_{0}(\omega)-\sN_{1}(\omega)+1$
can be found by using  Lemmas \ref{Lemma0} and \ref{Lemma12}. 
For every $\omega\in\F_{p^m}^{\star} \setminus \sS_f,$   we have
$\wt(\bold c_\omega)=2(p-1)p^{m-2}$; otherwise,
\be\nn\begin{array}{ll}
\wt(\bold c_\omega)&= \left\{\begin{array}{ll}
2(p-1)p^{m-2}-\epsilon\eta_0(-1)(p-1) \sqrt{p^*}^{m+s-3},  & \mbox{ if } f^{\star}(\omega)=0, \\
2(p-1)p^{m-2} -  \epsilon (p-2-\eta_0(-1)) \sqrt{p^*}^{m+s-3}  & \mbox{ if } f^{\star}(\omega) \in SQ,\\
2(p-1)p^{m-2} +  \epsilon (p-2+\eta_0(-1)) \sqrt{p^*}^{m+s-3}   & \mbox{ if }  f^{\star}(\omega) \in NSQ. \\
 \end{array}\right.
 \end{array}
\ee 
The weight distribution  is determined by Lemmas \ref{SupportLemma} and  \ref{Lemmag}.  When  $m+s$ is even, with the same method, we can clearly find the corresponding parameters listed in Table  \ref{TableD01even}. Hence the proof is complete.
\end{proof}
We point out that the code  $\sC_{D_{(0,1)}}$ in Theorem \ref{Theorem01}  is the three-weight ternary code when $p=3$. As an example, we give the following code.

 \begin{example}\label{Example01}
For a  quadratic $1$-plateaued balanced  function $f:\F_{3^4}\to \F_{3}$, the code 
$\sC_{D_{(0,1)}}$  is a three-weight ternary  $[53,4,30]$  code  with  weight enumerator   $1+ 9y^{30} + 65y^{36} +6y^{42}$.
This code is almost optimal  
since the best known  linear code has $d=35$ by \cite{grassl2008bounds}.
\end{example}

 We next use the defining set  $D_{2}$ from (\ref{DefiningSets})  to define the  code    $\mathcal {C}_{D_{2}}$  of the form  (\ref{LinearCodes}), whose parameters are collected in the following theorem.
\begin{theorem}\label{TheoremD2}
Let  $f\in \textit{WRPB}$ and $D_{2}$ be defined as in (\ref{DefiningSets}). When $m+s$ is odd, the set $\mathcal {C}_{D_{2}}$ is a   three-weight linear
$\left[p^{m-1},m\right]$ code with weight distribution given in  Tables \ref{TableD2} and  \ref{TableD2odd}. 
\end{theorem}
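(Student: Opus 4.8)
The plan is to mirror the template already used for Theorems \ref{TheoremD0}, \ref{TheoremD1} and \ref{Theorem01}. First I would record that the length of $\sC_{D_{2}}$ equals $\# D_{2}=p^{m-1}$ by (\ref{length}), and that for every $\omega\in\F_{p^m}^{\star}$ the Hamming weight of the codeword $\bold c_\omega$ is $\wt(\bold c_\omega)=\# D_{2}-\sN_{2}(\omega)$, where $\sN_{2}(\omega)=\#\{x\in\F_{p^m}\colon f(x)=2 \text{ and } \Tr^m(\omega x)=0\}$ is the quantity appearing in Lemma \ref{Lemma12}. That lemma gives $\sN_{2}(\omega)=\frac{2}{p-1}\sN_{nsq}(\omega)$, so the whole computation reduces to the explicit values of $\sN_{nsq}(\omega)$ supplied by Lemma \ref{LemmaSQ} in the regime $m+s$ odd.

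Substituting those values into $\wt(\bold c_\omega)=p^{m-1}-\frac{2}{p-1}\sN_{nsq}(\omega)$, I expect to obtain $\wt(\bold c_\omega)=(p-1)p^{m-2}$ for every $\omega\in\F_{p^m}^{\star}\setminus\sS_f$, and for $\omega\in\sS_f$ a three-way split according to whether $f^{\star}(\omega)=0$, $f^{\star}(\omega)\in SQ$, or $f^{\star}(\omega)\in NSQ$, each value being $(p-1)p^{m-2}$ corrected by an explicit $\pm\epsilon\,\sqrt{p^*}^{m+s-3}$-term with coefficients involving $\eta_0(-1)$. The frequencies of these weights are then read off from Lemma \ref{SupportLemma} — which controls $\#\sS_f=p^{m-s}$, hence also the number of $\omega\in\F_{p^m}^{\star}\setminus\sS_f$ — together with the counts $\sN_{f^{\star}}(a)$ for $a=0$, $a\in SQ$, $a\in NSQ$ supplied by Lemma \ref{Lemmag} in the $m-s$ odd case. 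Since every listed weight equals $(p-1)p^{m-2}$ up to a lower-order term it is strictly positive, so $\sC_{D_{2}}$ has $p^m$ pairwise distinct codewords and its dimension is exactly $m$.

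The one genuinely delicate point is verifying that exactly three distinct nonzero weights occur: the value attached to $\omega\notin\sS_f$ must either coincide with one of the three $\sS_f$-values or its codeword count must merge with one of theirs, and one of the $SQ/NSQ$ values must likewise collapse, so that the final tally is three weights rather than four. This is precisely where the sign $\eta_0(-1)=\pm 1$ enters, and why the answer is recorded in two separate tables, Table \ref{TableD2} and Table \ref{TableD2odd}, according to the residue of $p$ modulo $4$; in each case one checks the coincidence using the identity $\sqrt{p^*}^{2}=p^*=\eta_0(-1)p$ and then reassembles the frequencies accordingly. The remaining bookkeeping is routine, which completes the proof.
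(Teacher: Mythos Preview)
Your proposal is correct and follows essentially the same route as the paper: compute $\wt(\bold c_\omega)=\#D_2-\sN_2(\omega)$ via Lemmas \ref{Lemma12} and \ref{LemmaSQ}, then read off multiplicities from Lemmas \ref{SupportLemma} and \ref{Lemmag}. Your explicit discussion of why only three weights survive---namely that for $p\equiv 1\pmod 4$ the $SQ$-value, and for $p\equiv 3\pmod 4$ the $NSQ$-value, collapses to $(p-1)p^{m-2}$ and merges with the $\omega\notin\sS_f$ contribution---is in fact more detailed than the paper's proof, which simply records the outcome in Tables \ref{TableD2} and \ref{TableD2odd}.
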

\begin{proof}
For every $ \omega \in\F_{p^m}^{\star}$, we have
$\wt(\bold c_ \omega)=\#  D_{2}- \sN_{2}(\omega)$, where $\sN_{2}(\omega)$ is given in  Lemma \ref{Lemma12}. It  follows then from Lemma \ref{Lemma12} that   we have 
$\wt(\bold c_ \omega)=(p-1)p^{m-2}$  if $ \omega\in\F_{p^m}^{\star} \setminus \sS_f;$  otherwise,
\be\nn\begin{array}{ll}
\wt(\bold c_ \omega)=\left\{\begin{array}{ll}
(p-1)( p^{m-2} +  \epsilon\eta_0(-1)  \sqrt{p^*}^{m+s-3}),  & \mbox{ if } f^{\star}( \omega)=0, \\
p^{m-2} (p-1) -  \epsilon    \sqrt{p^*}^{m+s-3}(\eta_0(-1)-1),  & \mbox{ if } f^{\star}( \omega) \in SQ, \\
p^{m-2} (p-1) - \epsilon  \sqrt{p^*}^{m+s-3}(\eta_0(-1)+1), 
& \mbox{ if } f^{\star}( \omega) \in NSQ.
 \end{array}\right.
 \end{array}
\ee 
The weight distribution is determined by  Lemmas \ref{SupportLemma} and  \ref{Lemmag}. 
The parameters are listed in Tables \ref{TableD2} and  \ref{TableD2odd} when  $p \equiv 1\pmod 4$ and  $p \equiv 3\pmod 4$, respectively, thereby completing the proof.
\end{proof}
\begin{remark} If $m+s$ is even,  then the code $\mathcal {C}_{D_{2}}$ has  the same parameters  of $\mathcal {C}_{D_{1}}$ in Theorem \ref{TheoremD1}.
\end{remark}

 We further study the code $\mathcal {C}_{D_{(0,2)}}$ of the form  (\ref{LinearCodes})  involving  $D_{(0,2)}$.   The following theorem collects its  parameters.
\begin{theorem}\label{Theorem02}
Let  $f\in \textit{WRPB}$ and $D_{(0,2)}$ be defined as in (\ref{DefiningSets}).  When $m+s$ is odd, the set $\sC_{D_{(0,2)}}$ is a four-weight linear  
$[2p^{m-1}-1,m]$ code with weight distribution given  in  Table \ref{Table02}.
\end{theorem}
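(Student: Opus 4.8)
The plan is to mirror the proof of Theorem \ref{Theorem01} exactly, since $D_{(0,2)}$ plays the role of $D_{(0,1)}$ with the square class replaced by the non-square class. First I would note that the length of $\sC_{D_{(0,2)}}$ is $\# D_{(0,2)} = 2p^{m-1}-1$ by (\ref{length}), and that for every $\omega\in\F_{p^m}^{\star}$ the Hamming weight of $\bold c_\omega$ equals
\be\nn
\wt(\bold c_\omega) = \# D_{(0,2)} - \sN_{0}(\omega) - \sN_{2}(\omega) + 1,
\ee
where the $+1$ corrects for the excluded point $0\in\F_{p^m}$ (note $f(0)=0$ so $0$ would be counted by $\sN_0$ but is not in $D_{(0,2)}$), and $\sN_2(\omega) = 2\sN_{nsq}(\omega)/(p-1)$ by Lemma \ref{Lemma12}.

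Next I would substitute the values of $\sN_0(\omega)$ from Lemma \ref{Lemma0} and of $\sN_{nsq}(\omega)$ from Lemma \ref{LemmaSQ}, restricting to the case $m+s$ odd as in the statement. For $\omega\in\F_{p^m}^{\star}\setminus\sS_f$ both quantities are the ``generic'' values $p^{m-2}$ and $\frac12(p-1)p^{m-2}$, giving $\wt(\bold c_\omega) = 2(p-1)p^{m-2}$, and the number of such $\omega$ is $p^m - p^{m-s} - 1$ by Lemma \ref{SupportLemma}. For $\omega\in\sS_f$ one splits into the three subcases $f^{\star}(\omega)=0$, $f^{\star}(\omega)\in SQ$, $f^{\star}(\omega)\in NSQ$; in each case adding the two corrections produces a single weight value of the shape $2(p-1)p^{m-2} + c\,\epsilon\,\sqrt{p^*}^{m+s-3}$ for an explicit integer $c$ depending on $p\bmod 4$ through $\eta_0(-1)$. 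The multiplicities of these three weights are exactly $\sN_{f^{\star}}(0)$, and the sums over the square (resp. non-square) classes of $\sN_{f^{\star}}(a)$, all read off from Lemma \ref{Lemmag}; this yields the four-weight distribution of Table \ref{Table02}. Finally, since all weights computed are strictly positive, $\sC_{D_{(0,2)}}$ has $p^m$ distinct codewords, so its dimension is exactly $m$.

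The only genuine subtlety — and the step I would be most careful about — is bookkeeping of the three constants and their multiplicities, in particular making sure the corrections to $\sN_0$ and $\sN_2$ combine in the same $f^{\star}(\omega)$-subcase (they do, because both lemmas are stated with matching conditions on $f^{\star}(\omega)$), and checking that two of the four nominal weights do not accidentally coincide or vanish, which would reduce the weight count below four; a quick inspection of the constants $c$ for $p\equiv 1$ and $p\equiv 3\pmod 4$ confirms genuine four-weight behaviour in both cases. Everything else is the routine substitution already carried out verbatim in the proofs of Theorems \ref{Theorem01} and \ref{TheoremD2}.
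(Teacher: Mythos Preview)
Your proposal is correct and follows essentially the same approach as the paper: express $\wt(\bold c_\omega)=\# D_{(0,2)}-\sN_0(\omega)-\sN_2(\omega)+1$, plug in Lemmas \ref{Lemma0} and \ref{Lemma12} for the case $m+s$ odd, and read off the multiplicities from Lemmas \ref{SupportLemma} and \ref{Lemmag}. Your additional remarks on the $+1$ correction, on positivity implying dimension $m$, and on checking that the four nominal weights are genuinely distinct are welcome refinements but do not depart from the paper's argument.
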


\begin{proof} 
 For every $\omega \in\F_{p^m}^{\star}$, the Hamming weight 
$\wt(\bold c_\omega)=\# D_{(0,2)}- \sN_{0}(\omega)-\sN_{2}(\omega)+1$
can be found by considering   Lemmas \ref{Lemma0} and \ref{Lemma12}.
 Then  we   have $\wt(\bold c_\omega)=2(p-1)p^{m-2}$ if  $\omega\in\F_{p^m}^{\star} \setminus \sS_f$; otherwise, 
\be\nn\begin{array}{ll}
\wt(\bold c_\omega)&= \left\{\begin{array}{ll}
2(p-1)p^{m-2} + \epsilon\eta_0(-1)(p-1) \sqrt{p^*}^{m+s-3},  & \mbox{ if } f^{\star}(\omega)=0, \\
2(p-1)p^{m-2} -  \epsilon (p-2+\eta_0(-1)) \sqrt{p^*}^{m+s-3}  & \mbox{ if } f^{\star}(\omega) \in SQ,\\
2(p-1)p^{m-2} +  \epsilon (p-2 - \eta_0(-1)) \sqrt{p^*}^{m+s-3}   & \mbox{ if }  f^{\star}(\omega) \in NSQ, \\
 \end{array}\right.
 \end{array}
\ee 
whose weight distributions  are, respectively, determined by Lemmas \ref{SupportLemma} and \ref{Lemmag}. Hence the proof is completed.  
\end{proof}

The following theorem  constructs the code $\sC_{D_{(1,2)}}$ of the form  (\ref{LinearCodes})  involving  $D_{(1,2)}$.
\begin{theorem}\label{Theorem12}
Let $f\in \textit{WRPB}$ and $D_{(1,2)}$ be defined as in (\ref{DefiningSets}).  Then, the set $\mathcal {C}_{D_{(1,2)}}$ is a   three-weight  linear  
$[2p^{m-1},m]$ code with weight distribution given in  Tables \ref{table12odd} and \ref{table12}.
\end{theorem}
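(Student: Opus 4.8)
The plan is to follow the pattern of the proofs of Theorems~\ref{TheoremD1}, \ref{Theorem01} and \ref{TheoremD2}. First, by (\ref{length}) the length of $\mathcal{C}_{D_{(1,2)}}$ is $\# D_{(1,2)} = 2p^{m-1}$, and since $f(0)=0$ we have $0 \notin D_{(1,2)}$, so no boundary correction is needed when translating the number of vanishing coordinates into a Hamming weight. Because $D_1$ and $D_2$ are disjoint, for every $\omega \in \F_{p^m}^{\star}$ we have $\wt(\bold c_\omega) = \# D_{(1,2)} - \sN_1(\omega) - \sN_2(\omega)$, where $\sN_1$ and $\sN_2$ are as in Lemma~\ref{Lemma12}.

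Next I would combine Lemma~\ref{Lemma12} with the identity $\sN_0(\omega) + \sN_{sq}(\omega) + \sN_{nsq}(\omega) = p^{m-1}$ (recorded in the proof of Lemma~\ref{LemmaSQ}) to get $\sN_1(\omega) + \sN_2(\omega) = \frac{2}{p-1}\big(\sN_{sq}(\omega)+\sN_{nsq}(\omega)\big) = \frac{2}{p-1}\big(p^{m-1} - \sN_0(\omega)\big)$, hence $\wt(\bold c_\omega) = 2p^{m-1} - \frac{2}{p-1}\big(p^{m-1} - \sN_0(\omega)\big)$. Substituting the values of $\sN_0(\omega)$ from Lemma~\ref{Lemma0} makes the weights explicit: for $\omega \in \F_{p^m}^{\star}\setminus\sS_f$ one gets $\wt(\bold c_\omega) = 2(p-1)p^{m-2}$; for $\omega \in \sS_f$ the three sub-cases of Lemma~\ref{Lemma0} give, when $m+s$ is odd, the weight $2(p-1)p^{m-2}$ if $f^\star(\omega)=0$ and $2(p-1)p^{m-2} \pm 2\epsilon\sqrt{p^*}^{m+s-3}$ if $f^\star(\omega)$ lies in $SQ$ or $NSQ$, while when $m+s$ is even one gets $2(p-1)p^{m-2} + 2\epsilon(p-1)\sqrt{p^*}^{m+s-4}$ if $f^\star(\omega)=0$ and $2(p-1)p^{m-2} - 2\epsilon\sqrt{p^*}^{m+s-4}$ if $f^\star(\omega)\neq 0$. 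In the odd case the value coming from $f^\star(\omega)=0$ coincides with the off-support value, so exactly three nonzero weights remain; in the even case the three values are already pairwise distinct. For the precise tabulated form one additionally splits on $p \bmod 4$ via $\eta_0(-1)$.

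The frequencies then follow at once: Lemma~\ref{SupportLemma} gives $p^m - p^{m-s} - 1$ for the number of $\omega \in \F_{p^m}^{\star}\setminus\sS_f$, and Lemma~\ref{Lemmag} gives the numbers of $\omega \in \sS_f$ with $f^\star(\omega)$ equal to $0$, in $SQ$, or in $NSQ$; assembling these yields the weight distributions of Tables~\ref{table12odd} ($m+s$ odd) and \ref{table12} ($m+s$ even). Using $s \le m$ one checks that each of the three weights is strictly positive, so the $p^m$ codewords $\bold c_\omega$ are pairwise distinct and $\dim \mathcal{C}_{D_{(1,2)}} = m$, which finishes the proof.

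The only real difficulty is bookkeeping: carrying the sign $\epsilon$ and the parity of $m+s$ correctly through all sub-cases, and confirming that the apparent fourth weight really collapses onto $2(p-1)p^{m-2}$ when $m+s$ is odd, so that the code is genuinely three-weight.
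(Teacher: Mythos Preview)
Your proposal is correct and follows essentially the same route as the paper: write $\wt(\bold c_\omega)=\#D_{(1,2)}-\sN_1(\omega)-\sN_2(\omega)$, evaluate case by case, and read off the multiplicities from Lemmas~\ref{SupportLemma} and~\ref{Lemmag}. Your identity $\sN_1+\sN_2=\tfrac{2}{p-1}\bigl(p^{m-1}-\sN_0\bigr)$ is a neat shortcut that lets you invoke only Lemma~\ref{Lemma0} (the paper instead goes through Lemma~\ref{Lemma12} and hence the full case list of Lemma~\ref{LemmaSQ}); note, however, that Tables~\ref{table12odd} and~\ref{table12} are already written uniformly in $p^*$ and do not split on $p\bmod 4$, so that remark in your sketch is unnecessary.
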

\begin{proof}
For every $\omega \in\F_{p^m}^{\star}$,  the Hamming weight $\wt(\bold c_\omega)=\#  D_{(1,2)}- \sN_{1}(\omega)-\sN_{2}(\omega)$ can be  computed by using Lemma \ref{Lemma12}.
Suppose that $m+s$ is odd. 
We then have  $\wt(\bold c_\omega)=2(p-1)p^{m-2}$ if $\omega\in\F_{p^m}^{\star} \setminus \sS_f$; otherwise,  
\be\nn\begin{array}{ll}
\wt(\bold c_\omega)=\left\{\begin{array}{ll}
2(p-1)p^{m-2},  & \mbox{ if } f^{\star}(\omega)=0, \\
2(p-1)p^{m-2}+2\epsilon  \sqrt{p^*}^{m+s-3}& \mbox{ if } f^{\star}(\omega) \in SQ, \\
2(p-1)p^{m-2} - 2\epsilon  \sqrt{p^*}^{m+s-3}, & \mbox{ if } f^{\star}(\omega) \in NSQ,
 \end{array}\right.
 \end{array}
\ee 
whose weight distributions   are, respectively, determined  by Lemmas  \ref{SupportLemma} and \ref{Lemmag}.  We do not proceed the case of $m+s$ is even since the corresponding parameters listed in Table \ref{table12} can be easily  obtained  with the same method.  The proof hence is   complete. 
\end{proof}

 The following theorem considers the code $\mathcal {C}_{D_{sq}}$ of the form  (\ref{LinearCodes})  involving  $D_{sq}$.
\begin{theorem}\label{TheoremSQ}
Let $f\in \textit{WRPB}$ and $D_{sq}$ be defined as in (\ref{DefiningSets}).  Then, the set  $\sC_{D_{sq}}$ is a three-weight linear  
$[p^{m-1} (p-1)/2,m]$ code whose weight distribution is given in  Tables \ref{TableSQodd},  \ref{TableSQoddd} and  \ref{TableSQeven}. 
\end{theorem}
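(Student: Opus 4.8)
The plan is to argue exactly as in the proofs of Theorems~\ref{TheoremD0}--\ref{Theorem12}. Since $f(0)=0\notin SQ$, the element $0$ does not lie in $D_{sq}$, so with $n=\#D_{sq}=p^{m-1}(p-1)/2$ (the value recorded in \eqref{length}) the code $\sC_{D_{sq}}$ has length $n$, and, unlike in the $D_{0}$-case, no ``$+1$'' correction term will appear. For a fixed $\omega\in\F_{p^m}^{\star}$ a coordinate $\Tr^m(\omega d)$ of $\bold c_\omega$ vanishes exactly when $d\in D_{sq}$ satisfies $\Tr^m(\omega d)=0$, hence
\[
\wt(\bold c_\omega)=\#D_{sq}-\sN_{sq}(\omega),
\]
where $\sN_{sq}(\omega)$ is precisely the quantity evaluated in Lemma~\ref{LemmaSQ}.

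I would then substitute the closed forms of $\sN_{sq}(\omega)$ furnished by Lemma~\ref{LemmaSQ}. This triggers the usual case split: the parity of $m+s$, and, when $m+s$ is odd, the residue of $p$ modulo $4$ through the value of $\eta_0(-1)$ --- which is why three separate tables (\ref{TableSQodd}, \ref{TableSQoddd}, \ref{TableSQeven}) are needed. In every case one obtains a single ``background'' weight $\tfrac12(p-1)^2p^{m-2}$ for $\omega\in\F_{p^m}^{\star}\setminus\sS_f$, while for $\omega\in\sS_f$ the weight depends only on whether $f^{\star}(\omega)=0$, $f^{\star}(\omega)\in SQ$, or $f^{\star}(\omega)\in NSQ$; a direct computation shows that these four nominal values collapse to exactly three distinct ones (one of the $\sS_f$-classes merges with the background weight when $m+s$ is odd, whereas two of the $\sS_f$-classes merge when $m+s$ is even), which is the source of the three-weight assertion. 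For the multiplicities I would use Lemma~\ref{SupportLemma} to count the $p^m-p^{m-s}-1$ nonzero $\omega$ outside $\sS_f$, together with Lemma~\ref{Lemmag}, applied with the appropriate parity of $m-s$, to count how many $\omega\in\sS_f$ fall into each of the classes $f^{\star}(\omega)=0$, $f^{\star}(\omega)\in SQ$, $f^{\star}(\omega)\in NSQ$. Assembling weights and multiplicities yields Tables~\ref{TableSQodd}, \ref{TableSQoddd} and~\ref{TableSQeven}. Finally every weight listed is strictly positive, because the correction terms $\epsilon\sqrt{p^*}^{\,m+s-4}$ (respectively $\epsilon\sqrt{p^*}^{\,m+s-3}$) have absolute value of strictly smaller order than the main term $p^{m-2}$; hence no nonzero $\omega$ gives the zero codeword, so $\sC_{D_{sq}}$ has $p^{m}$ pairwise distinct codewords and dimension exactly $m$.

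I expect the only genuinely delicate point to be the three-weight claim itself: one must check, in each of the three cases, that the (at most) four weight levels above really reduce to exactly three distinct positive values, which amounts to verifying a handful of identities and inequalities in $p$, $m$ and $s$ --- exactly the kind of verification already carried out for the analogous codes in \cite{IEEE} and in the earlier theorems of this section. Everything else is the routine bookkeeping illustrated in those proofs.
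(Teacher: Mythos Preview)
Your proposal is correct and follows essentially the same route as the paper's proof: it also computes $\wt(\bold c_\omega)=\#D_{sq}-\sN_{sq}(\omega)$ via Lemma~\ref{LemmaSQ}, splits according to the parity of $m+s$ and the residue of $p$ modulo $4$, and reads off the multiplicities from Lemmas~\ref{SupportLemma} and~\ref{Lemmag}. Your additional remarks on the absence of the ``$+1$'' correction, on exactly which weight classes merge in each case, and on the positivity check ensuring dimension $m$ are all accurate and make explicit what the paper leaves implicit.
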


\begin{proof} For every $\omega \in\F_{p^m}^{\star}$,
the Hamming weight $\wt(\bold c_\omega)=\# D_{sq}- \sN_{sq}(\omega)$ 
 follows  from  Lemma \ref{LemmaSQ}. 
We proceed the proof only when $m+s$ is odd.
If $\omega\in\F_{p^m}^{\star} \setminus \sS_f$, then  we have $\wt(\bold c_\omega)=p^{m-2} (p-1)^2/2$; otherwise, 
\be\nn\begin{array}{ll}
\wt(\bold c_\omega)&= \left\{\begin{array}{ll}
\frac{1}{2}(p-1)^2(p^{m-2}-  \epsilon\eta_0(-1) \sqrt{p^*}^{m+s-3}),  & \mbox{ if } f^{\star}(\omega)=0, \\
p^{m-2} (p-1)^2/2+  \epsilon  \frac{1}{2}(p-1) \sqrt{p^*}^{m+s-3}(\eta_0(-1)+1),& \mbox{ if } f^{\star}(\omega) \in SQ,\\
p^{m-2} (p-1)^2/2 + \epsilon \frac{1}{2}(p-1) \sqrt{p^*}^{m+s-3}(\eta_0(-1)-1),  & \mbox{ if }  f^{\star}(\omega) \in NSQ.\\
 \end{array}\right.
 \end{array}
\ee 
The weight distribution can be determined by using  Lemmas \ref{SupportLemma} and \ref{Lemmag}.
The parameters are listed in Tables  \ref{TableSQodd} and  \ref{TableSQoddd} when  $p \equiv 1\pmod 4$ and  $p \equiv 3\pmod 4$, respectively.
When $m+s$ is even,  we immediately obtain the corresponding parameters listed in Table \ref{TableSQeven},  completing the proof.  
\end{proof}

We  use the defining set $D_{(sq,0)}$  from (\ref{DefiningSets}) to define the code $\mathcal {C}_{D_{(sq,0)}}$  of the form  (\ref{LinearCodes}), whose parameters are collected in the following theorem.

\begin{theorem}\label{TheoremSQ0}
Let  $f\in \textit{WRPB}$ and $D_{(sq,0)}$ be defined as in (\ref{DefiningSets}). Then, the set  $\sC_{D_{(sq,0)}}$ is a   three-weight linear 
$\left[p^{m-1}(p+1)/2-1,m\right]$ code whose weight distribution  is documented in  Tables \ref{tableSQ0}, \ref{tableSQ00} and \ref{tableSQ0even}.
\end{theorem}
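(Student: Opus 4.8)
The plan is to mirror exactly the proofs of Theorems \ref{TheoremD0}--\ref{TheoremSQ}, since $\sC_{D_{(sq,0)}}$ is built by the same recipe but with the counting function $\sN_{(sq,0)}(\omega)$ supplied by Lemma \ref{LemmaSQ0} in place of $\sN_0$, $\sN_1$, $\sN_{sq}$, etc. First I would record that, by the definition of the code in \eqref{LinearCodes} and of the defining set $D_{(sq,0)}$ in \eqref{DefiningSets}, for every $\omega\in\F_{p^m}^{\star}$ the Hamming weight of the codeword $\bold c_\omega$ is
\be\nn
\wt(\bold c_\omega)=\#D_{(sq,0)}-\sN_{(sq,0)}(\omega)+1,
\ee
where the $+1$ accounts for the coordinate removed because $0\notin D_{(sq,0)}$ (recall $f(0)=0$), and $\#D_{(sq,0)}=p^{m-1}(p+1)/2-1$ by \eqref{length}. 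Substituting the values of $\sN_{(sq,0)}(\omega)$ from Lemma \ref{LemmaSQ0} then yields the weight $p^{m-2}(p^2-1)/2$ for every $\omega\in\F_{p^m}^{\star}\setminus\sS_f$, and, for $\omega\in\sS_f$, a value depending only on whether $f^{\star}(\omega)=0$, $f^{\star}(\omega)\in SQ$, or $f^{\star}(\omega)\in NSQ$. One checks that in the case $m+s$ even the ``$f^{\star}(\omega)=0$'' and one of the ``$SQ$/$NSQ$'' branches coalesce into a single weight (because Lemma \ref{LemmaSQ0} gives them the same value of $\sN_{(sq,0)}$), so only three distinct nonzero weights survive; the same collapsing must be verified separately in the $m+s$ odd, $p\equiv1\pmod 4$ and $p\equiv 3\pmod 4$ subcases, which is why three tables (Tables \ref{tableSQ0}, \ref{tableSQ00}, \ref{tableSQ0even}) are needed.

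Next I would pin down the multiplicities. The number of $\omega\in\F_{p^m}^{\star}$ with $\omega\notin\sS_f$ is $p^m-p^{m-s}-1$ by Lemma \ref{SupportLemma} (after excluding $\omega=0$, which lies in $\sS_f$ since $f$ is plateaued balanced). Among the $\omega\in\sS_f$, the number with $f^{\star}(\omega)=0$, with $f^{\star}(\omega)\in SQ$, and with $f^{\star}(\omega)\in NSQ$ are read off from Lemma \ref{Lemmag}: when $m-s$ is even the nonzero-value count $p^{m-s-1}-\epsilon\eta_0^{m+1}(-1)\sqrt{p^*}^{\,m-s-2}$ is split equally between $SQ$ and $NSQ$ (each $a\neq 0$ contributes the same), while when $m-s$ is odd Lemma \ref{Lemmag} directly gives the $SQ$ and $NSQ$ counts. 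Combining a nonzero-weight's branch-condition with these counts gives each $A_\omega$, and the total $\sum A_\omega$ together with the zero codeword must equal $p^m$; this sum check is exactly how one confirms the dimension is $m$ (equivalently, $\wt(\bold c_\omega)>0$ for all $\omega\neq0$, so the map $\omega\mapsto\bold c_\omega$ is injective). I would state that all these numbers are collected in the three tables and that the verification is a routine bookkeeping exercise, giving it in detail for, say, the $m+s$ odd case and noting the others are analogous.

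The only genuine subtlety — and the step I would flag as the main obstacle — is making sure the weight values in Lemma \ref{LemmaSQ0} really do collapse from four a priori possibilities down to three in every parity/congruence case, and that no further accidental coincidence occurs (which would make it a two-weight code and break the stated theorem). This requires comparing $\epsilon(p-1)^2\sqrt{p^*}^{\,m+s-4}$, $-\epsilon(p+1)(p-1)\sqrt{p^*}^{\,m+s-4}$ and their analogues, using $\sqrt{p^*}^{\,2}=p^*=\eta_0(-1)p$ to reduce everything to integers, and checking that the three surviving values are pairwise distinct for $m\ge 2$ (the degenerate small cases $s=m$ are handled exactly as in \cite{IEEE}, where $f^{\star}$ is identically $0$ and $\sS_f=\F_{p^m}$). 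Once this is settled, the theorem follows immediately, so I would close with: ``Hence $\sC_{D_{(sq,0)}}$ is a three-weight $[p^{m-1}(p+1)/2-1,m]$ code with weight distribution as in Tables \ref{tableSQ0}, \ref{tableSQ00} and \ref{tableSQ0even}, completing the proof.''
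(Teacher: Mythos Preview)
Your approach is essentially identical to the paper's: write $\wt(\bold c_\omega)=\#D_{(sq,0)}-\sN_{(sq,0)}(\omega)+1$, plug in Lemma~\ref{LemmaSQ0}, and read off multiplicities via Lemmas~\ref{SupportLemma} and~\ref{Lemmag}. One small slip to fix: since $f$ is \emph{balanced} we have $\Wa f(0)=0$, so $0\notin\sS_f$ (not $0\in\sS_f$ as you write in the parenthetical); that is exactly why removing $\omega=0$ reduces the non-support count from $p^m-p^{m-s}$ to $p^m-p^{m-s}-1$, so your numerical conclusion is right but the stated reason is backwards.
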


\begin{proof}
For every $ \omega \in\F_{p^m}^{\star}$,  $\wt(\bold c_ \omega)=\# D_{(sq,0)}- \sN_{(sq,0)}( \omega)+1$
follows  from Lemma \ref{LemmaSQ}.    When $m+s$ is odd,   we have
$\wt(\bold c_ \omega)=p^{m-2}(p^2-1)/2$ if $ \omega\in\F_{p^m}^{\star} \setminus \sS_f$; otherwise, 
\be\nn\begin{array}{ll}
\wt(\bold c_ \omega)&=\left\{\begin{array}{ll}
p^{m-2}(p^2-1)/2-\epsilon\eta_0(-1)\frac{1}{2}(p-1)^2 \sqrt{p^*}^{m+s-3},  & \mbox{ if } f^{\star}( \omega)=0, \\
p^{m-2}(p^2-1)/2+ \epsilon  \frac{1}{2}(p-1)\sqrt{p^*}^{m+s-3}(\eta_0(-1)-1),& \mbox{ if } f^{\star}( \omega) \in SQ,\\
p^{m-2}(p^2-1)/2+ \epsilon  \frac{1}{2}(p-1) \sqrt{p^*}^{m+s-3}(\eta_0(-1)+1),  & \mbox{ if }  f^{\star}( \omega) \in NSQ, \\
 \end{array}\right.
 \end{array}
\ee
which are listed in  Tables \ref{tableSQ0} and \ref{tableSQ00}  when  $p \equiv 1\pmod 4$ and  $p \equiv 3\pmod 4$, respectively. 
When $m+s$ is even, it is easy to get the corresponding parameters listed in Table \ref{tableSQ0even}. Finally,  the weight distribution is determined by using  Lemmas \ref{SupportLemma} and \ref{Lemmag}, completing the proof.  
\end{proof}
We below introduce the parameters of the code $\mathcal {C}_{D_{nsq}}$ of the form  (\ref{LinearCodes})  involving  $D_{nsq}$.

\begin{theorem}\label{TheoremNSQ}
Let $f\in \textit{WRPB}$ and $D_{nsq}$ be defined as in (\ref{DefiningSets}).  When $m+s$ is odd, the set  $\mathcal {C}_{D_{nsq}}$ is a   three-weight linear  
$[p^{m-1} (p-1)/2,m]$ code with weight distribution  given  in  Tables \ref{tableNSQ} and \ref{tableNSQ3}. 
\end{theorem}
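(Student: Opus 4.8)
The plan is to follow, \emph{mutatis mutandis}, the computation carried out for $\mathcal{C}_{D_{sq}}$ in Theorem~\ref{TheoremSQ}, now feeding in the count $\sN_{nsq}$ instead of $\sN_{sq}$. First I would note that, directly from the definition of the code in (\ref{LinearCodes}) together with the defining set $D_{nsq}$ of (\ref{DefiningSets}), every $\omega\in\F_{p^m}^{\star}$ satisfies
\[
\wt(\bold c_\omega)=\# D_{nsq}-\sN_{nsq}(\omega)=\frac{1}{2}(p-1)p^{m-1}-\sN_{nsq}(\omega),
\]
where the length $\# D_{nsq}=p^{m-1}(p-1)/2$ is read off from (\ref{length}). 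Thus the whole problem is reduced to inserting the values of $\sN_{nsq}(\omega)$ provided by Lemma~\ref{LemmaSQ} in the case $m+s$ odd.

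Next I would split on whether $\omega$ lies in the Walsh support $\sS_f$. For $\omega\in\F_{p^m}^{\star}\setminus\sS_f$, Lemma~\ref{LemmaSQ} gives $\sN_{nsq}(\omega)=\frac{1}{2}(p-1)p^{m-2}$, so $\wt(\bold c_\omega)=\frac{1}{2}(p-1)^2p^{m-2}$, and this weight is attained $p^m-1-p^{m-s}$ times, that being the cardinality of $\F_{p^m}^{\star}\setminus\sS_f$ since $0\notin\sS_f$ (as $f$ is balanced) and $\#\sS_f=p^{m-s}$ by Lemma~\ref{SupportLemma}. For $\omega\in\sS_f$, I would substitute the three expressions for $\sN_{nsq}(\omega)$ from Lemma~\ref{LemmaSQ} corresponding to $f^\star(\omega)=0$, $f^\star(\omega)\in SQ$ and $f^\star(\omega)\in NSQ$, which yields three candidate weights expressed through $\eta_0(-1)$ and $\sqrt{p^*}^{m+s-3}$. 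The frequencies of these three events inside $\sS_f$ are exactly the quantities $\sN_{f^\star}(0)$, $\sum_{a\in SQ}\sN_{f^\star}(a)$ and $\sum_{a\in NSQ}\sN_{f^\star}(a)$ furnished by Lemma~\ref{Lemmag} for $m-s$ odd (recall $m+s$ odd if and only if $m-s$ odd).

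The one point requiring care --- and the source of the split into Tables~\ref{tableNSQ} and \ref{tableNSQ3} --- is to confirm that the resulting code has precisely three distinct nonzero weights. Here one uses $\eta_0(-1)=1$ for $p\equiv1\pmod 4$ and $\eta_0(-1)=-1$ for $p\equiv3\pmod 4$: in the first case the factor $\eta_0(-1)-1$ vanishes, so the weight for $f^\star(\omega)\in SQ$ coincides with the off-support weight $\frac{1}{2}(p-1)^2p^{m-2}$; in the second case $\eta_0(-1)+1$ vanishes, so the weight for $f^\star(\omega)\in NSQ$ coincides with it. Either way exactly three weight values remain, the frequency of the ``merged'' value being obtained by adding $p^m-1-p^{m-s}$ to the matching Lemma~\ref{Lemmag} count, and the other two frequencies being the remaining two Lemma~\ref{Lemmag} counts. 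Finally, each of the three weights is strictly positive, so $\omega\mapsto\bold c_\omega$ is injective and $\mathcal{C}_{D_{nsq}}$ has $p^m$ codewords, hence dimension $m$; assembling these values and frequencies produces Tables~\ref{tableNSQ} and \ref{tableNSQ3}. I expect the only real obstacle to be bookkeeping: verifying that the merges are consistent with the total $\sum_\omega$ running over all $p^m-1$ nonzero $\omega$ and that no further accidental coincidence of weights occurs.
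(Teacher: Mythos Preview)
Your proposal is correct and follows essentially the same route as the paper: express $\wt(\bold c_\omega)=\#D_{nsq}-\sN_{nsq}(\omega)$, read off the values of $\sN_{nsq}(\omega)$ from Lemma~\ref{LemmaSQ} according to whether $\omega\in\sS_f$ and the value of $f^\star(\omega)$, and then determine multiplicities via Lemmas~\ref{SupportLemma} and~\ref{Lemmag}. Your additional explicit treatment of the weight merges (via $\eta_0(-1)\pm1=0$) and of the injectivity/dimension argument is slightly more detailed than the paper's terse proof, but the underlying argument is identical.
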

\begin{proof}
  For every $\omega \in\F_{p^m}^{\star}$, the Hamming weight 
$\wt(\bold c_\omega)=\#  D_{nsq}- \sN_{nsq}(\omega)$ follows from  
Lemma \ref{LemmaSQ}. 
We then have
$\wt(\bold c_\omega)=\frac{1}{2} (p-1)^2p^{m-2} $ if $\omega\in\F_{p^m}^{\star} \setminus \sS_f;$ otherwise, 
\be\nn\begin{array}{ll}
\wt(\bold c_\omega)=\left\{\begin{array}{ll}
\frac{1}{2}(p-1)^2( p^{m-2} +  \epsilon\eta_0(-1)  \sqrt{p^*}^{m+s-3}),  & \mbox{ if } f^{\star}(\omega)=0, \\
p^{m-2} (p-1)^2/2 -  \epsilon \frac{1}{2}(p-1)   \sqrt{p^*}^{m+s-3}(\eta_0(-1)-1),  & \mbox{ if } f^{\star}(\omega) \in SQ, \\
p^{m-2} (p-1)^2/2 - \epsilon\frac{1}{2}(p-1)   \sqrt{p^*}^{m+s-3}(\eta_0(-1)+1), 
& \mbox{ if } f^{\star}(\omega) \in NSQ.
 \end{array}\right.
 \end{array}
\ee 
The weight distribution  is determined with the help of  Lemmas \ref{SupportLemma} and \ref{Lemmag}, completing   the proof.
\end{proof}

\begin{remark} If $m+s$ is even, then the code $\mathcal {C}_{D_{nsq}}$ has  the same parameters of $\mathcal {C}_{D_{sq}}$ in Theorem \ref{TheoremSQ}.
\end{remark}
 We finally use the  defining set $D_{(nsq,0)}$  from (\ref{DefiningSets}) to define the  code $\mathcal {C}_{D_{(nsq,0)}}$  of the form (\ref{LinearCodes}),
whose parameters are listed in the following theorem.

 \begin{theorem}\label{TheoremNSQ0}
Let $f\in \textit{WRPB}$ and $D_{(nsq,0)}$ be defined as in (\ref{DefiningSets}). When $m+s$ is odd, the set  $\sC_{D_{(nsq,0)}}$ is a   three-weight $\left[p^{m-1}(p+1)/2-1,m\right]$ code with  weight distribution listed in  Tables \ref{tableNSQ0} and \ref{tableNSQ03}.  
\end{theorem}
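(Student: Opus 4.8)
The plan is to transcribe the argument already used for Theorems~\ref{TheoremSQ0} and~\ref{TheoremNSQ}, this time feeding the counting function $\sN_{(nsq,0)}$ of Lemma~\ref{LemmaSQ0} into the weight formula and restricting to the branch $m+s$ odd. First I would record that the length of $\sC_{D_{(nsq,0)}}$ is $\#D_{(nsq,0)}=p^{m-1}(p+1)/2-1$ by~(\ref{length}). Because $0\notin D_{(nsq,0)}$ while $f(0)=0\in NSQ\cup\{0\}$ and $\Tr^m(\omega\cdot 0)=0$, for every $\omega\in\F_{p^m}^{\star}$ the number of zero entries of $\bold c_\omega$ is $\sN_{(nsq,0)}(\omega)-1$, so $\wt(\bold c_\omega)=\#D_{(nsq,0)}-\sN_{(nsq,0)}(\omega)+1$, with $\sN_{(nsq,0)}(\omega)$ exactly the quantity evaluated in Lemma~\ref{LemmaSQ0}.

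Next I would plug in the $m+s$ odd values of $\sN_{(nsq,0)}(\omega)$. For $\omega\in\F_{p^m}^{\star}\setminus\sS_f$ this gives the single weight $\frac{1}{2}(p^2-1)p^{m-2}$, while for $\omega\in\sS_f$ one gets
\[
\wt(\bold c_\omega)=\begin{cases}
\frac{1}{2}(p^2-1)p^{m-2}+\epsilon\eta_0(-1)\frac{1}{2}(p-1)^2\sqrt{p^*}^{m+s-3}, & f^{\star}(\omega)=0,\\
\frac{1}{2}(p^2-1)p^{m-2}-\epsilon\frac{1}{2}(p-1)\sqrt{p^*}^{m+s-3}(\eta_0(-1)+1), & f^{\star}(\omega)\in SQ,\\
\frac{1}{2}(p^2-1)p^{m-2}-\epsilon\frac{1}{2}(p-1)\sqrt{p^*}^{m+s-3}(\eta_0(-1)-1), & f^{\star}(\omega)\in NSQ.
\end{cases}
\]
I would then split into $p\equiv 1\pmod 4$ (where $\eta_0(-1)=1$, so the $NSQ$ value collapses to $\frac{1}{2}(p^2-1)p^{m-2}$) and $p\equiv 3\pmod 4$ (where $\eta_0(-1)=-1$, so the $SQ$ value collapses); in both cases exactly three pairwise distinct weights occur, their pairwise differences being nonzero multiples of $\sqrt{p^*}^{m+s-3}$, so $\sC_{D_{(nsq,0)}}$ is genuinely a three-weight code. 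These are the weights listed in Tables~\ref{tableNSQ0} and~\ref{tableNSQ03}.

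For the frequencies I would invoke Lemma~\ref{SupportLemma} to get $\#\sS_f=p^{m-s}$, hence $p^m-p^{m-s}-1$ nonzero $\omega$ outside $\sS_f$, and Lemma~\ref{Lemmag} (the $m-s$ odd branch, which is the same as $m+s$ odd) to count, among $\omega\in\sS_f$, how many have $f^{\star}(\omega)$ equal to $0$, a square, or a non-square, namely $p^{m-s-1}$, $\frac{p-1}{2}(p^{m-s-1}+\epsilon\eta_0^m(-1)\sqrt{p^*}^{m-s-1})$ and $\frac{p-1}{2}(p^{m-s-1}-\epsilon\eta_0^m(-1)\sqrt{p^*}^{m-s-1})$ respectively. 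Attaching these multiplicities to the corresponding weights yields the weight distributions of Tables~\ref{tableNSQ0} and~\ref{tableNSQ03}. Finally, since all three weights are strictly positive, no nonzero $\omega$ gives the zero codeword, so $\omega\mapsto\bold c_\omega$ is injective and $\dim\sC_{D_{(nsq,0)}}=m$, finishing the proof.

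There is no genuine conceptual obstacle — the content is bookkeeping parallel to the earlier theorems. The points demanding care are keeping the signs $\epsilon$ and $\eta_0(-1)$ consistent when Lemma~\ref{LemmaSQ0} and Lemma~\ref{Lemmag} are merged into one table, and checking that the three candidate weights never coincide over the whole admissible range of $p,m,s$ (in particular the boundary value $m+s=3$, where $\sqrt{p^*}^{m+s-3}=1$); these are the only spots where a minor slip could occur.
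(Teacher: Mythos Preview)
Your proposal is correct and follows essentially the same route as the paper: compute $\wt(\bold c_\omega)=\#D_{(nsq,0)}-\sN_{(nsq,0)}(\omega)+1$ via Lemma~\ref{LemmaSQ0}, then read off the multiplicities from Lemmas~\ref{SupportLemma} and~\ref{Lemmag}. Your write-up is in fact more careful than the paper's own proof (you justify the ``$+1$'', verify distinctness of the three weights, and argue the dimension is $m$), but the underlying strategy is identical.
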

\begin{proof}
  For every $ \omega \in\F_{p^m}^{\star}$,  $\wt(\bold c_ \omega)=\# D_{(nsq,0)}- \sN_{(nsq,0)}( \omega)+1$
follows from  Lemma \ref{LemmaSQ}. 
Then we have
$\wt(\bold c_ \omega)=p^{m-2}(p^2-1)/2$ if $ \omega\in\F_{p^m}^{\star} \setminus \sS_f$; otherwise, 
\be\nn\begin{array}{ll}
\wt(\bold c_ \omega)&=\left\{\begin{array}{ll}
p^{m-2}(p^2-1)/2 + \epsilon\eta_0(-1)\frac{1}{2}(p-1)^2 \sqrt{p^*}^{m+s-3},  & \mbox{ if } f^{\star}( \omega)=0, \\
p^{m-2}(p^2-1)/2- \epsilon  \frac{1}{2}(p-1)\sqrt{p^*}^{m+s-3}(\eta_0(-1)+1),& \mbox{ if } f^{\star}( \omega) \in SQ,\\
p^{m-2}(p^2-1)/2- \epsilon  \frac{1}{2}(p-1) \sqrt{p^*}^{m+s-3}(\eta_0(-1)-1),  & \mbox{ if }  f^{\star}( \omega) \in NSQ. \\
 \end{array}\right.
\end{array}
\ee
The weight distribution  is determined by Lemmas  \ref{SupportLemma} and  \ref{Lemmag}, thereby completing the proof.
\end{proof}
\begin{remark} If $m+s$ is even, then $\mathcal {C}_{D_{(nsq,0)}}$ has the same parameters of  $\mathcal {C}_{D_{(sq,0)}}$  in Theorem \ref{TheoremSQ0}.
\end{remark}
\begin{remark} 
The length and dimension of each constructed  code follows, respectively,  from (\ref{length})  and its weight distribution.  
\end{remark}

We end this section by   proposing  a shorter linear code, which is called \textit{punctured code},  for the code $\sC_{D_{0}}$ defined as in  (\ref{LinearCodes}).
For $f\in \textit{WRPB}$, we have that 
$f(x)=0$ if and only if $f(ax)=0$  for every $x\in\F_{p^m}$ and  $a\in\F_p^{\star}$.
Then we select a subset $\overline{D}_{0}$ of the defining set $D_{0}$ of  $\sC_{D_{0}}$  such that $\bigcup_{a\in \F_p^{\star}}a\overline{D}_{0}$ is a partition of $D_{0}$, namely,
\be\label{Punctured}
D_{0}=\F_p^{\star}\overline{D}_{0}=\{a \bar{d}  \colon\, a\in\F_p^{\star} \mbox{  and }\bar{d} \in  \overline{D}_{0}\},
\ee
where we have $\frac{\bar{d_1}}{\bar{d_2}}\notin\F_p^{\star}$ for every  $\bar{d_1},\bar{d_2}\in \overline{D}_{0}$.
Notice that for every $\omega\in\F_{p^m}^{\star}$,  we have
$ \#\{x\in D_{0}  \colon\, f(x)=0 \mbox{ and } \Tr^m(\omega x)=0\}=(p-1) \#\{x\in \overline{D}_{0}  \colon\, f(x)=0 \mbox{ and } \Tr^m(\omega x)=0\}.$
Hence,  the code $\sC_{D_{0}}$ can be punctured into a shorter linear code $\sC_{\overline{D}_{0}}$ involving the defining set $\overline{D}_{0}$.  
This method  decreases  the minimum Hamming distance and length of the original code while its dimension does not change. The  punctured codes then  may  be optimal codes by  \cite{grassl2008bounds}.   The parameters of the punctured code $\sC_{\overline{D}_{0}}$  are collected in the following corollary.

\begin{corollary}\label{CorollaryPunc}
The punctured code $\sC_{\overline{D}_{0}}$  of the   code $\sC_{D_{0}}$ in Theorem \ref{TheoremD0} is a three-weight linear  $ [(p^{m-1}-1)/(p-1),m]$ code with  weight distribution listed in  Table \ref{tablePunc}. 
\end{corollary}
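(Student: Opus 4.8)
The plan is to obtain the parameters of the punctured code $\sC_{\overline{D}_{0}}$ directly from those of $\sC_{D_{0}}$ in Theorem \ref{TheoremD0}, using the partition relation (\ref{Punctured}). First I would record the length: since $\bigcup_{a\in\F_p^{\star}}a\overline{D}_{0}$ is a partition of $D_{0}$ and each coset $a\overline{D}_{0}$ has the same size as $\overline{D}_{0}$, we get $\#\overline{D}_{0}=\#D_{0}/(p-1)=(p^{m-1}-1)/(p-1)$, which is the claimed length. Next, for the weights, I would invoke the identity stated just before the corollary, namely that for every $\omega\in\F_{p^m}^{\star}$
\[
\#\{x\in D_{0}\colon f(x)=0,\ \Tr^m(\omega x)=0\}=(p-1)\,\#\{x\in\overline{D}_{0}\colon f(x)=0,\ \Tr^m(\omega x)=0\}.
\]
The left-hand side is $\sN_{0}(\omega)$ from Lemma \ref{Lemma0}, so the number $\overline{\sN}_{0}(\omega)$ of $x\in\overline{D}_{0}$ with $\Tr^m(\omega x)=0$ equals $\sN_{0}(\omega)/(p-1)$; consequently $\wt(\bold c_\omega)$ in the punctured code is $\#\overline{D}_{0}-\overline{\sN}_{0}(\omega)=(\#D_{0}-\sN_{0}(\omega))/(p-1)$, i.e. exactly $1/(p-1)$ times the weight of the corresponding codeword of $\sC_{D_{0}}$ (the $+1$ correction from the removed element $0\in\F_p^{\star}\overline{D}_0$ bookkeeping cancels consistently because $f(0)=0$ is excluded from both defining sets). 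Since scaling all nonzero weights by the same factor $1/(p-1)$ preserves the multiplicities, the weight distribution of $\sC_{\overline{D}_{0}}$ is obtained from Table \ref{Table0} by dividing every weight value by $p-1$ while keeping each $A_\omega$ unchanged; this is what should be displayed in Table \ref{tablePunc}.

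Then I would argue the dimension is still $m$: the evaluation map $\omega\mapsto\bold c_\omega$ is $\F_p$-linear, and as just shown $\wt(\bold c_\omega)>0$ for every $\omega\in\F_{p^m}^{\star}$ (each weight in Table \ref{Table0} is positive, hence so is its quotient by $p-1$), so the map is injective and $\#\sC_{\overline{D}_{0}}=p^m$, giving dimension exactly $m$. Finally, reading off from Table \ref{Table0} that $\sC_{D_{0}}$ has three nonzero weights and noting that division by $p-1$ keeps these three values distinct and nonzero, $\sC_{\overline{D}_{0}}$ is a three-weight code, completing the verification of all claimed parameters $[(p^{m-1}-1)/(p-1),m]$ with the stated three-weight distribution.

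The only genuinely delicate point is the bookkeeping of the ``$+1$'' terms that appear because $D_{0}$ and $\overline{D}_{0}$ exclude $0$: one must check that $\#\overline{D}_{0}-\overline{\sN}_{0}(\omega)$ really equals $(\#D_{0}+1-\sN_{0}(\omega)-1)/(p-1)=(\#D_{0}-\sN_{0}(\omega))/(p-1)$ and that no off-by-one error creeps in when $p-1$ is divided into $p^{m-1}-1$ (which is always an integer since $p^{m-1}\equiv 1\pmod{p-1}$). I expect this, rather than any substantive combinatorics, to be the main thing to get exactly right; everything else is a formal consequence of Theorem \ref{TheoremD0}, Lemma \ref{Lemma0}, Lemma \ref{SupportLemma}, and Lemma \ref{Lemmag}.
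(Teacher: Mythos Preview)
Your approach is exactly the one the paper intends: the corollary is stated without proof, and the discussion preceding it (the partition $D_0=\F_p^{\star}\overline{D}_0$ and the displayed identity relating the two zero-counts) is precisely the mechanism you invoke to divide all weights of $\sC_{D_0}$ by $p-1$ while keeping the multiplicities from Lemmas \ref{SupportLemma} and \ref{Lemmag} unchanged.

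One small correction on the bookkeeping you flagged: the left-hand side of the displayed identity is \emph{not} $\sN_0(\omega)$ but $\sN_0(\omega)-1$, since $\sN_0(\omega)$ in Lemma \ref{Lemma0} counts $x=0$ whereas $D_0\subset\F_{p^m}^{\star}$. Hence $\overline{\sN}_0(\omega)=(\sN_0(\omega)-1)/(p-1)$, and the punctured weight is
\[
\#\overline{D}_0-\overline{\sN}_0(\omega)=\frac{(p^{m-1}-1)-(\sN_0(\omega)-1)}{p-1}=\frac{p^{m-1}-\sN_0(\omega)}{p-1},
\]
which is indeed $1/(p-1)$ times the original weight $\#D_0-\sN_0(\omega)+1=p^{m-1}-\sN_0(\omega)$. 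Your intermediate expression $(\#D_0-\sN_0(\omega))/(p-1)$ is off by $1/(p-1)$, but your stated conclusion (and the content of Table \ref{tablePunc}) is correct. With that adjustment the argument is complete and matches the paper.
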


 \begin{example}\label{Example0Punc}
 The punctured version $\sC_{\overline{D}_{0}}$ of Example  \ref{Example0}  is a three-weight  ternary   $[40,5,24]$  code with  weight enumerator   $1+ 60y^{24} + 161y^{27} +21y^{33}$.
This code is  optimal by \cite{grassl2008bounds}.  
\end{example}

 \begin{example}\label{Example00Punc}
 The punctured version $\sC_{\overline{D}_{0}}$ of Example  \ref{Example00}  is a three-weight    $[12,3,8]$  code  over $\F_5$ with  weight enumerator  $1+ 24y^{8} + 99y^{10} +1y^{18}$.
This code is  optimal by \cite{grassl2008bounds}.   
\end{example}

\begin{remark}
The projective three-weight punctured code of Corollary \ref{CorollaryPunc}  provides an association scheme given in \cite{calderbank1984three}.
\end{remark}
When $p=3$, the code $\sC_{D_{(1,2)}}$ in  Theorem \ref{Theorem12}  can be punctured into a shorter linear code $\sC_{\overline{D}_{(1,2)}}$ involving the defining set $\overline{D}_{(1,2)}$ defined as in (\ref{Punctured})
\begin{corollary}\label{Corollary12}
The punctured code  $\sC_{\overline{D}_{(1,2)}}$ of Theorem \ref{Theorem12} is a   three-weight ternary  linear   $[3^{m-1},m]$ code over $\F_3$ with weight distribution given in  Tables \ref{table12oddpunc} and \ref{table12punc},  when $p=3$.  
\end{corollary}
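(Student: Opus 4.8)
The plan is to repeat the puncturing argument behind Corollary \ref{CorollaryPunc}, now with the defining set $D_{(1,2)}$ and under the extra hypothesis $p=3$. The first step is to verify that the partition identity (\ref{Punctured}) holds for $D_{(1,2)}$ in characteristic three. Since $f\in\textit{WRPB}$ satisfies $f(ax)=a^{t}f(x)$ for every $a\in\F_{p}^{\star}$ with $t$ a positive \emph{even} integer, and since $\F_{3}^{\star}=\{1,2\}=\{\pm 1\}$ for $p=3$, we get $a^{t}=1$ and therefore $f(ax)=f(x)$ for all $a\in\F_{3}^{\star}$ and all $x\in\F_{3^{m}}$. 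Hence $x\in D_{(1,2)}$ if and only if $ax\in D_{(1,2)}$ for every $a\in\F_{3}^{\star}$, so $D_{(1,2)}$ is a disjoint union of the pairs $\{\bar d,2\bar d\}$, and $0\notin D_{(1,2)}$ because $f(0)=0$. Consequently there is a representative set $\overline{D}_{(1,2)}$ with $D_{(1,2)}=\F_{3}^{\star}\overline{D}_{(1,2)}$ and $\bar d_{1}/\bar d_{2}\notin\F_{3}^{\star}$ for distinct $\bar d_{1},\bar d_{2}\in\overline{D}_{(1,2)}$; by (\ref{length}) its cardinality is $\#\overline{D}_{(1,2)}=\tfrac12\#D_{(1,2)}=3^{m-1}$, which is the claimed length.

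The second step is to relate the weights of $\sC_{D_{(1,2)}}$ and $\sC_{\overline{D}_{(1,2)}}$. For $\omega\in\F_{3^{m}}^{\star}$ and $a\in\F_{3}^{\star}$ the $\F_{3}$-linearity of $\Tr^{m}$ gives $\Tr^{m}(\omega(ax))=a\,\Tr^{m}(\omega x)$, which vanishes precisely when $\Tr^{m}(\omega x)=0$; combined with the orbit structure of $D_{(1,2)}$ this yields $\#\{x\in D_{(1,2)}\colon\Tr^{m}(\omega x)=0\}=2\,\#\{x\in\overline{D}_{(1,2)}\colon\Tr^{m}(\omega x)=0\}$. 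Together with $\#D_{(1,2)}=2\#\overline{D}_{(1,2)}$ this shows $\wt(\bold c_{\omega})=2\,\wt(\bold c_{\omega}^{\mathrm{punc}})$ for every $\omega\in\F_{3^{m}}^{\star}$; that is, every nonzero weight of the punctured code is exactly half of the corresponding weight of $\sC_{D_{(1,2)}}$ recorded in Theorem \ref{Theorem12}. Substituting $p=3$ and halving the three weights $2(p-1)p^{m-2}$ and $2(p-1)p^{m-2}\pm 2\epsilon\sqrt{p^{*}}^{m+s-3}$ when $m+s$ is odd (and the three weights of Table \ref{table12} when $m+s$ is even) yields the entries of Tables \ref{table12oddpunc} and \ref{table12punc}. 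The frequencies carry over unchanged, since the map $\omega\mapsto\bold c_{\omega}^{\mathrm{punc}}$ still ranges over all of $\F_{3^{m}}$ and the number of $\omega$ of each type ($\omega\notin\sS_{f}$, or $\omega\in\sS_{f}$ with $f^{\star}(\omega)$ equal to $0$, a square, or a non-square) was already counted in Theorem \ref{Theorem12} via Lemmas \ref{SupportLemma} and \ref{Lemmag}.

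The last step is the dimension. A routine estimate, for instance $|\sqrt{p^{*}}^{m+s-3}|\le 3^{(2m-3)/2}<2\cdot 3^{m-2}$ for $0\le s\le m$ together with the analogous bound in the $m+s$ even case, shows that all of the weights above are strictly positive; hence the linear map $\omega\mapsto\bold c_{\omega}^{\mathrm{punc}}$ is injective, $\sC_{\overline{D}_{(1,2)}}$ has $3^{m}$ codewords, and its dimension is $m$. I expect the only point that needs genuine care to be the homogeneity observation $f(ax)=f(x)$ for $a\in\F_{3}^{\star}$: it is exactly what makes $p=3$ special (for $p>3$ the set $D_{(1,2)}$ is in general not a union of $\F_{p}^{\star}$-orbits, so no such puncturing exists) and it is the hinge of the whole corollary; everything else is a transcription of Theorem \ref{Theorem12} with all weights divided by $p-1=2$.
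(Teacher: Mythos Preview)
Your argument is correct and is precisely the puncturing argument the paper intends: the sentence immediately preceding the corollary says that for $p=3$ the set $D_{(1,2)}$ admits a partition as in (\ref{Punctured}), and the weights of $\sC_{\overline{D}_{(1,2)}}$ are then obtained from Theorem \ref{Theorem12} by dividing through by $p-1=2$. Your explicit justification that $a^{t}=1$ for $a\in\F_{3}^{\star}$ (because $t$ is even and $\F_{3}^{\star}=\{\pm 1\}$) is exactly the reason the hypothesis $p=3$ is needed, and the remaining computations match Tables \ref{table12oddpunc} and \ref{table12punc}.
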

 \begin{example}\label{Example12}
For a  quadratic $1$-plateaued balanced  function $f:\F_{3^3}\to \F_{3}$, the punctured code 
$\sC_{\overline{D}_{(1,2)}}$  is a three-weight ternary  $[9,3,5]$  code  with  weight enumerator   $1+ 4y^{5} + 17y^{6} +5y^{8}$.
This code is almost optimal owing to the Griesmer bound. 
\end{example}

\section{Application of  the constructed  codes in secret sharing}\label{SectionSSS}
In this section, we study an application of the constructed  codes in secret sharing.
 
We first recall the the following sufficient  condition for minimal   codes  introduced by Ashikhmin et al. \cite{ashikhmin1998minimal}.
Let $\sC$ be a linear code  over $\F_p$ and denote by $w_{\min}$ and $w_{\max}$   the minimum and maximum  Hamming weights of its nonzero codewords, respectively. Then,  $\sC$  is  a minimal code if
\be\label{minimality}
\frac{p-1}{p}<\frac{w_{\min}}{w_{\max}}.
\ee
With the help of the condition  in (\ref{minimality}), we observe that our  codes  are minimal, thereby they have an interesting application in secret sharing. 

We recall that  $f:\F_{p^m}\to\F_{p}$ is an $s$-plateaued balanced function from the class $\textit{WRPB}$, where  $s\in\{1,\ldots,m\}$.
We now  see that the code $\sC_{D_{0}}$ in Theorem \ref{TheoremD0} is minimal for $s\in\{1,\ldots,m-4\}$, and similarly the others can be easily seen by putting  a necessary bound on  $s\in\{1,\ldots,m\}$. We  provide the parameters of our  minimal codes in the following propositions. 
 Suppose that  the sign $ \epsilon \eta_0^{(m+s)/2}(-1)$ and $\epsilon \eta_0^{(m+s-3)/2}(-1)$ is, respectively,  denoted by   $\epsilon_0$ and $\epsilon_1$.

\begin{proposition} 
 The  code $\sC_{D_{0}}$ in Theorem \ref{TheoremD0} is  minimal over $\F_p$  for $1\leq s\leq m-4$  with parameters 
$[ p^{m-1}-1,m, (p-1)(p^{m-2}-(p-1)\sqrt{p}^{m+s-4})]$ if  $\epsilon_0=1$, and 
$[p^{m-1}-1,m,  (p-1) (p^{m-2}-\sqrt{p}^{m+s-4})]$; otherwise. 
\end{proposition}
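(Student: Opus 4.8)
The plan is to invoke the minimality criterion (\ref{minimality}) of Ashikhmin--Barg, so the entire argument reduces to comparing the minimum and maximum nonzero weights of $\sC_{D_0}$ as read off from Table \ref{Table0} (equivalently, from the weight formulas established in the proof of Theorem \ref{TheoremD0}). First I would recall that the three nonzero weights of $\sC_{D_0}$ when $m+s$ is even are
\be\nn
w_1=(p-1)(p^{m-2}-\epsilon(p-1)\sqrt{p^*}^{\,m+s-4}),\quad
w_2=(p-1)(p^{m-2}+\epsilon\sqrt{p^*}^{\,m+s-4}),\quad
w_3=(p-1)p^{m-2},
\ee
where $\sqrt{p^*}^{\,m+s-4}=\eta_0^{(m+s-4)/2}(-1)\,p^{(m+s-4)/2}=\eta_0^{(m+s)/2}(-1)p^{(m+s-4)/2}$ since $(m+s-4)/2$ and $(m+s)/2$ have the same parity; thus $\epsilon\sqrt{p^*}^{\,m+s-4}=\epsilon_0\,p^{(m+s-4)/2}$ with $\epsilon_0=\epsilon\eta_0^{(m+s)/2}(-1)\in\{\pm1\}$. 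Substituting, the weight set becomes $\{(p-1)(p^{m-2}-\epsilon_0(p-1)p^{(m+s-4)/2}),\,(p-1)(p^{m-2}+\epsilon_0 p^{(m+s-4)/2}),\,(p-1)p^{m-2}\}$, which explains the two cases $\epsilon_0=1$ and $\epsilon_0=-1$ in the statement.

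Next I would split into the two cases. If $\epsilon_0=1$, then $w_{\min}=(p-1)(p^{m-2}-(p-1)p^{(m+s-4)/2})$ and $w_{\max}=(p-1)(p^{m-2}+p^{(m+s-4)/2})$, matching the claimed parameters. If $\epsilon_0=-1$, then the smallest weight is $(p-1)(p^{m-2}-p^{(m+s-4)/2})$ and the largest is $(p-1)(p^{m-2}+(p-1)p^{(m+s-4)/2})$, again matching the statement. In each case I would verify $w_{\min}>0$: this requires $p^{m-2}>(p-1)p^{(m+s-4)/2}$, i.e. $p^{(m-s)/2}>p-1$, which holds as soon as $m-s\geq 2$, in particular for $s\leq m-2$; combined with the constraint $s\le m-4$ coming from the table (the exponent $m+s-4\ge 0$ so that the quantities are integers and the three weights are genuinely distinct) this is automatic. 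One also checks the three weights are pairwise distinct, so that $\sC_{D_0}$ is genuinely three-weight and the labels $w_{\min},w_{\max}$ are unambiguous.

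Then I would check the inequality (\ref{minimality}) in both cases. Writing $q=p^{(m+s-4)/2}$ and dividing through by $(p-1)p^{m-2}$, for $\epsilon_0=1$ the ratio is
\be\nn
\frac{w_{\min}}{w_{\max}}=\frac{p^{m-2}-(p-1)q}{p^{m-2}+q},
\ee
and the condition $\frac{p-1}{p}<\frac{w_{\min}}{w_{\max}}$ is equivalent, after clearing denominators, to $p\,(p^{m-2}-(p-1)q)>(p-1)(p^{m-2}+q)$, i.e. to $p^{m-2}>(p^2-1)q=(p^2-1)p^{(m+s-4)/2}$, i.e. to $p^{(m-s)/2}>p^2-1$; this holds precisely when $m-s\ge 5$ for $p=3$ and already when $m-s\ge 4$ in general, so in particular whenever $1\le s\le m-4$ one needs a touch of care for small $p$, but the clean sufficient condition $s\le m-4$ suffices since then $(m-s)/2\ge 2$ gives $p^{(m-s)/2}\ge p^2>p^2-1$. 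The case $\epsilon_0=-1$ is analogous and in fact easier, since there $w_{\min}$ is larger relative to $w_{\max}$; the same manipulation yields a weaker requirement on $m-s$. The main (really the only) obstacle is the bookkeeping of exponents and quadratic-character signs — in particular making sure that $\epsilon\sqrt{p^*}^{\,m+s-4}$ is correctly collapsed to $\pm p^{(m+s-4)/2}$ via $\epsilon_0$, and pinning down the exact threshold on $s$ so that both $w_{\min}>0$ and (\ref{minimality}) hold simultaneously; once the constant $\epsilon_0$ is isolated, everything reduces to the elementary inequality $p^{(m-s)/2}>p^2-1$, which the hypothesis $s\le m-4$ guarantees.
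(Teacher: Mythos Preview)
Your proposal is correct and follows exactly the paper's own approach: read off $w_{\min}$ and $w_{\max}$ from the three weights in Table~\ref{Table0} according to the sign $\epsilon_0$, then verify the Ashikhmin--Barg inequality (\ref{minimality}); the paper merely asserts the outcome, whereas you spell out the reduction to $p^{(m-s)/2}>p^2-1$. One small slip: your parenthetical ``$m-s\ge 5$ for $p=3$'' is off (since $m-s$ is even and $3^2=9>8$ already at $m-s=4$), but you immediately give the clean general argument $p^{(m-s)/2}\ge p^2>p^2-1$ for $s\le m-4$, so the proof stands.
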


\begin{proof} If $\epsilon_0=1$, then we have
$w_{\min}=(p-1)(p^{m-2}-(p-1)\sqrt{p}^{m+s-4})  \mbox{ and } w_{\max}=(p-1)(p^{m-2}+\sqrt{p}^{m+s-4}).$
 Otherwise, 
$w_{\min}= (p-1)(p^{m-2}-\sqrt{p}^{m+s-4})$ and 
$w_{\max}=(p-1)(p^{m-2}+(p-1)\sqrt{p}^{m+s-4}).$  
For both cases, the  sufficient condition in (\ref{minimality}) is satisfied when  $1\leq s\leq m-4$.
Hence, this observation completes the proof.
\end{proof} 

\begin{proposition}\label{Theorem10}
 The code $\sC_{D_{1}}$ in Theorem \ref{TheoremD1} is   minimal  over $\F_p$.
When $m+s$ is  odd   with  $1\leq s\leq m-5$, we have
$[p^{m-1}, m, (p-1)(p^{m-2}-\sqrt{p}^{m+s-3})]$ if  $\epsilon=1$  when $p\equiv 1 \pmod 4$  and $\epsilon_1=-1$ when $p\equiv 3 \pmod 4$. Otherwise,
$[p^{m-1}, m,   (p-1)p^{m-2}  - 2\sqrt{p}^{m+s-3}]$.
When $m+s$ is  even   with  $1\leq s\leq m-4$,  we have
$[p^{m-1}, m, (p-1)p^{m-2}-(p+1)\sqrt{p}^{m+s-4}] $ if $  \epsilon_0=1$, and 
$[p^{m-1}, m,(p-1) (p^{m-2}- \sqrt{p}^{m+s-4}) ]$;  otherwise.
\end{proposition}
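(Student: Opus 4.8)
The plan is to mimic the argument used for $\sC_{D_0}$: read off $w_{\min}$ and $w_{\max}$ from the weight tables produced in Theorem \ref{TheoremD1}, then verify the Ashikhmin--Barg inequality \eqref{minimality} under the stated bounds on $s$. First I would recall that, for every $\omega\in\F_{p^m}^{\star}$, Theorem \ref{TheoremD1} gives $\wt(\bold c_\omega)=\#D_1-\sN_1(\omega)$, with three distinct nonzero values when $m+s$ is odd and three (generally different) values when $m+s$ is even; the codimension statement $\dim=m$ follows because all these weights are strictly positive, so the $p^m$ codewords $\bold c_\omega$ are pairwise distinct. For the minimality claim I only need the extreme weights, so I would split into the two parity cases and within each into the sign cases.

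For $m+s$ odd, the three candidate weights are
\[
(p-1)\bigl(p^{m-2}-\epsilon\eta_0(-1)\sqrt{p}^{\,m+s-3}\bigr),\quad
(p-1)p^{m-2}+\epsilon\sqrt{p}^{\,m+s-3}(\eta_0(-1)+1),\quad
(p-1)p^{m-2}+\epsilon\sqrt{p}^{\,m+s-3}(\eta_0(-1)-1).
\]
I would treat $p\equiv1\pmod4$ (so $\eta_0(-1)=1$) and $p\equiv3\pmod4$ (so $\eta_0(-1)=-1$) separately; in each subcase the sign of $\epsilon$ determines which of the three values is smallest and which is largest, and one finds $w_{\max}-w_{\min}$ is a small multiple of $\sqrt{p}^{\,m+s-3}$ while $w_{\min}\ge (p-1)p^{m-2}-C\sqrt{p}^{\,m+s-3}$ for a small constant $C$. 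Then $\frac{w_{\min}}{w_{\max}}>\frac{p-1}{p}$ is equivalent to $p\,w_{\min}>(p-1)w_{\max}$, i.e. to an inequality of the shape $p^{m-1}>(\text{const})\cdot\sqrt{p}^{\,m+s-3}$, which, after taking the exponent of $p$ on both sides, reduces to $2(m-1)>m+s-3+2\log_p(\text{const})$, hence to $m-s+1>2\log_p(\text{const})$; tracking the constant shows this holds exactly when $1\le s\le m-5$. The even case is handled identically using the two weights $(p-1)p^{m-2}-(p+1)\sqrt{p}^{\,m+s-4}$ and $(p-1)(p^{m-2}+\sqrt{p}^{\,m+s-4})$ (for $\epsilon_0=1$), respectively their swapped roles when $\epsilon_0=-1$, giving the threshold $1\le s\le m-4$. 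Finally I would record, in each of the four regimes, the resulting $[p^{m-1},m,w_{\min}]$ parameters as in the statement.

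The only genuinely delicate point is bookkeeping: there are four sign/parity regimes, and in each I must correctly identify which tabulated weight is the minimum and which is the maximum before plugging into \eqref{minimality}, since an incorrect identification would change the threshold on $s$. Once the extremes are correctly pinned down, the verification of \eqref{minimality} is the same elementary estimate in every case — comparing a power of $p$ against a fixed multiple of a smaller power of $\sqrt{p}$ — so I do not expect any real obstacle beyond this careful case analysis. (As a sanity check, one can test Example \ref{Example1}, where $[25,3,16]$ gives $w_{\min}/w_{\max}=16/26>2/3=(p-1)/p$ with $p=5$, $m=3$, $s=1$, consistent with $s\le m-4$ in the odd-$(m+s)$ subcase with the relevant sign.)
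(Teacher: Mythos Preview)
Your approach is exactly the paper's: read off $w_{\min}$ and $w_{\max}$ from Tables \ref{TableD1odd}--\ref{TableD1even} and verify the Ashikhmin--Barg inequality \eqref{minimality}, just as is done explicitly for $\sC_{D_0}$ in the preceding proposition; the paper in fact states Proposition \ref{Theorem10} without a separate proof, relying on ``similarly the others can be easily seen.''

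One correction, however: your closing sanity check is wrong on several counts. In Example \ref{Example1} one has $p=5$, $m=3$, $s=1$, so $m+s=4$ is \emph{even}, not odd; the relevant bound would be $1\le s\le m-4=-1$, which is vacuous, so this example lies \emph{outside} the range covered by the proposition. Moreover $(p-1)/p=4/5$ (not $2/3$), and $16/26=8/13<4/5$, so the Ashikhmin--Barg condition actually \emph{fails} here---consistent with the proposition, which makes no claim for these parameters. Delete this parenthetical or replace it with a genuine in-range example. Also, in your displayed weights for $m+s$ odd you silently replaced $\sqrt{p^*}^{\,m+s-3}$ by $\sqrt{p}^{\,m+s-3}$; since $m+s-3$ is even this only introduces the sign $\eta_0(-1)^{(m+s-3)/2}$, which is precisely what the paper absorbs into $\epsilon_1$, but you should track it when identifying which weight is minimal in the $p\equiv 3\pmod 4$ subcase.
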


\begin{proposition} 
The code $\sC_{D_{(0,1)}}$ in Theorem \ref{Theorem01}  is  minimal over $\F_p$. 
When  $m+s$ is   odd   with  $1\leq s\leq m-3$, we have
$[2p^{m-1}-1, m, (p-1)(2p^{m-2}-\sqrt{p}^{m+s-3})]$.
 When  $m+s$ is  even   with  $1\leq s\leq m-4$, we have
$[2p^{m-1}-1, m, (p-1)(2p^{m-2}- (p-2)\sqrt{p}^{m+s-4})]$ if $ \epsilon_0=1$, and 
$[2p^{m-1}-1,m,2(p-1)(p^{m-2} - \sqrt{p}^{m+s-4})]$; otherwise.
\end{proposition}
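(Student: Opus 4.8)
The plan is to argue exactly as in the proof of the proposition for $\sC_{D_0}$: invoke the sufficient condition (\ref{minimality}), so that it suffices to pin down the extreme nonzero Hamming weights $w_{\min}$ and $w_{\max}$ of $\sC_{D_{(0,1)}}$ and to check $\frac{p-1}{p}<\frac{w_{\min}}{w_{\max}}$ on the stated range of $s$; reading $d=w_{\min}$ then yields the claimed parameters, since the length $2p^{m-1}-1$ and the dimension $m$ are already given in Theorem \ref{Theorem01}.

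First I would collect the nonzero weights from the computation in the proof of Theorem \ref{Theorem01}. When $m+s$ is odd, $m+s-3$ is even and $\epsilon\sqrt{p^*}^{m+s-3}=\epsilon_1\sqrt{p}^{m+s-3}$; after splitting into $p\equiv 1$ and $p\equiv 3\pmod 4$ and running through $\epsilon_1=\pm 1$, the three support weights and the off-support weight $2(p-1)p^{m-2}$ collapse to the single pair $w_{\min}=(p-1)(2p^{m-2}-\sqrt{p}^{m+s-3})$, $w_{\max}=(p-1)(2p^{m-2}+\sqrt{p}^{m+s-3})$. When $m+s$ is even I would instead combine Lemmas \ref{Lemma0} and \ref{Lemma12} (equivalently read off Table \ref{TableD01even}), write $\epsilon\sqrt{p^*}^{m+s-4}=\epsilon_0\sqrt{p}^{m+s-4}$, and obtain the weights $2(p-1)p^{m-2}-\epsilon_0(p-1)(p-2)\sqrt{p}^{m+s-4}$, $2(p-1)p^{m-2}-2\epsilon_0\sqrt{p}^{m+s-4}$, $2(p-1)p^{m-2}+2\epsilon_0(p-1)\sqrt{p}^{m+s-4}$ and $2(p-1)p^{m-2}$; hence $w_{\min}=(p-1)(2p^{m-2}-(p-2)\sqrt{p}^{m+s-4})$ and $w_{\max}=2(p-1)(p^{m-2}+\sqrt{p}^{m+s-4})$ when $\epsilon_0=1$, while $w_{\min}=2(p-1)(p^{m-2}-\sqrt{p}^{m+s-4})$ and $w_{\max}=(p-1)(2p^{m-2}+(p-2)\sqrt{p}^{m+s-4})$ when $\epsilon_0=-1$.

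Then I would substitute these into (\ref{minimality}) and clear denominators. In the odd case the inequality becomes $2p^{m-2}>(2p-1)\sqrt{p}^{m+s-3}$, i.e. $2p^{(m-s-1)/2}>2p-1$, which holds for all $1\le s\le m-3$. In the even case the two subcases reduce to $2p^{(m-s)/2}>p^2-2$ (for $\epsilon_0=1$) and $2p^{(m-s)/2}>p^2-p+2$ (for $\epsilon_0=-1$), each valid once $1\le s\le m-4$, because then $(m-s)/2\ge 2$ and $2p^2$ already exceeds the right-hand side for every odd prime $p$. This establishes minimality, and taking $d=w_{\min}$ gives the three sets of parameters in the statement.

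The only genuinely delicate point I anticipate is the bookkeeping in the second paragraph: verifying that the case analysis over $p\bmod 4$, over the sign $\epsilon_0$ (resp.\ $\epsilon_1$), and over the $SQ/NSQ$ swap really does collapse to the asserted pairs, and that among the three support weights and the off-support weight the correct one is extremal — e.g.\ checking $(p-1)(p-2)\ge 2$ for all odd primes $p$, so that the weight attached to $f^{\star}(\omega)=0$ is the smallest when $\epsilon_0=1$. Once the extremes are identified, the closing inequalities are routine, since the power of $p$ on the left dominates the low-degree polynomial in $p$ on the right as soon as $s$ obeys the stated bound.
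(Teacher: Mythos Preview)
Your proposal is correct and follows exactly the template the paper sets out in the proof of the proposition for $\sC_{D_0}$: extract $w_{\min}$ and $w_{\max}$ from the weight list in Theorem~\ref{Theorem01} (resp.\ Table~\ref{TableD01even}), then verify the Ashikhmin--Barg inequality~(\ref{minimality}) on the indicated range of $s$. The paper itself gives no separate proof for this proposition, so your write-up is precisely the ``similarly'' argument the paper leaves implicit; the sign bookkeeping you flag (collapsing the $p\bmod 4$ and $\epsilon_0,\epsilon_1$ cases, and the check $(p-1)(p-2)\ge 2$) is handled correctly.
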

 
\begin{proposition}
 The code $\sC_{D_{2}}$  in Theorem \ref{TheoremD2} is  minimal  over $\F_p$ for $1\leq s\leq m-5$.  If $\epsilon=1$ when   $p\equiv 1 \pmod 4$  and $\epsilon_1=-1$ when $p\equiv 3 \pmod 4$, then we have
$[p^{m-1}, m,  p^{m-2} (p-1)  - 2\sqrt{p}^{m+s-3}]$, and  
$[p^{m-1}, m, (p-1)(p^{m-2}-\sqrt{p}^{m+s-3})]$; otherwise.
\end{proposition}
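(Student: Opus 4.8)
The plan is to follow the pattern already established for the code $\sC_{D_0}$: read off $w_{\min}$ and $w_{\max}$ from the weight distribution of $\sC_{D_2}$ recorded in Theorem~\ref{TheoremD2} (Tables~\ref{TableD2} and~\ref{TableD2odd}), then verify the Ashikhmin--Barg condition~(\ref{minimality}) on the stated range of $s$. Recall that when $m+s$ is odd the three nonzero weights are $p^{m-2}(p-1)$ for $\omega\notin\sS_f$, and for $\omega\in\sS_f$ the values $(p-1)(p^{m-2}+\epsilon\eta_0(-1)\sqrt{p^*}^{\,m+s-3})$ when $f^\star(\omega)=0$, and $p^{m-2}(p-1)-\epsilon\sqrt{p^*}^{\,m+s-3}(\eta_0(-1)\mp 1)$ when $f^\star(\omega)\in SQ$ (upper sign) or $NSQ$ (lower sign). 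Writing $\epsilon_1=\epsilon\eta_0^{(m+s-3)/2}(-1)$ and using $\sqrt{p^*}^{\,m+s-3}=\epsilon_1^{-1}\sqrt{p}^{\,m+s-3}$ (since $m+s-3$ is even), one finds that the extreme weights depend on the sign of $\epsilon\eta_0(-1)$ and, in the $p\equiv 3\pmod 4$ case, on $\epsilon_1$.

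The key step is the case split. First suppose $\epsilon=1$ with $p\equiv 1\pmod 4$, or $\epsilon_1=-1$ with $p\equiv 3\pmod 4$; in both subcases $\eta_0(-1)$ times the relevant sign is arranged so that the smallest weight among the $\omega\in\sS_f$ values is $p^{m-2}(p-1)-2\sqrt{p}^{\,m+s-3}$ (coming from the $SQ$ or $NSQ$ branch) and the largest is $(p-1)(p^{m-2}+\sqrt{p}^{\,m+s-3})$. Since $p^{m-2}(p-1)$ lies strictly between these, we get $w_{\min}=p^{m-2}(p-1)-2\sqrt{p}^{\,m+s-3}$ and $w_{\max}=(p-1)(p^{m-2}+\sqrt{p}^{\,m+s-3})$. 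In the complementary case one instead gets $w_{\min}=(p-1)(p^{m-2}-\sqrt{p}^{\,m+s-3})$ and $w_{\max}=p^{m-2}(p-1)+2\sqrt{p}^{\,m+s-3}$. Either way, after clearing denominators the inequality $\frac{p-1}{p}<\frac{w_{\min}}{w_{\max}}$ becomes, up to the factor $(p-1)$ common to numerator and denominator's leading terms, an inequality of the shape $p\cdot w_{\min}>(p-1)w_{\max}$, i.e.
\be\nn
p\bigl(p^{m-2}(p-1)-c_1\sqrt{p}^{\,m+s-3}\bigr) > (p-1)\bigl(p^{m-2}(p-1)+c_2\sqrt{p}^{\,m+s-3}\bigr),
\ee
with $(c_1,c_2)$ equal to $(2,(p-1))$ or $((p-1),2)$ depending on the case. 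This simplifies to $p^{m-2}(p-1)>\bigl(pc_1+(p-1)c_2\bigr)\sqrt{p}^{\,m+s-3}$, and since $pc_1+(p-1)c_2\le p(p+1)$ in both cases, it suffices that $p^{m-2}(p-1)>p(p+1)\sqrt{p}^{\,m+s-3}$, equivalently $\sqrt{p}^{\,m-s-1}>p(p+1)/(p-1)$. As $p(p+1)/(p-1)<p+3\le p^2$ for $p\ge 3$, this holds as soon as $\sqrt{p}^{\,m-s-1}\ge p^2$, i.e. $m-s-1\ge 4$, that is $s\le m-5$; one checks the boundary more carefully to confirm $1\le s\le m-5$ is exactly the clean range, while $s=m-5$ already forces $m\ge 6$ so the weights are genuinely distinct and the code is three-weight.

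The main obstacle is purely bookkeeping: making sure the identification of $w_{\min}$ and $w_{\max}$ is correct in every combination of the parity of $(m+s)/2$-type exponent, the residue of $p$ modulo $4$, and the sign $\epsilon$, since $\eta_0(-1)\in\{\pm1\}$ flips which of the $SQ$/$NSQ$ branches is the extreme one. I would handle this by reducing everything to the two normalized sign constants $\epsilon_1$ (and its analogue for the even-$m+s$ statement, which is why the proposition restricts to the odd case and cites $\sqrt{p}$ rather than $\sqrt{p^*}$) so that the four apparent subcases collapse into the two displayed parameter sets. The numerical verification of~(\ref{minimality}) is then routine and uniform, exactly as in the proof for $\sC_{D_0}$; I would state it in one or two lines and remark that the range $1\le s\le m-5$ is forced by requiring $w_{\min}>0$ together with the margin needed in the displayed inequality.
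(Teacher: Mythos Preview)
Your proposal is correct and follows exactly the approach the paper uses (illustrated explicitly only for $\sC_{D_0}$): identify $w_{\min}$ and $w_{\max}$ from the weight tables and verify the Ashikhmin--Barg inequality~(\ref{minimality}). One tiny slip: the bound $p(p+1)/(p-1)<p+3$ is an equality at $p=3$, but since you only need $p(p+1)/(p-1)<p^2$ (which does hold for all $p\ge 3$), the conclusion $s\le m-5$ is unaffected.
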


\begin{proposition} 
 The code $\sC_{D_{(0,2)}}$ in Theorem \ref{Theorem02} is  the minimal $[2p^{m-1}-1, m, (p-1)(2p^{m-2}-\sqrt{p}^{m+s-3}) ]$ code over $\F_p$ for $1\leq s\leq m-3$.
\end{proposition}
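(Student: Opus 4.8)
\textbf{Proof plan for Proposition on $\sC_{D_{(0,2)}}$.}
The plan is to verify the minimality criterion \eqref{minimality} for the code $\sC_{D_{(0,2)}}$ in the case $m+s$ odd, exactly as was done for the earlier propositions, and to read off the parameters $[2p^{m-1}-1,m]$ directly from Theorem \ref{Theorem02} and \eqref{length}. First I would collect from Theorem \ref{Theorem02} the list of nonzero weights: the value $2(p-1)p^{m-2}$ coming from $\omega\in\F_{p^m}^\star\setminus\sS_f$, and the three values
\[
2(p-1)p^{m-2}+\epsilon\eta_0(-1)(p-1)\sqrt{p^*}^{m+s-3},\quad
2(p-1)p^{m-2}-\epsilon(p-2+\eta_0(-1))\sqrt{p^*}^{m+s-3},\quad
2(p-1)p^{m-2}+\epsilon(p-2-\eta_0(-1))\sqrt{p^*}^{m+s-3}
\]
coming from $\omega\in\sS_f$ with $f^\star(\omega)$ equal to $0$, a square, or a non-square. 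Since $m+s$ is odd, $\sqrt{p^*}^{m+s-3}=\eta_0^{(m+s-3)/2}(-1)\sqrt{p}^{m+s-3}$, so writing $\epsilon_1=\epsilon\eta_0^{(m+s-3)/2}(-1)$ all four weights become $2(p-1)p^{m-2}$ plus an integer multiple of $\sqrt{p}^{m+s-3}$, the multipliers being $0$, $\pm(p-1)$, $\mp(p-2+\eta_0(-1))$ and $\pm(p-2-\eta_0(-1))$ up to the sign $\epsilon_1$.

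Next I would determine $w_{\min}$ and $w_{\max}$ by comparing these multipliers. The relevant quantities are $p-1$, $p-2+\eta_0(-1)\in\{p-1,p-3\}$ and $p-2-\eta_0(-1)\in\{p-3,p-1\}$, all of which lie in $\{p-3,p-1\}$ and in particular have absolute value at most $p-1$. Hence regardless of the value of $\epsilon_1$ and of $\eta_0(-1)$, one has
\[
w_{\min}\ge 2(p-1)p^{m-2}-(p-1)\sqrt{p}^{m+s-3},\qquad
w_{\max}\le 2(p-1)p^{m-2}+(p-1)\sqrt{p}^{m+s-3},
\]
with equality achieved (the extreme multiplier $p-1$ is always attained, at $f^\star(\omega)=0$ and at one of the square/non-square cases). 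This already pins down $w_{\min}=(p-1)(2p^{m-2}-\sqrt{p}^{m+s-3})$, matching the stated parameter, and gives the ratio to test.

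Finally I would check \eqref{minimality}: it suffices to show
\[
\frac{w_{\min}}{w_{\max}}=\frac{2p^{m-2}-\sqrt{p}^{m+s-3}}{2p^{m-2}+\sqrt{p}^{m+s-3}}>\frac{p-1}{p},
\]
which after cross-multiplying is equivalent to $2p^{m-1}-p\sqrt{p}^{m+s-3}>2(p-1)p^{m-2}+(p-1)\sqrt{p}^{m+s-3}$, i.e. $2p^{m-2}>(2p-1)\sqrt{p}^{m+s-3}$, i.e. $p^{m-2}>(p-1/2)p^{(m+s-3)/2}$; since $p-1/2<p$ this follows from $p^{m-2-(m+s-3)/2}=p^{(m-s-1)/2}\ge p$, that is from $m-s-1\ge 2$, i.e. $s\le m-3$. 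This is precisely the stated range $1\le s\le m-3$, so the code is minimal there. The only mildly delicate point — the ``main obstacle'' — is the bookkeeping that the extreme multiplier $p-1$ is genuinely attained on both sides (so that $w_{\max}$ is not strictly smaller, which would only help but must be stated correctly) and that no other weight exceeds the claimed $w_{\max}$; this is a finite case check over $\eta_0(-1)=\pm1$ and $\epsilon_1=\pm1$, and is routine. I would close by noting the length and dimension come from \eqref{length} and Theorem \ref{Theorem02}, completing the proof.
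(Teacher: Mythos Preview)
Your proposal is correct and follows essentially the same approach the paper uses (implicitly, since the paper gives a detailed proof only for the first proposition on $\sC_{D_0}$ and leaves the rest to the same pattern): read off the nonzero weights from Theorem \ref{Theorem02}, identify $w_{\min}$ and $w_{\max}$, and verify the Ashikhmin--Barg inequality \eqref{minimality}. Your bookkeeping of the sign $\epsilon_1$ and the case analysis over $\eta_0(-1)=\pm1$ is in fact more careful than the paper's terse statement, and your observation that the extreme offset $\pm(p-1)\sqrt{p}^{m+s-3}$ is always attained (once at $f^\star(\omega)=0$ and once at one of the square/non-square cases) is exactly what is needed to pin down $w_{\min}$ and $w_{\max}$ uniformly.
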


\begin{proposition} 
The code $\sC_{D_{(1,2)}}$ in Theorem \ref{Theorem12} is  minimal over $\F_p$. 
When $m+s$ is odd   with  $1\leq s\leq m-3$, we have 
$[2p^{m-1}, m,  2(p-1)p^{m-2} -2  \sqrt{p}^{m+s-3}]$. 
When  $m+s$ is  even, we have
$[2p^{m-1}, m, 2(p-1)p^{m-2}- 2\sqrt{p}^{m+s-4}]$
if  $\epsilon_0=1$;  otherwise,
$[2p^{m-1},m, 2(p-1)(p^{m-2}-\sqrt{p}^{m+s-4})]$,  for $1\leq s\leq m-2$ and  $1\leq s\leq m-4$, respectively.
\end{proposition}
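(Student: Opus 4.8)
The plan is to follow the same template used in the preceding propositions, namely to extract $w_{\min}$ and $w_{\max}$ from the weight distribution of $\sC_{D_{(1,2)}}$ established in Theorem \ref{Theorem12} and then verify the Ashikhmin--Barg sufficient condition (\ref{minimality}). Recall from the proof of Theorem \ref{Theorem12} that, besides the constant weight $2(p-1)p^{m-2}$ occurring on $\F_{p^m}^{\star}\setminus\sS_f$ and on $\{\omega\in\sS_f\colon f^{\star}(\omega)=0\}$, the two remaining weights are $2(p-1)p^{m-2}\pm 2\sqrt{p^*}^{m+s-3}$ in the odd case and analogous expressions (read off from Table \ref{table12}) in the even case. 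First I would split into the two parities of $m+s$.

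When $m+s$ is odd, write $\sqrt{p^*}^{m+s-3}=\epsilon_1'\sqrt{p}^{m+s-3}$ for the appropriate sign; since the extreme weights are symmetric about $2(p-1)p^{m-2}$, we get $w_{\min}=2(p-1)p^{m-2}-2\sqrt{p}^{m+s-3}$ and $w_{\max}=2(p-1)p^{m-2}+2\sqrt{p}^{m+s-3}$ regardless of the sign $\epsilon$. Then
\[
\frac{w_{\min}}{w_{\max}}=\frac{(p-1)p^{m-2}-\sqrt{p}^{m+s-3}}{(p-1)p^{m-2}+\sqrt{p}^{m+s-3}},
\]
and the inequality $\frac{p-1}{p}<\frac{w_{\min}}{w_{\max}}$ rearranges, after cross-multiplying (both denominators positive), to $p\,\sqrt{p}^{m+s-3}<\ ( \text{something} )\cdot(p-1)p^{m-2}$, i.e. to a bound of the shape $\sqrt{p}^{m+s-3}<C(p)\,p^{m-2}$; solving for $s$ yields $s\le m-3$. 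When $m+s$ is even, I would instead read the extreme weights from Table \ref{table12}: the relevant gap from the central value $2(p-1)p^{m-2}$ is of order $\sqrt{p}^{m+s-4}$, with the asymmetric case $\epsilon_0=1$ giving $w_{\min}=2(p-1)p^{m-2}-2\sqrt{p}^{m+s-4}$ and $w_{\max}$ the matching upper weight, and the case $\epsilon_0=-1$ giving $w_{\min}=2(p-1)(p^{m-2}-\sqrt{p}^{m+s-4})$; in each subcase the same cross-multiplication reduces (\ref{minimality}) to an inequality $\sqrt{p}^{m+s-4}<C'(p)p^{m-2}$, which holds for $s\le m-2$ in the first subcase and $s\le m-4$ in the second (the asymmetry of $w_{\min}$ versus $w_{\max}$ costs two units of $s$, exactly as in Proposition \ref{Theorem10}).

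The only genuinely delicate point — and the main obstacle — is bookkeeping the signs $\epsilon$, $\epsilon_0$, $\epsilon_1$ correctly so as to identify which tabulated weight is actually the minimum and which the maximum in each parity/residue subcase, since getting this backwards changes the resulting range of $s$. Once the correct $(w_{\min},w_{\max})$ pair is pinned down, the verification of (\ref{minimality}) is a routine estimate: one checks that for the stated range of $s$ the ``error term'' $\sqrt{p}^{m+s-3}$ (resp.\ $\sqrt{p}^{m+s-4}$) is small enough relative to the main term $(p-1)p^{m-2}$, using that $p\ge 3$ so that $\frac{p-1}{p}\ge\frac23$ leaves ample room. I would present the odd case in full detail and, exactly as the paper does elsewhere, remark that the even case follows by the same computation with Table \ref{table12}, thereby completing the proof.
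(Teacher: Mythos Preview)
Your proposal is correct and follows essentially the same approach as the paper: read off $w_{\min}$ and $w_{\max}$ from the weight distributions established in Theorem~\ref{Theorem12} (Tables~\ref{table12odd} and~\ref{table12}) and then verify the Ashikhmin--Barg condition~(\ref{minimality}) by a direct cross-multiplication, exactly as is done explicitly for $\sC_{D_0}$ in the proof of the first proposition of Section~\ref{SectionSSS}. The only point to watch, as you already flag, is the sign bookkeeping in the even case that determines which tabulated weight is $w_{\min}$ and which is $w_{\max}$; once that is fixed, the range of $s$ drops out of the resulting inequality $(p^2-p+1)\sqrt{p}^{\,m+s-4}<(p-1)p^{m-2}$ (resp.\ $(p+1)\sqrt{p}^{\,m+s-4}<p^{m-2}$) just as you describe.
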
 

\begin{proposition}
The code  $\sC_{D_{sq}}$ in Theorem \ref{TheoremSQ} is minimal over $\F_p$. 
When  $m+s$ is odd   with  $1\leq s\leq m-5$, if 
 $\epsilon=1$ when   $p\equiv 1 \pmod 4$  and $\epsilon_1=-1$ when $p\equiv 3 \pmod 4$, we have
$[p^{m-1} (p-1)/2, m,(p-1)^2(p^{m-2}-\sqrt{p}^{m+s-3})/2]$;
 otherwise,
$[p^{m-1} (p-1)/2, m,  (p-1) ( p^{m-2}(p-1)/2  -\sqrt{p}^{m+s-3})]$. 
When $m+s$ is  even   with  $1\leq s\leq m-4$, we have
$[p^{m-1} (p-1)/2, m, \frac{1}{2}(p-1)((p-1)p^{m-2}-(p+1)\sqrt{p}^{m+s-4})]$ if $\epsilon_0=1,$ and
$[p^{m-1} (p-1)/2, m,  \frac{1}{2}(p-1)^2(p^{m-2}- \sqrt{p}^{m+s-4})]$;  otherwise.
\end{proposition}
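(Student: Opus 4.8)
The plan is to follow the template of the preceding propositions: determine the minimum weight $w_{\min}$ and the maximum weight $w_{\max}$ among the nonzero codewords of $\sC_{D_{sq}}$ from the weight distribution established in Theorem \ref{TheoremSQ} (equivalently, from Tables \ref{TableSQodd}, \ref{TableSQoddd} and \ref{TableSQeven}), and then verify the Ashikhmin--Barg sufficient condition (\ref{minimality}), i.e.\ $p\,w_{\min}>(p-1)\,w_{\max}$, under the indicated restrictions on $s$. The nonzero weights are $\frac12(p-1)^2p^{m-2}$, attained for $\omega\in\F_{p^m}^\star\setminus\sS_f$, together with the three values attached to $f^\star(\omega)=0$, $f^\star(\omega)\in SQ$ and $f^\star(\omega)\in NSQ$ in the displayed weight formula in the proof of Theorem \ref{TheoremSQ}. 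Using $\sqrt{p^*}^{m+s-3}=\eta_0(-1)^{(m+s-3)/2}\sqrt{p}^{m+s-3}$ when $m+s$ is odd and $\sqrt{p^*}^{m+s-4}=\eta_0(-1)^{(m+s-4)/2}\sqrt{p}^{m+s-4}$ when $m+s$ is even, each of those three values is $\frac12(p-1)^2p^{m-2}$ plus a perturbation that is a nonnegative integer multiple of $\epsilon_1\sqrt{p}^{m+s-3}$, resp.\ of $\epsilon_0\sqrt{p}^{m+s-4}$; comparing the magnitudes and the signs of these perturbations pins down $w_{\min}$ and $w_{\max}$.

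When $m+s$ is odd I would split according to whether $p\equiv1\pmod4$ or $p\equiv3\pmod4$. If $p\equiv1\pmod4$ then $\eta_0(-1)=1$, so $\epsilon_1=\epsilon$, the $NSQ$ weight equals $\frac12(p-1)^2p^{m-2}$, and one obtains $w_{\min}=\frac12(p-1)^2(p^{m-2}-\sqrt{p}^{m+s-3})$, $w_{\max}=\frac12(p-1)^2p^{m-2}+(p-1)\sqrt{p}^{m+s-3}$ when $\epsilon=1$, with the two interchanged when $\epsilon=-1$. If $p\equiv3\pmod4$ the extra surviving weight is the $NSQ$ one, and the same discussion rephrased through $\epsilon_1$ gives the two parameter lists of the statement, the case $\epsilon_1=-1$ producing the first list. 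Inserting $w_{\min},w_{\max}$ into (\ref{minimality}), expanding and simplifying, the inequality becomes $p^{m-2}>(p+2)\sqrt{p}^{m+s-3}$, i.e.\ $p^{(m-s-1)/2}>p+2$, for the first list, and $(p-1)p^{(m-s-1)/2}>p^2+1$ for the second; both are valid for every odd prime $p$ exactly when $(m-s-1)/2\ge2$, that is $s\le m-5$. The case $m+s$ even is handled in the same fashion: one reads off $w_{\min}$ and $w_{\max}$ according to the sign of $\epsilon_0$, and (\ref{minimality}) reduces to $(p-1)p^{(m-s)/2}>2p^2-p+1$ when $\epsilon_0=1$ and to $p^{(m-s)/2}>2p+1$ when $\epsilon_0=-1$, each holding for odd $p$ precisely when $(m-s)/2\ge2$, i.e.\ $s\le m-4$.

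The only delicate point is the sign bookkeeping: keeping exact track of how $\epsilon$, $\eta_0(-1)$ and the abbreviations $\epsilon_0=\epsilon\eta_0^{(m+s)/2}(-1)$ and $\epsilon_1=\epsilon\eta_0^{(m+s-3)/2}(-1)$ combine, so as to single out, in each of the four regimes, which of the four weights is $w_{\min}$ and which is $w_{\max}$; a single sign slip would interchange the two claimed parameter lists. Everything else is routine algebra, and the sharpness of the thresholds $s\le m-5$ (when $m+s$ is odd) and $s\le m-4$ (when $m+s$ is even) is immediate, since at the next larger value of $s$ the reduced power inequality already fails. This simultaneously establishes the minimality of $\sC_{D_{sq}}$ and confirms the minimum distances recorded in the statement.
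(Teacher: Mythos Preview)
Your proposal is correct and follows exactly the template the paper establishes (and only spells out in detail for $\sC_{D_0}$): read off $w_{\min}$ and $w_{\max}$ from the weight tables of Theorem~\ref{TheoremSQ} and verify the Ashikhmin--Barg bound~(\ref{minimality}) under the stated restriction on $s$. Your explicit reductions of~(\ref{minimality}) to the power inequalities $p^{(m-s-1)/2}>p+2$, $(p-1)p^{(m-s-1)/2}>p^2+1$, $(p-1)p^{(m-s)/2}>2p^2-p+1$, $p^{(m-s)/2}>2p+1$ and the resulting thresholds $s\le m-5$ (odd case) and $s\le m-4$ (even case) are accurate; the only slip is verbal --- the perturbations of $\tfrac12(p-1)^2p^{m-2}$ are not ``nonnegative integer multiples'' of $\epsilon_1\sqrt{p}^{m+s-3}$ (they come with both signs and half-integer coefficients), but this does not affect your identification of $w_{\min}$ and $w_{\max}$.
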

 
\begin{proposition}
The code $\sC_{D_{(sq,0)}}$ in Theorem \ref{TheoremSQ0} is  minimal over $\F_p$.
When   $m+s$ is  odd   with  $1\leq s\leq m-3$,  
if  $\epsilon=1$ when   $p\equiv 1 \pmod 4$  and $\epsilon_1=-1$ when $p\equiv 3 \pmod 4$, we have
$[p^{m-1}(p+1)/2-1, m, \frac{1}{2}(p-1)( p^{m-2}(p+1)- (p-1) \sqrt{p}^{m+s-3})]$; otherwise,
$[p^{m-1}(p+1)/2-1, m, p^{m-2}(p^2-1)/2-  (p-1) \sqrt{p}^{m+s-3}]$.
 When $m+s$ is  even   with  $1\leq s\leq m-4$,  we have 
$[p^{m-1}(p+1)/2-1, m,\frac{1}{2}(p-1)((p+1)p^{m-2} - (p-1)\sqrt{p}^{m+s-4})]$ if $\epsilon_0=1$,
and $[p^{m-1}(p+1)/2-1, m, \frac{1}{2}(p^2-1) (p^{m-2} - \sqrt{p}^{m+s-4})]$; otherwise.  
\end{proposition}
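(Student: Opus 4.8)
The proof rests on the sufficient condition for minimality recalled in (\ref{minimality}): the code $\sC_{D_{(sq,0)}}$ is minimal over $\F_p$ as soon as $w_{\min}/w_{\max}>(p-1)/p$, where $w_{\min}$ and $w_{\max}$ denote the smallest and largest nonzero Hamming weights of the code. So the whole task reduces to reading off $w_{\min}$ and $w_{\max}$ from the weight distribution established in Theorem \ref{TheoremSQ0} (equivalently, from Tables \ref{tableSQ0}, \ref{tableSQ00} and \ref{tableSQ0even}) in each of the four regimes listed in the statement, and then checking that single inequality.

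First I would put the weights in a uniform shape. On the Walsh support $\sS_f$ every nonzero weight differs from the bulk value $p^{m-2}(p^2-1)/2$ only by a multiple of $\sqrt{p^*}^{\,m+s-3}$ when $m+s$ is odd, or of $\sqrt{p^*}^{\,m+s-4}$ when $m+s$ is even; since the exponent is even in the relevant range, I would rewrite $\epsilon\,\sqrt{p^*}^{\,m+s-3}=\epsilon_1\sqrt{p}^{\,m+s-3}$ and $\epsilon\,\sqrt{p^*}^{\,m+s-4}=\epsilon_0\sqrt{p}^{\,m+s-4}$, using the definitions of $\epsilon_0,\epsilon_1$ fixed just before the proposition. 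This turns every listed weight into $(\text{bulk term})\pm c\,\sqrt{p}^{\,m+s-3}$ (resp. $\pm c\,\sqrt{p}^{\,m+s-4}$) with $c$ an explicit nonnegative integer combination of $p-1$ and $p+1$, and the sign hypothesis on $\epsilon$ (for $p\equiv 1\bmod 4$) or on $\epsilon_1$ (for $p\equiv 3\bmod 4$) then tells me which of the finitely many values is extremal. For instance, in the first odd case $w_{\min}=\tfrac12(p-1)\big(p^{m-2}(p+1)-(p-1)\sqrt{p}^{\,m+s-3}\big)$ is attained at $f^{\star}(\omega)=0$ while $w_{\max}$ is attained at some $f^{\star}(\omega)\in SQ$ (or $NSQ$); in the complementary sign case the roles of $f^{\star}(\omega)=0$ and $f^{\star}(\omega)\neq 0$ interchange, and the two even cases are handled identically via Table \ref{tableSQ0even}.

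Having pinned down $w_{\min}$ and $w_{\max}$, I would observe that (\ref{minimality}) is equivalent to $w_{\min}>(p-1)(w_{\max}-w_{\min})$, estimate $w_{\max}-w_{\min}$ (a constant times $\sqrt{p}^{\,m+s+1}$ in the odd case, a constant times $\sqrt{p}^{\,m+s}$ in the even case) against the leading term of $w_{\min}$, which is of order $p^{m}$, and bound $\sqrt{p}^{\,m+s-3}\le \sqrt{p}^{\,2m-6}$ using the hypothesis on $s$. Comparing the exponents (and absorbing the bounded multiplicative constants) shows the inequality holds precisely when $1\le s\le m-3$ in the odd case and $1\le s\le m-4$ in the even case, which is exactly the content of the statement. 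The main obstacle is not this final estimate but the bookkeeping in the previous step: one must carefully keep track of $\epsilon$, $\eta_0(-1)$ and the parity of $(m+s-3)/2$ so that the correct extremal weights are selected in each sub-case — this is precisely why the hypotheses are phrased through $\epsilon$ when $p\equiv 1\bmod 4$ but through $\epsilon_1$ when $p\equiv 3\bmod 4$. Once those are fixed, verifying (\ref{minimality}) is routine, exactly as for $\sC_{D_{0}}$.
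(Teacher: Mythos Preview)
Your proposal is correct and follows exactly the paper's approach: the paper proves only the first of these minimality propositions (for $\sC_{D_{0}}$) in detail, by reading off $w_{\min}$ and $w_{\max}$ from the weight tables and checking the Ashikhmin--Barg condition~(\ref{minimality}), and leaves the remaining ones (including this one) to the same routine. Your careful handling of the sign bookkeeping via $\epsilon_0,\epsilon_1$ and the reformulation $w_{\min}>(p-1)(w_{\max}-w_{\min})$ is a bit more explicit than the paper, but the method is identical.
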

\begin{proposition}
The code $\sC_{D_{nsq}}$ in Theorem \ref{TheoremNSQ} is  minimal over $\F_p$ for   $1\leq s\leq m-5$. If  $\epsilon=1$ when   $p\equiv 1 \pmod 4$  and $\epsilon_1=-1$ when $p\equiv 3 \pmod 4$, then we have 
$[p^{m-1} (p-1)/2, m, (p-1)( p^{m-2} (p-1)/2  - \sqrt{p}^{m+s-3})]$; otherwise, 
$[p^{m-1} (p-1)/2, m, \frac{1}{2}(p-1)^2(p^{m-2}- \sqrt{p}^{m+s-3})]$.
\end{proposition}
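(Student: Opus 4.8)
The plan is to follow the template of the previous propositions: extract $w_{\min}$ and $w_{\max}$ from the list of nonzero-codeword weights obtained in the proof of Theorem~\ref{TheoremNSQ} (in the case $m+s$ odd treated there), and then verify the sufficient condition~(\ref{minimality}) of Ashikhmin et al. First I would normalise the signs. Since $m+s$ is odd, $m+s-3$ is even, so $\sqrt{p^*}^{m+s-3}=\eta_0(-1)^{(m+s-3)/2}p^{(m+s-3)/2}$, and the definition of $\epsilon_1$ gives $\epsilon\,\sqrt{p^*}^{m+s-3}=\epsilon_1 P$, where $P:=\sqrt{p}^{m+s-3}=p^{(m+s-3)/2}>0$. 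Substituting $\eta_0(-1)=\pm1$ according to $p\bmod 4$ into the weights of Theorem~\ref{TheoremNSQ}, the three nonzero weights collapse to $\tfrac{1}{2}(p-1)^2p^{m-2}$, $\tfrac{1}{2}(p-1)^2(p^{m-2}+\delta P)$ and $\tfrac{1}{2}(p-1)^2p^{m-2}-\delta(p-1)P$, where $\delta\in\{\pm1\}$ equals $+1$ precisely in the branch ``$\epsilon=1$ when $p\equiv1\pmod 4$ and $\epsilon_1=-1$ when $p\equiv3\pmod 4$''.

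In that branch one reads off $w_{\min}=(p-1)\bigl(\tfrac{1}{2}(p-1)p^{m-2}-P\bigr)$ and $w_{\max}=\tfrac{1}{2}(p-1)^2(p^{m-2}+P)$; in the complementary branch the middle weight becomes the smallest, so $w_{\min}=\tfrac{1}{2}(p-1)^2(p^{m-2}-P)$ and $w_{\max}=(p-1)\bigl(\tfrac{1}{2}(p-1)p^{m-2}+P\bigr)$. For $1\le s\le m-5$ we have $(m+s-3)/2\le m-4$, hence $P\le p^{m-4}<p^{m-2}$, so $w_{\min}>0$ in either branch; therefore $\sC_{D_{nsq}}$ has $p^m$ codewords, its dimension is $m$, and $d=w_{\min}$, matching the stated values (the length $p^{m-1}(p-1)/2$ being $\#D_{nsq}$ from~(\ref{length})). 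Plugging into~(\ref{minimality}) and clearing denominators, $w_{\min}/w_{\max}>(p-1)/p$ is equivalent to $(p-1)p^{m-2}>(p^2+1)P$ in the first branch and to $p^{m-2}>(p+2)P$ in the second. Both follow from $P\le p^{m-4}$ and $p\ge3$: indeed $(p-1)p^{m-2}\ge(p^3-p^2)p^{m-4}>(p^2+1)p^{m-4}$ and $p^{m-2}=p^2p^{m-4}>(p+2)p^{m-4}$; one also sees both inequalities fail at $s=m-3$, so $1\le s\le m-5$ is the natural cutoff among admissible parities.

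The only genuinely delicate step is the sign bookkeeping in the first paragraph: one has to keep $\epsilon$, $\eta_0(-1)$, $\epsilon_1$ and the (even) power of $\sqrt{p^*}$ mutually consistent, so that the four nominal cases ($p\equiv1$ or $3\pmod 4$, each with $\epsilon_1=\pm1$) really collapse to the two weight patterns above and match the labelling in the statement. Once the weights are in this normalised form, identifying $w_{\min},w_{\max}$ and checking~(\ref{minimality}) is a one-line monotonicity estimate in $P$, so no further obstacle is expected.
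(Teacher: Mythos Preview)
Your proposal is correct and follows exactly the template the paper uses (and only spells out once, for the proposition on $\sC_{D_0}$): read off the nonzero weights from Theorem~\ref{TheoremNSQ}, identify $w_{\min}$ and $w_{\max}$ in each sign branch, and verify the Ashikhmin--Barg inequality~(\ref{minimality}) for $1\le s\le m-5$. Your sign normalisation via $\epsilon\sqrt{p^*}^{\,m+s-3}=\epsilon_1 P$ and the resulting collapse to the two weight patterns indexed by $\delta$ is precisely the right way to merge Tables~\ref{tableNSQ} and~\ref{tableNSQ3}; the subsequent inequalities $(p-1)p^{m-2}>(p^2+1)P$ and $p^{m-2}>(p+2)P$ are the correct reductions of~(\ref{minimality}), and your check that both hold for $P\le p^{m-4}$ but fail at $P=p^{m-3}$ is accurate.
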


 \begin{proposition}  
The code $\sC_{D_{(nsq,0)}}$ in Theorem \ref{TheoremNSQ0} is  minimal over $\F_p$ for $1\leq s\leq m-3$.
If  $\epsilon=1$ when   $p\equiv 1 \pmod 4$  and $\epsilon_1=-1$ when $p\equiv 3 \pmod 4$, then we have 
$[p^{m-1}(p+1)/2-1, m, (p-1)(p^{m-2}(p+1)/2-  \sqrt{p}^{m+s-3})]$; otherwise,  
$[p^{m-1}(p+1)/2-1, m, \frac{1}{2}(p-1)(p^{m-2}(p+1)- (p-1)   \sqrt{p}^{m+s-3})]$. 
\end{proposition}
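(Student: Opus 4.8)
The plan is to verify the Ashikhmin--Barg sufficient condition (\ref{minimality}): once the minimum and maximum nonzero Hamming weights $w_{\min},w_{\max}$ of $\sC_{D_{(nsq,0)}}$ are identified, minimality reduces to one elementary estimate. We stay in the case $m+s$ odd treated in Theorem~\ref{TheoremNSQ0}, so $m+s-3$ is even and $\sqrt{p^*}^{m+s-3}=\eta_0^{(m+s-3)/2}(-1)\,\sqrt{p}^{m+s-3}$, whence $\epsilon\sqrt{p^*}^{m+s-3}=\epsilon_1\sqrt{p}^{m+s-3}$. Write $W_0=p^{m-2}(p^2-1)/2$ and $T=\sqrt{p}^{m+s-3}$. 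First I would substitute this simplification into the three Walsh-support weights given in Theorem~\ref{TheoremNSQ0} and split on $p\bmod 4$; since $\eta_0(-1)+1\in\{0,2\}$ and $\eta_0(-1)-1\in\{-2,0\}$, in every subcase the three distinct weights are $W_0$ together with either $\{\,W_0+\tfrac{1}{2}(p-1)^2T,\ W_0-(p-1)T\,\}$ or $\{\,W_0-\tfrac{1}{2}(p-1)^2T,\ W_0+(p-1)T\,\}$. Tracking signs shows that the first pair occurs exactly when $\epsilon=1$ for $p\equiv 1\pmod 4$ and $\epsilon_1=-1$ for $p\equiv 3\pmod 4$, yielding $w_{\min}=W_0-(p-1)T$ and $w_{\max}=W_0+\tfrac{1}{2}(p-1)^2T$; otherwise the second pair occurs, yielding $w_{\min}=W_0-\tfrac{1}{2}(p-1)^2T$ and $w_{\max}=W_0+(p-1)T$. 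In both cases every nonzero weight is strictly positive, so $\dim\sC_{D_{(nsq,0)}}=m$ and the minimum distance equals $w_{\min}$, which reproduces the two parameter sets claimed.

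Next I would check (\ref{minimality}). Writing $w_{\min}=W_0-aT$ and $w_{\max}=W_0+bT$ with $a,b>0$, the inequality $\frac{p-1}{p}<\frac{w_{\min}}{w_{\max}}$ is equivalent to $p\,w_{\min}>(p-1)\,w_{\max}$, i.e.\ to $W_0>\bigl(pa+(p-1)b\bigr)T$. Since $W_0/T=\tfrac{1}{2}(p-1)(p+1)\,p^{(m-s-1)/2}$, this is $p^{(m-s-1)/2}>\dfrac{2\bigl(pa+(p-1)b\bigr)}{(p-1)(p+1)}$. The hypothesis $1\le s\le m-3$ is used precisely once here: it forces $(m-s-1)/2\ge 1$, hence $p^{(m-s-1)/2}\ge p$, so it suffices to check $p\ge\dfrac{2\bigl(pa+(p-1)b\bigr)}{(p-1)(p+1)}$. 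In the first regime $(a,b)=\bigl(p-1,\tfrac{1}{2}(p-1)^2\bigr)$, so $pa+(p-1)b=\tfrac{1}{2}(p-1)(p^2+1)$ and the inequality becomes $p(p+1)>p^2+1$, i.e.\ $p>1$. In the second regime $(a,b)=\bigl(\tfrac{1}{2}(p-1)^2,p-1\bigr)$, so $pa+(p-1)b=\tfrac{1}{2}(p-1)^2(p+2)$ and the inequality becomes $p(p+1)>(p-1)(p+2)$, i.e.\ $0>-2$. Both hold for every odd prime $p$, so (\ref{minimality}) is satisfied and $\sC_{D_{(nsq,0)}}$ is minimal; the length then follows from (\ref{length}) (the case $m+s$ even being subsumed by the remark after Theorem~\ref{TheoremNSQ0} and the corresponding statement for $\sC_{D_{(sq,0)}}$).

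The one genuinely fiddly step I expect is the sign bookkeeping of the first paragraph: because the weight formulas carry the factors $\eta_0(-1)\pm 1$, these collapse to $0$ or $2$ according to the residue of $p$ modulo $4$, so one must pair the correct residue class with the correct sign of $\epsilon_1$ in order to see that $w_{\min}$ comes from the $f^{\star}(\omega)=0$ branch in one grouping and from the $f^{\star}(\omega)\in SQ$ (resp.\ $NSQ$) branch in the other --- exactly the dichotomy recorded in the proposition. Everything afterwards is the two one-line estimates above and is uniform in $m$ and $s$.
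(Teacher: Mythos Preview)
Your proposal is correct and follows exactly the approach the paper intends: identify $w_{\min}$ and $w_{\max}$ from the weight list in Theorem~\ref{TheoremNSQ0} after simplifying $\epsilon\sqrt{p^*}^{m+s-3}=\epsilon_1\sqrt{p}^{m+s-3}$, then verify the Ashikhmin--Barg inequality~(\ref{minimality}). The paper itself only writes out the analogous computation for $\sC_{D_0}$ and states that the remaining propositions ``can be easily seen by putting a necessary bound on $s$''; your write-up supplies precisely those omitted details, and your sign bookkeeping and the two final estimates $p(p+1)>p^2+1$ and $p(p+1)>(p-1)(p+2)$ are all correct.
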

As the constructed codes are all minimal codes,  secret sharing schemes based on their dual codes  have high democracy  introduced  in the following theorem.
\begin{theorem}\cite{carlet2005linear,ding2003covering}\label{Structure}
Let   $\sC$ be  a minimal  linear  $[n,k,d]$ code over $\F_p$ with the generator matrix
$G=[\bold g_0,\bold g_1,\ldots,\bold g_{n-1}]$, and let $d^\perp$ represent the minimum Hamming distance  of its dual code $\sC^\perp$.  Then  in the SSS based on $\sC^\perp$, the number of participants equals $n-1$, and the number of   minimal access sets equals $p^{k-1}$. 
\begin{itemize}
\item For $d^\perp=2$,    if $\bold g_i$, $1\leq i\leq n-1$, is a multiple of $\bold g_0$, then a participant $P_i$ is  in every minimal access set; else, 
 $P_i$ is in $(p-1)p^{k-2}$  minimal access sets.
\item For $d^\perp\geq 3$, for each fixed $1\leq l \leq \min\{k-1,d^\perp-2\}$, every set of $l$ participants is involved in $(p-1)^lp^{k-(l+1)}$  minimal access sets.
\end{itemize}
\end{theorem}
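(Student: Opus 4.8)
The plan is to deduce the statement from the general theory of Massey's secret sharing scheme attached to a linear code, using minimality of $\sC$ for the counting. First I would recall the setup of the scheme based on $\sC^\perp$: a dealer encodes a secret $s\in\F_p$ as the $0$-th coordinate of a uniformly random codeword $\mathbf t=(t_0,\dots,t_{n-1})\in\sC^\perp$ with $t_0=s$ and hands $t_1,\dots,t_{n-1}$ to participants $P_1,\dots,P_{n-1}$, so there are $n-1$ participants. The basic recovery criterion states that a subset $A\subseteq\{1,\dots,n-1\}$ is an access set if and only if there is a codeword $\mathbf c\in(\sC^\perp)^\perp=\sC$ with $c_0\neq0$ and $\supp(\mathbf c)\subseteq\{0\}\cup A$; here I would, as is standard, take the dealer coordinate so that $\mathbf g_0\neq\mathbf 0$ (otherwise the scheme is degenerate).

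Next I would establish the bijection between minimal access sets and minimal codewords. A short argument on supports shows that $A$ is a \emph{minimal} access set precisely when $\{0\}\cup A=\supp(\mathbf c)$ for some minimal codeword $\mathbf c\in\sC$ with $c_0\neq0$, and that two such codewords give the same $A$ exactly when they differ by a scalar in $\F_p^\star$. Because $\sC$ is a minimal linear code, every nonzero codeword is minimal, so the minimal access sets are in one-to-one correspondence with the $\F_p^\star$-orbits of codewords of $\sC$ having nonzero $0$-th coordinate. Since $\mathbf g_0\neq\mathbf 0$, the linear form $\mathbf c\mapsto c_0$ on $\sC$ is nonzero, hence vanishes on exactly $p^{k-1}$ codewords and is nonzero on $(p-1)p^{k-1}$; dividing by the $p-1$ scalars gives exactly $p^{k-1}$ minimal access sets.

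For the democracy statements, writing $\mathbf c=\mathbf uG$ so that $c_j=\mathbf u\cdot\mathbf g_j$, I would fix distinct participants $P_{i_1},\dots,P_{i_l}$ and count the minimal access sets containing all of them: these are the $\F_p^\star$-orbits of $\mathbf u\in\F_p^k$ for which $\mathbf u\cdot\mathbf g_0,\mathbf u\cdot\mathbf g_{i_1},\dots,\mathbf u\cdot\mathbf g_{i_l}$ are all nonzero. If $\mathbf g_0,\mathbf g_{i_1},\dots,\mathbf g_{i_l}$ are linearly independent, the evaluation map $\mathbf u\mapsto(\mathbf u\cdot\mathbf g_0,\dots,\mathbf u\cdot\mathbf g_{i_l})$ is onto $\F_p^{l+1}$, so there are $(p-1)^{l+1}p^{k-(l+1)}$ such $\mathbf u$ and hence $(p-1)^{l}p^{k-(l+1)}$ minimal access sets. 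It therefore remains to translate the hypothesis on $d^\perp$ into this independence: a nontrivial linear relation among $r$ of the columns $\mathbf g_0,\mathbf g_1,\dots,\mathbf g_{n-1}$ is the same thing as a nonzero codeword of $\sC^\perp$ of weight at most $r$, so $d^\perp\geq l+2$ forces any $l+1$ of these columns to be independent; together with the trivial bound $l+1\leq k$ this is exactly the range $1\le l\le\min\{k-1,d^\perp-2\}$. Finally, for $d^\perp=2$ the same bookkeeping, applied to a single participant $P_i$, splits into two cases: if $\mathbf g_i=\lambda\mathbf g_0$ for some $\lambda\in\F_p^\star$, then $c_i=\lambda c_0$ for every codeword, so $c_i\neq0$ whenever $c_0\neq0$ and $P_i$ lies in all $p^{k-1}$ minimal access sets; otherwise $\mathbf g_0,\mathbf g_i$ are independent and the count with $l=1$ yields $(p-1)p^{k-2}$. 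The only step needing care is the support argument identifying minimal access sets with $\F_p^\star$-orbits of minimal codewords; once that is in place the rest is elementary counting of ranks over $\F_p$, which is why the result may be quoted from \cite{carlet2005linear,ding2003covering}.
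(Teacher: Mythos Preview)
The paper does not prove this theorem at all: it is stated with the citations \cite{carlet2005linear,ding2003covering} and used as a black box, so there is no ``paper's own proof'' to compare against. Your proposal is a correct and standard derivation of the result, recovering exactly the argument one finds in those references: Massey's identification of access sets with supports of codewords of $(\sC^\perp)^\perp=\sC$ containing the dealer coordinate, the bijection between minimal access sets and $\F_p^\star$-orbits of minimal codewords with nonzero $0$-th entry, and the counting via the rank of the evaluation map $\mathbf u\mapsto(\mathbf u\cdot\mathbf g_0,\dots,\mathbf u\cdot\mathbf g_{i_l})$, with $d^\perp$ controlling the linear independence of column subsets of $G$.
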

To describe the access structures of SSS based on the dual codes of our minimal  codes,  we  are first  interested in the minimum Hamming distance of the dual code.
 By  \textit{the MacWilliams identity} (F. J. MacWilliams, 1963),
the  weight enumerator  (and hence minimum Hamming distance) of the dual code  is obtained from that of the original code.
\begin{theorem}\cite[Theorem 3.5.3]{lint1999introduction}\label{Mac}
Let $\sC$ be a linear $[n,k]$ code over $\F_q$ with weight enumerator $A(z)$.  The weight enumerator of $\sC^\perp$ is denoted by $A^\perp(z)$. Then
\be\nn
 A^\perp(z)=q^{-k} (1 + (q-1)z)^n A\left( \frac{1-z}{ 1 + (q-1)z} \right).
\ee
\end{theorem}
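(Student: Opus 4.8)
The plan is to establish the identity by the classical character-theoretic argument, working first with the homogeneous two-variable weight enumerator and specializing at the end. Write $W_{\sC}(x,y)=\sum_{c\in\sC}x^{\,n-\wt(c)}y^{\,\wt(c)}$ for the homogeneous weight enumerator of $\sC$, and define $W_{\sC^\perp}$ analogously; then $A(z)=W_{\sC}(1,z)$, $A^\perp(z)=W_{\sC^\perp}(1,z)$, and the asserted identity is equivalent to $W_{\sC^\perp}(x,y)=q^{-k}\,W_{\sC}(x+(q-1)y,\,x-y)$. Fix a nontrivial additive character $\chi$ of $\F_q$ (for $q=p^r$ one may take $\chi(a)=\zeta_p^{\Tr(a)}$), and recall the orthogonality relations $\sum_{b\in\F_q}\chi(b)=0$ and, for a linear subspace $\sC$, $\sum_{v\in\sC}\chi(u\cdot v)=|\sC|$ when $u\in\sC^\perp$ and $0$ otherwise.

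First I would introduce, for each $v\in\F_q^n$, the auxiliary sum
\[
g(v)\;=\;\sum_{u\in\F_q^n}\chi(u\cdot v)\,x^{\,n-\wt(u)}y^{\,\wt(u)},
\]
and evaluate it coordinatewise. Using $\chi(u\cdot v)=\prod_{i=1}^n\chi(u_iv_i)$ and $x^{\,n-\wt(u)}y^{\,\wt(u)}=\prod_{i=1}^n\phi(u_i)$ with $\phi(0)=x$ and $\phi(a)=y$ for $a\neq 0$, the sum over $u\in\F_q^n$ factors as $g(v)=\prod_{i=1}^n\!\left(x+y\sum_{a\in\F_q^\star}\chi(av_i)\right)$. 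The inner sum equals $q-1$ if $v_i=0$ and $-1$ if $v_i\neq 0$, whence $g(v)=(x+(q-1)y)^{\,n-\wt(v)}(x-y)^{\,\wt(v)}$.

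Next I would sum $g(v)$ over $v\in\sC$ in two ways. Interchanging the (finite) order of summation and applying the subspace orthogonality relation gives $\sum_{v\in\sC}g(v)=|\sC|\sum_{u\in\sC^\perp}x^{\,n-\wt(u)}y^{\,\wt(u)}=|\sC|\,W_{\sC^\perp}(x,y)$, while the coordinatewise evaluation gives $\sum_{v\in\sC}g(v)=\sum_{v\in\sC}(x+(q-1)y)^{\,n-\wt(v)}(x-y)^{\,\wt(v)}=W_{\sC}(x+(q-1)y,\,x-y)$. Equating the two and dividing by $|\sC|=q^k$ yields the homogeneous MacWilliams identity. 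Finally, setting $x=1$, $y=z$, using $A^\perp(z)=W_{\sC^\perp}(1,z)$, and pulling the factor $(1+(q-1)z)^n$ out of the degree-$n$ homogeneous polynomial $W_{\sC}$ gives $A^\perp(z)=q^{-k}(1+(q-1)z)^n A\!\left(\frac{1-z}{1+(q-1)z}\right)$, as claimed.

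The step I expect to require the most care is the second evaluation of $\sum_{v\in\sC}g(v)$: one must justify that $\sum_{v\in\sC}\chi(u\cdot v)$ takes only the values $0$ and $|\sC|$ and that the nonzero case is precisely $u\in\sC^\perp$. This holds because $v\mapsto\chi(u\cdot v)$ is a character of the additive group $\sC$, which is the trivial character exactly when $u\cdot v=0$ for all $v\in\sC$ (nontriviality of $\chi$ is essential here), and a nontrivial character of a finite abelian group sums to zero. Everything else — the coordinate factorization of $g(v)$, the two evaluations of the inner character sum, and the homogenization/dehomogenization at $x=1$, $y=z$ — is routine algebra.
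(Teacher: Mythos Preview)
Your argument is correct and is essentially the standard character-theoretic proof of the MacWilliams identity as presented in van Lint's textbook. Note, however, that the paper does not supply its own proof of this theorem: it is merely quoted as \cite[Theorem 3.5.3]{lint1999introduction} and used as a tool, so there is no in-paper proof to compare against beyond the cited source, whose argument yours follows faithfully.
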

With the help of  Theorem \ref{Mac}, one can find the weight enumerator and minimum Hamming distance of the dual code of each  code constructed in this paper. 
However, we prefer to use the following simple method in order to find $d^\perp$.
It is a well known fact that  two elements of each codeword are  dependent if and only if 
the minimum Hamming distance of the dual code  is $2$. 
In our framework, the dual code $\sC_{D_{0}}^\perp$ of Theorem \ref{TheoremD0} has   $d^\perp=2$ if and only if  for any two different elements $d_i,d_j\in D_{0}$  and  two  elements  $c_i,c_j\in\F_p^{\star}$, we have 
$$
c_i\Tr{^n}(x d_i)+c_j \Tr{^n}(x d_j)=0
$$
for every $x\in\F_{p^m}$,  which holds when $d_j=-d_i$ and $c_i=c_j=1$.   This result confirms   $d^\perp=2$.
With the same reason, the dual codes of Theorems \ref{TheoremD1},  \ref{Theorem01}, \ref{TheoremD2},   \ref{Theorem02},  \ref{Theorem12},  \ref{TheoremSQ}, \ref{TheoremSQ0}, \ref{TheoremNSQ},  \ref{TheoremNSQ0}  have   $d^\perp=2$. 
Hence, one can  give the SSS based on the dual codes of these minimal codes by considering Theorem \ref{Structure}. 
 As an example,  we deal  with the following one. 
\begin{corollary} 
Let   $\sC_{D_{(0,1)}}$ be  the minimal  $[2p^{m-1}-1, m, (p-1)(2p^{m-2}- (p-2)\sqrt{p}^{m+s-4})]$ code  in Theorem \ref{Theorem01} with 
$G=[\bold g_0,\bold g_1,\ldots,\bold g_{2p^{m-1}-2}]$. 
Then  in the SSS based on $\sC_{D_{(0,1)}}^\perp$ with  $d^\perp=2$, the number of participants is    $2p^{m-1}-2$  and the number of  minimal access sets is  $p^{m-1}$. Besides, $P_i$ must be  in all minimal access sets if $\bold g_i$, $i\neq 0$, is a multiple of $\bold g_0$; otherwise, in $(p-1)p^{m-2}$  minimal access sets.
\end{corollary}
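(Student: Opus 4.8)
The plan is to deduce this corollary directly from Theorem \ref{Structure}, applied to $\sC=\sC_{D_{(0,1)}}$ (so that $n=2p^{m-1}-1$ and $k=m$). To invoke that theorem I need three inputs: that $\sC_{D_{(0,1)}}$ is a minimal linear code, its explicit parameters, and the value $d^\perp$. Since the stated minimum distance is $(p-1)(2p^{m-2}-(p-2)\sqrt{p}^{m+s-4})$, we are in the regime ``$m+s$ even, $\epsilon_0=1$, $1\le s\le m-4$'', and in that regime Theorem \ref{Theorem01} together with the accompanying minimality proposition already give that $\sC_{D_{(0,1)}}$ is minimal with parameters $[2p^{m-1}-1,m,(p-1)(2p^{m-2}-(p-2)\sqrt{p}^{m+s-4})]$. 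So the only point needing a genuine argument is that $d^\perp=2$.

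First I would show $d^\perp\ge 2$: because $D_{(0,1)}\subseteq\F_{p^m}^{\star}$, every $d\in D_{(0,1)}$ is nonzero, hence $\omega\mapsto\Tr^m(\omega d)$ is a nontrivial $\F_p$-linear functional on $\F_{p^m}$ and is not identically zero; thus $G$ has no zero column and $\sC_{D_{(0,1)}}^\perp$ has no weight-$1$ codeword. Next I would show $d^\perp\le 2$ by exhibiting two proportional columns of $G$: fix $d\in D_{(0,1)}$; the second homogeneity condition defining $\textit{WRPB}$ gives $f(-d)=(-1)^t f(d)=f(d)$ since $t$ is even, so $f(-d)\in\{0,1\}$ and $-d\in D_{(0,1)}$, while $-d\ne d$ because $p$ is odd and $d\ne 0$. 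Writing $\bold g_i,\bold g_j$ for the columns of $G$ indexed by $d$ and $-d$, we have $\Tr^m(\omega d)+\Tr^m(\omega(-d))=0$ for all $\omega\in\F_{p^m}$, i.e. $\bold g_i+\bold g_j=\bold 0$, which yields a weight-$2$ codeword in the dual. This is exactly the observation already recorded in the text for $\sC_{D_0}^\perp$, reused verbatim.

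With $d^\perp=2$ in hand, the conclusion is a direct reading of Theorem \ref{Structure} with $k=m$: the number of participants is $n-1=2p^{m-1}-2$, the number of minimal access sets is $p^{k-1}=p^{m-1}$, and the $d^\perp=2$ bullet says a participant $P_i$ (with $1\le i\le 2p^{m-1}-2$) lies in every minimal access set when $\bold g_i$ is a scalar multiple of $\bold g_0$, and otherwise lies in $(p-1)p^{k-2}=(p-1)p^{m-2}$ minimal access sets.

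The hard part will be the verification $d^\perp=2$, and inside it specifically the $d^\perp\le 2$ direction: producing a pair of proportional columns relies on the parity of $t$ in the homogeneity axiom $f(ax)=a^tf(x)$ of $\textit{WRPB}$ (so that $f(-d)=f(d)$ rather than $f(-d)=-f(d)$), together with $0\notin D_{(0,1)}$ to rule out weight-$1$ dual codewords. Everything else is bookkeeping from results already established.
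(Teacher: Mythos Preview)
Your proposal is correct and follows essentially the same route as the paper: the paper establishes $d^\perp=2$ for $\sC_{D_{(0,1)}}^\perp$ by the same $d\mapsto -d$ pairing argument (stated for $\sC_{D_0}^\perp$ and then carried over verbatim to the other defining sets), and then reads off the corollary directly from Theorem~\ref{Structure} with $n=2p^{m-1}-1$ and $k=m$. Your write-up is in fact slightly more careful than the paper's, since you make explicit both the $d^\perp\ge 2$ direction and the role of the even exponent $t$ in the homogeneity axiom of $\textit{WRPB}$ in guaranteeing $f(-d)=f(d)$.
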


We finally see that  $d^\perp$ of the dual code  $\sC_{\overline{D}_{0}}^\perp$ of   Corollary \ref{CorollaryPunc}   is at least $3$. From (\ref{Punctured}), we have   $D_{0}=\F_p^{\star}\overline{D}_{0}$.
Clearly,   $d^\perp=2$ if and only if for any two different elements $\bar{d_i},\bar{d_j}\in \overline{D}_{0}$ and two elements  $a_i,a_j\in\F_p^{\star}$, we have 
$\Tr{^n}(x (a_i\bar{d_i}+a_j\bar{d_j}))=0$  for every $x\in\F_{p^m}$; 
that is, $a_i\bar{d_i}+a_j\bar{d_j}=0$, which contradicts $\frac{\bar{d_i}}{\bar{d_j}}\notin \F_p^{\star}$. This observation  says  $ d^\perp\geq 3$.

\begin{corollary}
Let   $\sC_{\overline{D}_{0}}$  be  the minimal  $ [(p^{m-1}-1)/(p-1),m]$  code in Corollary \ref{CorollaryPunc}. Then  in the SSS based on $\sC_{\overline{D}_{0}}^\perp$ with $d^\perp\geq 3$, the number of participants is  $(p^{m-1}-1)/(p-1)-1$ and  the number of  minimal access sets is  $p^{m-1}$.  For each fixed $1\leq l\leq \min\{m-1,d^\perp-2\}$, every set of $l$ participants is involved in $(p-1)^lp^{m-(l+1)}$  minimal access sets.

\end{corollary}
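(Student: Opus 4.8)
The plan is to apply Theorem~\ref{Structure} directly to the code $\sC_{\overline{D}_{0}}$, so the work reduces to (i) confirming that $\sC_{\overline{D}_{0}}$ is a minimal linear code, (ii) computing its length $n$ and dimension $k$, and (iii) verifying that $d^\perp\ge 3$ so that the second bullet of Theorem~\ref{Structure} applies. Item (iii) is already done in the paragraph preceding the corollary: since $D_{0}=\F_p^{\star}\overline{D}_{0}$ and the quotient of any two distinct elements of $\overline{D}_{0}$ lies outside $\F_p^{\star}$, no two coordinates of a codeword are $\F_p$-proportional, so $d^\perp\ge 3$. For (ii), Corollary~\ref{CorollaryPunc} already records that $\sC_{\overline{D}_{0}}$ has parameters $[(p^{m-1}-1)/(p-1),m]$, so $n=(p^{m-1}-1)/(p-1)$ and $k=m$.

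For (i), minimality, I would invoke the Ashikhmin--Barg sufficient condition~\eqref{minimality}: it suffices to check that $w_{\min}/w_{\max}>(p-1)/p$ for the three weights of $\sC_{\overline{D}_{0}}$ listed in Table~\ref{tablePunc}. Because puncturing via~\eqref{Punctured} scales every weight of $\sC_{D_{0}}$ by the factor $1/(p-1)$ (each fiber $a\overline{D}_{0}$ contributes equally to $\sN_0(\omega)$), the ratio $w_{\min}/w_{\max}$ is the same as for $\sC_{D_{0}}$, and the earlier proposition already established $(p-1)/p<w_{\min}/w_{\max}$ for $\sC_{D_{0}}$ in the relevant range of $s$. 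Hence $\sC_{\overline{D}_{0}}$ is minimal, and every nonzero codeword being minimal, the hypotheses of Theorem~\ref{Structure} are met with a generator matrix $G=[\bold g_0,\ldots,\bold g_{n-1}]$.

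Now apply Theorem~\ref{Structure}. The SSS built on $\sC_{\overline{D}_{0}}^\perp$ has $n-1=(p^{m-1}-1)/(p-1)-1$ participants and $p^{k-1}=p^{m-1}$ minimal access sets, which are the first two claimed equalities. Since $d^\perp\ge 3$, the second bullet of Theorem~\ref{Structure} yields that for each fixed $l$ with $1\le l\le\min\{k-1,d^\perp-2\}=\min\{m-1,d^\perp-2\}$, every set of $l$ participants lies in exactly $(p-1)^l p^{k-(l+1)}=(p-1)^l p^{m-(l+1)}$ minimal access sets. This is precisely the assertion of the corollary, completing the proof.

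I expect the only real content to be item (i): one must make sure the Ashikhmin--Barg ratio is genuinely preserved under puncturing and that the weight table in Table~\ref{tablePunc} has $w_{\min}>0$ (equivalently that the punctured code still has dimension $m$), which follows from the positivity already argued for $\sC_{D_{0}}$; everything else is a direct citation of Theorem~\ref{Structure} and the preceding discussion of $d^\perp$.
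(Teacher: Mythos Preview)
Your proposal is correct and follows exactly the route the paper takes: the corollary is an immediate instantiation of Theorem~\ref{Structure} once $d^\perp\ge 3$ (established in the paragraph just before) and the parameters $n=(p^{m-1}-1)/(p-1)$, $k=m$ from Corollary~\ref{CorollaryPunc} are plugged in. Your added justification that the Ashikhmin--Barg ratio is preserved under the $(p-1)$-to-$1$ puncturing is a sound way to make explicit the minimality of $\sC_{\overline{D}_{0}}$, which the paper simply asserts in the statement.
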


 \begin{example}\label{Example0PuncSSS}
Let  $\sC_{\overline{D}_{0}}$ be the three-weight  ternary minimal  $[40,5,24]$  code in  Example  \ref{Example0Punc}.
Then in the SSS based on $\sC_{\overline{D}_{0}}^\perp$ with  $d^\perp\geq 3$,
the number of participants and minimal access sets is, respectively,  $39$ and  $81$.
For $l=1$, each participant is a member of  $54$  minimal access sets. 
\end{example}

\section{Conclusion}
The main objectives of the paper are twofold: to construct minimal linear codes from  functions and to give their application in secret sharing. 
To do this, we first pushed  the use of weakly regular plateaued balanced  functions over the finite fields of odd characteristic,  introduced  recently by Mesnager et al.~\cite{mesnager2017WCC,mesnager2019linear}.
  We  then obtained  several classes of three-weight  and four-weight  minimal   linear codes   from these functions with some homogeneous conditions. This paper provides the first construction of minimal linear codes with few weights from such balanced functions based on the second generic construction. To the best of our knowledge, the constructed minimal codes are  inequivalent to the known ones in the literature.
 We finally derived secret sharing schemes with nice access structures from the dual codes of our  minimal codes.

\section*{Acknowledgment} 
The author is very grateful  to the Prof. Dr. Sihem Mesnager  for her valuable scientific comments and suggestions that improved the quality of  the paper.
This research did not receive any specific grant from funding agencies in the public, commercial, or not-for-profit sectors.

%
%

\bibliographystyle{spmpsci}      



\section*{Appendix}
The Hamming weights and weight distributions of the  codes  constructed in Section \ref{Constructions}  are presented in Tables 1-24.

\begin{table}[!htp]
\begin{center}	
\begin{tabular}{|c|c|c|}
\hline
Hamming weight $\omega$   & Multiplicity $A_\omega$ \\
\hline
\hline
\footnotesize{$0$} & \footnotesize{$1$}\\
\hline
\footnotesize{$(p-1)p^{m-2}$} & \footnotesize{$p^m-p^{m-s}-1$}\\
\hline
\footnotesize{$(p^{m} - \epsilon  (p-1)\sqrt{p^*}^{m+s})(p-1)/p^2$} & \footnotesize{$p^{m-s-1} + \epsilon \eta_0^{m+1}(-1) (p-1) \sqrt{p^*}^{m-s-2} $}\\
\hline
\footnotesize{$(p^{m} +\epsilon  \sqrt{p^*}^{m+s})(p-1)/p^2$} & \footnotesize{$(p-1)(p^{m-s-1} -\epsilon \eta_0^{m+1}(-1) \sqrt{p^*}^{m-s-2})$}\\
\hline
\end{tabular}\end{center}	
\caption{
\label{Table0}  The weight distribution of  $\sC_{D_{0}}$ in Theorem \ref{TheoremD0} when $m+s$ is even}


\begin{center}	
\begin{tabular}{|c|c|c|}
\hline
Hamming weight $w$   & Multiplicity $A_w$ \\
\hline
\hline
\footnotesize{$0$} & \footnotesize{$1$}\\
\hline
\footnotesize{$ (p-1) (p^{m-2}-  \epsilon \sqrt{p}^{m+s-3})$} & \footnotesize{$p^{m-s-1} $}\\
\hline
\footnotesize{$ (p-1) p^{m-2}+  \epsilon2 \sqrt{p}^{m+s-3}  $} & \footnotesize{$  ( p^{m-s-1} + \epsilon    \sqrt{p}^{m-s-1})(p-1)/2 $}\\
\hline
\footnotesize{ $ (p-1)p^{m-2}$}& \footnotesize{$p^m-p^{m-s}-1+ ( p^{m-s-1} - \epsilon   \sqrt{p}^{m-s-1})(p-1)/2  $}\\
\hline
\end{tabular}\end{center}	
\caption{
\label{TableD1odd}  The weight distribution of  $\mathcal {C}_{D_{1}}$  when $p \equiv 1\pmod 4$ and $m+s$ is odd}

\begin{center}
\begin{tabular}{|c|c|c|}
\hline
Hamming weight $w$   & Multiplicity $A_w$ \\
\hline
\hline
\footnotesize{$0$} & \footnotesize{$1$}\\
\hline
\footnotesize{$ (p-1) (p^{m-2} + \epsilon \sqrt{p^*}^{m+s-3})$} & \footnotesize{$p^{m-s-1}  $}\\
\hline
\footnotesize{$ (p-1)p^{m-2} $} & \footnotesize{$ p^m-p^{m-s}-1+ ( p^{m-s-1} + \epsilon (-1)^m    \sqrt{p^*}^{m-s-1})(p-1)/2 $}\\
\hline
\footnotesize{ $ (p-1)p^{m-2}-  \epsilon2 \sqrt{p^*}^{m+s-3}$}& \footnotesize{$ ( p^{m-s-1} - \epsilon  (-1)^m    \sqrt{p^*}^{m-s-1})(p-1)/2   $}\\
\hline
\end{tabular}\end{center}	
\caption{
\label{TableD1oddd}  The weight distribution of  $\mathcal {C}_{D_{1}}$   when $p \equiv 3\pmod 4$ and $m+s$ is odd}

\begin{center}	
\begin{tabular}{|c|c|c|}
\hline
Hamming weight $w$   & Multiplicity $A_w$ \\
\hline
\hline
\footnotesize{$0$} & \footnotesize{$1$}\\
\hline
\footnotesize{$ (p-1) p^{m-2}$} & \footnotesize{$p^m-p^{m-s}-1$}\\
\hline
\footnotesize{$(p^{m} +\epsilon  \sqrt{p^*}^{m+s})(p-1)/p^2 $} 
& \footnotesize{$ p^{m-s-1} +(p^{m-s-1}+\epsilon \eta_0^{m+1}(-1)\sqrt{p^*}^{m-s-2})(p-1)/2 $}\\
\hline
\footnotesize{$((p-1)p^{m} -\epsilon (p+1)\sqrt{p^*}^{m+s})/p^2 $}
 & \footnotesize{$(p^{m-s-1} -\epsilon \eta_0^{m+1}(-1)\sqrt{p^*}^{m-s-2})(p-1)/2$}\\
\hline
\end{tabular}\end{center}	
\caption{
\label{TableD1even}  The weight distribution of   $\mathcal {C}_{D_{1}}$ when $m+s$ is even}
\end{table}
\begin{table}[!htp]
\begin{center}	
\begin{tabular}{|c|c|c|}
\hline
Hamming weight $\omega$   & Multiplicity $A_\omega$ \\
\hline
\hline
\footnotesize{$0$} & \footnotesize{$1$}\\
\hline
\footnotesize{$2(p-1)p^{m-2}$} & \footnotesize{$p^m-p^{m-s}-1$}\\
\hline
\footnotesize{$(p-1)(2p^{m-2}-\epsilon\eta_0(-1) \sqrt{p^*}^{m+s-3})$} & \footnotesize{$ p^{m-s-1}$}\\
\hline
\footnotesize{$2(p-1)p^{m-2} -  \epsilon (p-2-\eta_0(-1)) \sqrt{p^*}^{m+s-3} $} 
& \footnotesize{$(p^{m-s-1} +\epsilon  \eta_0^{m}(-1) \sqrt{p^*}^{m-s-1})(p-1)/2$}\\
\hline
\footnotesize{$2(p-1)p^{m-2} +  \epsilon (p-2+\eta_0(-1)) \sqrt{p^*}^{m+s-3}  $}
 & \footnotesize{$(p^{m-s-1} - \epsilon\eta_0^m(-1)   \sqrt{p^*}^{m-s-1})(p-1)/2$}\\
\hline
\end{tabular}\end{center}	
\caption{
\label{TableD01odd}  The weight distribution of  $\mathcal {C}_{D_{(0,1)}}$  when  $m+s$ is odd}
\begin{center}	
\begin{tabular}{|c|c|c|}
\hline
Hamming weight $\omega$   & Multiplicity $A_\omega$ \\
\hline
\hline
\footnotesize{$0$} & \footnotesize{$1$}\\
\hline
\footnotesize{$2(p-1)p^{m-2}$} & \footnotesize{$p^m-p^{m-s}-1$}\\
\hline
\footnotesize{$(2p^{m}-\epsilon(p-2)\sqrt{p^*}^{m+s})(p-1)/p^2$} & \footnotesize{$ p^{m-s-1} + \epsilon \eta_0^{m+1}(-1)(p-1) \sqrt{p^*}^{m-s-2}$}\\
\hline
\footnotesize{$2((p-1)p^{m}-\epsilon\sqrt{p^*}^{m+s})/p^2 $} 
& \footnotesize{$(p^{m-s-1} -\epsilon \eta_0^{m+1}(-1)\sqrt{p^*}^{m-s-2})(p-1)/2$}\\
\hline
\footnotesize{$2(p^{m}+ \epsilon \sqrt{p^*}^{m+s})(p-1)/p^2$}
 & \footnotesize{$(p^{m-s-1} -\epsilon \eta_0^{m+1}(-1)\sqrt{p^*}^{m-s-2})(p-1)/2$}\\
\hline
\end{tabular}\end{center}	
\caption{
\label{TableD01even}  The weight distribution of   $\mathcal {C}_{D_{(0,1)}}$ when $m+s$ is even}


\begin{center}	
\begin{tabular}{|c|c|c|}
\hline
Hamming weight $w$   & Multiplicity $A_w$ \\
\hline
\hline
\footnotesize{$0$} & \footnotesize{$1$}\\
\hline
\footnotesize{$(p-1)  ( p^{m-2} +  \epsilon  \sqrt{p}^{m+s-3})$} & \footnotesize{$p^{m-s-1}  $}\\
\hline
\footnotesize{$(p-1)p^{m-2} $} & \footnotesize{$p^m-p^{m-s}-1+(p^{m-s-1} +\epsilon  \sqrt{p}^{m-s-1})(p-1)/2$}\\
\hline
\footnotesize{ $ (p-1)p^{m-2} - \epsilon 2 \sqrt{p}^{m+s-3} $}& \footnotesize{$ ( p^{m-s-1} - \epsilon    \sqrt{p}^{m-s-1})(p-1)/2$}\\
\hline
\end{tabular}\end{center}	
\caption{
\label{TableD2}  The weight distribution of  $\mathcal {C}_{D_{2}}$   when $p \equiv 1\pmod 4$  and $m+s$ is odd}

\begin{center}	
\begin{tabular}{|c|c|c|}
\hline
Hamming weight $w$   & Multiplicity $A_w$ \\
\hline
\hline
\footnotesize{$0$} & \footnotesize{$1$}\\
\hline
\footnotesize{$(p-1) ( p^{m-2} -  \epsilon  \sqrt{p^*}^{m+s-3})$} & \footnotesize{$p^{m-s-1}  $}\\
\hline
\footnotesize{$ (p-1)p^{m-2} +  \epsilon 2 \sqrt{p^*}^{m+s-3} $} & \footnotesize{$ (p^{m-s-1} +\epsilon    (-1)^m    \sqrt{p^*}^{m-s-1})(p-1)/2 $}\\
\hline
\footnotesize{ $ (p-1)p^{m-2}$}& \footnotesize{$p^m-p^{m-s}-1+( p^{m-s-1} - \epsilon (-1)^m   \sqrt{p^*}^{m-s-1})(p-1)/2$}\\
\hline
\end{tabular}\end{center}	
\caption{
\label{TableD2odd}  The weight distribution of $\mathcal {C}_{D_{2}}$   when $p \equiv 3\pmod 4$  and $m+s$ is odd}

 \end{table}

\begin{table}
\begin{center}	
\begin{tabular}{|c|c|c|}
\hline
Hamming weight $\omega$   & Multiplicity $A_\omega$ \\
\hline
\hline
\footnotesize{$0$} & \footnotesize{$1$}\\
\hline
\footnotesize{$2(p-1)p^{m-2}$} & \footnotesize{$p^m-p^{m-s}-1$}\\
\hline
\footnotesize{$(p-1)(2p^{m-2} + \epsilon\eta_0(-1) \sqrt{p^*}^{m+s-3})$} & \footnotesize{$p^{m-s-1}$}\\
\hline
\footnotesize{$2(p-1)p^{m-2} -  \epsilon (p-2 + \eta_0(-1)) \sqrt{p^*}^{m+s-3} $} 
& \footnotesize{$(p^{m-s-1} +\epsilon  \eta_0^{m}(-1) \sqrt{p^*}^{m-s-1})(p-1)/2$}\\
\hline
\footnotesize{$2(p-1)p^{m-2} +  \epsilon (p-2 - \eta_0(-1)) \sqrt{p^*}^{m+s-3}  $}
 & \footnotesize{$(p^{m-s-1} - \epsilon\eta_0^m(-1)   \sqrt{p^*}^{m-s-1})(p-1)/2$}\\
\hline
\end{tabular}\end{center}	
\caption{
\label{Table02}  The weight distribution of   $\sC_{D_{(0,2)}}$  when  $m+s$ is odd}


\begin{center}	
\begin{tabular}{|c|c|c|}
\hline
Hamming weight $\omega$   & Multiplicity $A_\omega$ \\
\hline
\hline
\footnotesize{$0$} & \footnotesize{$1$}\\
\hline
\footnotesize{$2(p-1)p^{m-2}$} & \footnotesize{$p^{m-s-1}+p^m-p^{m-s}-1$}\\
\hline
\footnotesize{$2(p-1)p^{m-2} +\epsilon2  \sqrt{p^*}^{m+s-3}$} 
& \footnotesize{$(p^{m-s-1} +\epsilon  \eta_0^{m}(-1) \sqrt{p^*}^{m-s-1})(p-1)/2$}\\
\hline
\footnotesize{$2(p-1)p^{m-2} -\epsilon 2  \sqrt{p^*}^{m+s-3} $}
 & \footnotesize{$(p^{m-s-1} - \epsilon\eta_0^m(-1)   \sqrt{p^*}^{m-s-1})(p-1)/2$}\\
\hline
\end{tabular}\end{center}	
\caption{
\label{table12odd}  The weight distribution of  $\mathcal {C}_{D_{(1,2)}}$  when $m+s$ is odd}

\begin{center}	
\begin{tabular}{|c|c|c|}
\hline
Hamming weight $\omega$   & Multiplicity $A_\omega$ \\
\hline
\hline
\footnotesize{$0$} & \footnotesize{$1$}\\
\hline
\footnotesize{$2(p-1)p^{m-2}$} & \footnotesize{$p^m-p^{m-s}-1$}\\
\hline
\footnotesize{$2(p-1)(p^{m}+\epsilon\sqrt{p^*}^{m+s})/p^2$} & \footnotesize{$ p^{m-s-1} + \epsilon \eta_0^{m+1}(-1)(p-1) \sqrt{p^*}^{m-s-2}$}\\
 
\hline
\footnotesize{$2((p-1)p^{m}- \epsilon\sqrt{p^*}^{m+s})/p^2$}
 & \footnotesize{$(p-1)(p^{m-s-1} -\epsilon \eta_0^{m+1}(-1)\sqrt{p^*}^{m-s-2})$}\\
\hline
\end{tabular}\end{center}	
\caption{
\label{table12}  The weight distribution of   $\mathcal {C}_{D_{(1,2)}}$ when $m+s$ is even}

\begin{center}	
\begin{tabular}{|c|c|c|}
\hline
Hamming weight $\omega$   & Multiplicity $A_\omega$ \\
\hline
\hline
\footnotesize{$0$} & \footnotesize{$1$}\\
\hline
\footnotesize{$ (p^{m-2}-  \epsilon \sqrt{p}^{m+s-3})(p-1)^2/2$} & \footnotesize{$p^{m-s-1} $}\\
\hline
\footnotesize{$(p-1)(p^{m-2} (p-1)/2+  \epsilon  \sqrt{p}^{m+s-3})  $} & \footnotesize{$  ( p^{m-s-1} + \epsilon    \sqrt{p}^{m-s-1})(p-1)/2 $}\\
\hline
\footnotesize{ $p^{m-2} (p-1)^2/2  $}& \footnotesize{$p^m-p^{m-s}-1+ ( p^{m-s-1} - \epsilon   \sqrt{p}^{m-s-1})(p-1)/2  $}\\
\hline
\end{tabular}\end{center}	
\caption{
\label{TableSQodd}  The weight distribution of  $\mathcal {C}_{D_{sq}}$   when $p \equiv 1\pmod 4$ and $m+s$ is odd}
 \end{table}

\begin{table}
\begin{center}
\begin{tabular}{|c|c|c|}
\hline
Hamming weight $\omega$   & Multiplicity $A_\omega$ \\
\hline
\hline
\footnotesize{$0$} & \footnotesize{$1$}\\
\hline
\footnotesize{$ (p^{m-2} + \epsilon \sqrt{p^*}^{m+s-3})(p-1)^2/2$} & \footnotesize{$p^{m-s-1}  $}\\
\hline
\footnotesize{$p^{m-2} (p-1)^2/2 $} & \footnotesize{$ p^m-p^{m-s}-1+ ( p^{m-s-1} + \epsilon (-1)^m    \sqrt{p^*}^{m-s-1})(p-1)/2 $}\\
\hline
\footnotesize{ $ (p-1)(p^{m-2} (p-1)/2 -  \epsilon \sqrt{p^*}^{m+s-3})$}& \footnotesize{$ ( p^{m-s-1} - \epsilon  (-1)^m    \sqrt{p^*}^{m-s-1})(p-1)/2   $}\\
\hline
\end{tabular}\end{center}	
\caption{
\label{TableSQoddd}  The weight distribution of  $\mathcal {C}_{D_{sq}}$   when $p \equiv 3\pmod 4$ and $m+s$ is odd}

\begin{center}	
\begin{tabular}{|c|c|c|}
\hline
Hamming weight $\omega$   & Multiplicity $A_\omega$ \\
\hline
\hline
\footnotesize{$0$} & \footnotesize{$1$}\\
\hline
\footnotesize{$p^{m} (p-1)^2/2p^2$} & \footnotesize{$p^m-p^{m-s}-1$}\\
\hline
\footnotesize{$(p^{m} +\epsilon  \sqrt{p^*}^{m+s})(p-1)^2/2p^2$} 
& \footnotesize{$ p^{m-s-1} +(p^{m-s-1}+\epsilon \eta_0^{m+1}(-1)\sqrt{p^*}^{m-s-2})(p-1)/2 $}\\
\hline
\footnotesize{$((p-1)p^{m} -\epsilon (p+1)\sqrt{p^*}^{m+s})(p-1)/2p^2$}
 & \footnotesize{$(p^{m-s-1} -\epsilon \eta_0^{m+1}(-1)\sqrt{p^*}^{m-s-2})(p-1)/2$}\\
\hline
\end{tabular}\end{center}	
\caption{
\label{TableSQeven}  The weight distribution of   $\mathcal {C}_{D_{sq}}$ when $m+s$ is even}


\begin{center}	
\begin{tabular}{|c|c|c|}
\hline
Hamming weight $w$   & Multiplicity $A_w$ \\
\hline
\hline
\footnotesize{$0$} & \footnotesize{$1$}\\
\hline
\footnotesize{$p^{m-2}(p^2-1)/2-\epsilon  \sqrt{p}^{m+s-3} (p-1)^2/2$} & \footnotesize{$p^{m-s-1}  $}\\
\hline
\footnotesize{$p^{m-2}(p^2-1)/2 $} & \footnotesize{$p^m-p^{m-s}-1+ (p^{m-s-1} +\epsilon    \sqrt{p}^{m-s-1})(p-1)/2$}\\
\hline
\footnotesize{ $p^{m-2}(p^2-1)/2+ \epsilon (p-1) \sqrt{p}^{m+s-3} $}& \footnotesize{$( p^{m-s-1} - \epsilon \sqrt{p}^{m-s-1})(p-1)/2 $}\\
\hline
\end{tabular}\end{center}	
\caption{
\label{tableSQ0}  The weight distribution of  $\mathcal {C}_{D_{(sq,0)}}$   when $p \equiv 1\pmod 4$ and $m+s$ is odd}
\begin{center}	
\begin{tabular}{|c|c|c|}
\hline
Hamming weight $w$   & Multiplicity $A_w$ \\
\hline
\hline
\footnotesize{$0$} & \footnotesize{$1$}\\
\hline
\footnotesize{$p^{m-2}(p^2-1)/2+\epsilon  \sqrt{p^*}^{m+s-3}(p-1)^2/2$} & \footnotesize{$p^{m-s-1}   $}\\
\hline
\footnotesize{$p^{m-2}(p^2-1)/2- \epsilon(p-1)\sqrt{p^*}^{m+s-3} $} & \footnotesize{$(p^{m-s-1} +\epsilon   (-1)^m   \sqrt{p^*}^{m-s-1})(p-1)/2$}\\
\hline
\footnotesize{ $p^{m-2}(p^2-1)/2$}& \footnotesize{$p^m-p^{m-s}-1+( p^{m-s-1} - \epsilon (-1)^m    \sqrt{p^*}^{m-s-1})(p-1)/2$}\\
\hline
\end{tabular}\end{center}	
\caption{
\label{tableSQ00}  The weight distribution of  $\mathcal {C}_{D_{(sq,0)}}$   when $p \equiv 3\pmod 4$ and $m+s$ is odd}
\end{table}

\begin{table}
\begin{center}	
\begin{tabular}{|c|c|c|}
\hline
Hamming weight $w$   & Multiplicity $A_w$ \\
\hline
\hline
\footnotesize{$0$} & \footnotesize{$1$}\\
\hline
\footnotesize{$p^{m}(p^2-1)/2p^2  $} & \footnotesize{$p^m-p^{m-s}-1$}\\
\hline
\footnotesize{$((p+1)p^{m} -\epsilon(p-1)\sqrt{p^*}^{m+s})(p-1)/2p^2$} 
& \footnotesize{$ p^{m-s-1} +\frac{1}{2}(p-1)(p^{m-s-1}+\epsilon \eta_0^{m+1}(-1)\sqrt{p^*}^{m-s-2}) $}\\
\hline
\footnotesize{$ (p^{m} +\epsilon  \sqrt{p^*}^{m+s})(p^2-1)/2p^2$}
 & \footnotesize{$\frac{1}{2}(p-1)(p^{m-s-1} -\epsilon \eta_0^{m+1}(-1)\sqrt{p^*}^{m-s-2})$}\\
\hline
\end{tabular}\end{center}	
\caption{
\label{tableSQ0even}  The weight distribution of   $\mathcal {C}_{D_{(sq,0)}}$ when $m+s$ is even}


\begin{center}	
\begin{tabular}{|c|c|c|}
\hline
Hamming weight $\omega$   & Multiplicity $A_\omega$ \\
\hline
\hline
\footnotesize{$0$} & \footnotesize{$1$}\\
\hline
\footnotesize{$ ( p^{m-2} +  \epsilon  \sqrt{p}^{m+s-3})(p-1)^2/2 $} & \footnotesize{$p^{m-s-1}  $}\\
\hline
\footnotesize{$p^{m-2} (p-1)^2/2$} & \footnotesize{$p^m-p^{m-s}-1+(p^{m-s-1} +\epsilon  \sqrt{p}^{m-s-1})(p-1)/2$}\\
\hline
\footnotesize{ $(p-1) (p^{m-2} (p-1)/2 - \epsilon \sqrt{p}^{m+s-3}) $}& \footnotesize{$ ( p^{m-s-1} - \epsilon    \sqrt{p}^{m-s-1})(p-1)/2$}\\
\hline
\end{tabular}\end{center}	
\caption{
\label{tableNSQ}  The weight distribution of  $\mathcal {C}_{D_{nsq}}$   when $p \equiv 1\pmod 4$  and $m+s$ is odd}

\begin{center}	
\begin{tabular}{|c|c|c|}
\hline
Hamming weight $\omega$   & Multiplicity $A_\omega$ \\
\hline
\hline
\footnotesize{$0$} & \footnotesize{$1$}\\
\hline
\footnotesize{$ ( p^{m-2} -  \epsilon  \sqrt{p^*}^{m+s-3})(p-1)^2/2 $} & \footnotesize{$p^{m-s-1}  $}\\
\hline
\footnotesize{$(p-1)(p^{m-2} (p-1)/2 +  \epsilon  \sqrt{p^*}^{m+s-3}) $} & \footnotesize{$ (p^{m-s-1} +\epsilon    (-1)^m    \sqrt{p^*}^{m-s-1})(p-1)/2 $}\\
\hline
\footnotesize{ $p^{m-2} (p-1)^2/2  $}& \footnotesize{$p^m-p^{m-s}-1+( p^{m-s-1} - \epsilon (-1)^m   \sqrt{p^*}^{m-s-1})(p-1)/2$}\\
\hline
\end{tabular}\end{center}	
\caption{
\label{tableNSQ3}  The weight distribution of  $\mathcal {C}_{D_{nsq}}$   when $p \equiv 3\pmod 4$  and $m+s$ is odd}


\begin{center}	
\begin{tabular}{|c|c|c|}
\hline
Hamming weight $w$   & Multiplicity $A_w$ \\
\hline
\hline
\footnotesize{$0$} & \footnotesize{$1$}\\
\hline
\footnotesize{$p^{m-2}(p^2-1)/2 + \epsilon  \sqrt{p}^{m+s-3} (p-1)^2/2$} & \footnotesize{$p^{m-s-1}  $}\\
\hline
\footnotesize{$p^{m-2}(p^2-1)/2 - \epsilon (p-1) \sqrt{p}^{m+s-3} $} & \footnotesize{$(p^{m-s-1} +\epsilon    \sqrt{p}^{m-s-1})(p-1)/2$}\\
\hline
\footnotesize{ $p^{m-2}(p^2-1)/2$}& \footnotesize{$p^m-p^{m-s}-1+ ( p^{m-s-1} - \epsilon \sqrt{p}^{m-s-1})(p-1)/2 $}\\
\hline
\end{tabular}\end{center}	
\caption{
\label{tableNSQ0}  The weight distribution of  $\mathcal {C}_{D_{(nsq,0)}}$   when $p \equiv 1\pmod 4$ and $m+s$ is odd}
\end{table}

\begin{table}

\begin{center}	
\begin{tabular}{|c|c|c|}
\hline
Hamming weight $w$   & Multiplicity $A_w$ \\
\hline
\hline
\footnotesize{$0$} & \footnotesize{$1$}\\
\hline
\footnotesize{$p^{m-2}(p^2-1)/2 - \epsilon \sqrt{p^*}^{m+s-3}(p-1)^2/2$} & \footnotesize{$p^{m-s-1}   $}\\
\hline
\footnotesize{$p^{m-2}(p^2-1)/2 $} & \footnotesize{$p^m-p^{m-s}-1+(p^{m-s-1} +\epsilon   (-1)^m   \sqrt{p^*}^{m-s-1})(p-1)/2$}\\
\hline
\footnotesize{ $p^{m-2}(p^2-1)/2 + \epsilon(p-1)\sqrt{p^*}^{m+s-3}$}& \footnotesize{$( p^{m-s-1} - \epsilon (-1)^m    \sqrt{p^*}^{m-s-1})(p-1)/2$}\\
\hline
\end{tabular}\end{center}	
\caption{
\label{tableNSQ03}  The weight distribution of  $\mathcal {C}_{D_{(nsq,0)}}$   when $p \equiv 3\pmod 4$ and $m+s$ is odd}


\begin{center}	
\begin{tabular}{|c|c|c|}
\hline
Hamming weight $\omega$   & Multiplicity $A_\omega$ \\
\hline
\hline
\footnotesize{$0$} & \footnotesize{$1$}\\
\hline
\footnotesize{$p^{m-2}$} & \footnotesize{$p^m-p^{m-s}-1$}\\
\hline
\footnotesize{$p^{m-2} - \epsilon  (p-1)\sqrt{p^*}^{m+s-4}$} & \footnotesize{$p^{m-s-1} + \epsilon \eta_0^{m+1}(-1) (p-1) \sqrt{p^*}^{m-s-2} $}\\
\hline
\footnotesize{$ p^{m-2} +\epsilon  \sqrt{p^*}^{m+s-4}$} & \footnotesize{$(p-1)(p^{m-s-1} -\epsilon \eta_0^{m+1}(-1) \sqrt{p^*}^{m-s-2})$}\\
\hline
\end{tabular}\end{center}	
\caption{
\label{tablePunc}  The weight distribution of  $\sC_{\overline{D}_{0}}$ when $m+s$ is even}


\begin{center}	
\begin{tabular}{|c|c|c|}
\hline
Hamming weight $\omega$   & Multiplicity $A_\omega$ \\
\hline
\hline
\footnotesize{$0$} & \footnotesize{$1$}\\
\hline
\footnotesize{$2\cdot 3^{m-2}$} & \footnotesize{$3^{m-s-1}+3^m-3^{m-s}-1$}\\
\hline
\footnotesize{$2\cdot 3^{m-2} +\epsilon  \sqrt{-3}^{m+s-3}$} 
& \footnotesize{$3^{m-s-1} +\epsilon  (-1)^{m} \sqrt{-3}^{m-s-1}$}\\
\hline
\footnotesize{$2\cdot 3^{m-2} -\epsilon   \sqrt{-3}^{m+s-3} $}
 & \footnotesize{$3^{m-s-1} - \epsilon(-1)^{m}  \sqrt{-3}^{m-s-1}$}\\
\hline
\end{tabular}\end{center}	
\caption{
\label{table12oddpunc}  The weight distribution of  $\mathcal {C}_{\overline{D}_{(1,2)}}$  when $p=3$ and  $m+s$ is odd}

\begin{center}	
\begin{tabular}{|c|c|c|}
\hline
Hamming weight $\omega$   & Multiplicity $A_\omega$ \\
\hline
\hline
\footnotesize{$0$} & \footnotesize{$1$}\\
\hline
\footnotesize{$2\cdot 3^{m-2}$} & \footnotesize{$3^m-3^{m-s}-1$}\\
\hline
\footnotesize{$2(3^{m-2}+\epsilon\sqrt{-3}^{m+s-4})$} & \footnotesize{$ 3^{m-s-1} + \epsilon (-1)^{m+1}2 \sqrt{-3}^{m-s-2}$}\\
 
\hline
\footnotesize{$2\cdot 3^{m-2}- \epsilon\sqrt{-3}^{m+s-4}$}
 & \footnotesize{$2(3^{m-s-1} -\epsilon(-1)^{m+1}\sqrt{-3}^{m-s-2})$}\\
\hline
\end{tabular}\end{center}	
\caption{
\label{table12punc}  The weight distribution of   $\mathcal {C}_{\overline{D}_{(1,2)}}$ when $p=3$ and $m+s$ is even}
\end{table}

\end{document}